\documentclass[envcountsect,envcountsame,runningheads,a4paper]{llncs}

\usepackage{amsmath}
\usepackage{amsfonts}
\usepackage{amssymb}
\usepackage{amscd}
\usepackage{hyperref}
\usepackage{algorithmic, algorithm}
\usepackage[all]{xy}
\usepackage{stackengine}

\usepackage{color}
\usepackage[dvipsnames]{xcolor}
\usepackage{graphicx}
\usepackage[capitalize]{cleveref}

\usepackage{mathrsfs} 


\newcommand{\cO}{\mathcal{O}}

\DeclareMathOperator{\eps}{\varepsilon}
\DeclareMathOperator{\Q}{\mathbf{Q}}
\DeclareMathOperator{\R}{\mathbf{R}}
\DeclareMathOperator{\C}{\mathbf{C}}
\DeclareMathOperator{\Z}{\mathbf{Z}}
\DeclareMathOperator{\F}{\mathbf{F}}

\renewcommand{\P}{\mathbf{P}}

\setcounter{secnumdepth}{5}


\DeclareMathOperator{\Tr}{Tr}

\DeclareMathOperator{\rank}{rank}

\DeclareMathOperator{\Vol}{Vol}
\DeclareMathOperator{\Aut}{Aut}

\DeclareMathOperator{\disc}{disc}

\DeclareMathOperator{\GL}{GL}

\DeclareMathOperator{\SL}{SL}

\DeclareMathOperator{\Hom}{Hom}

\DeclareMathOperator{\id}{id}
\DeclareMathOperator{\poly}{poly}

\newtheorem{prob}[theorem]{Problem} 



\newcounter{tasknumber}
\setcounter{tasknumber}{0}
\newcommand{\task}[2][]{%
  \addtocounter{tasknumber}{1}%
  \begin{center}%
  \framebox[1.1\width]{\begin{minipage}{0.9\textwidth}%
  \textbf{Task \arabic{tasknumber}} \textit{\if!#1(unassigned)!\else (#1)\fi}: {#2}%
  \end{minipage}}%
  \end{center}%
}

\newcounter{assumptionnumber}
\setcounter{assumptionnumber}{0}
\newcommand{\assumption}[2][]{%
  \addtocounter{assumptionnumber}{1}%
  \begin{center}%
  \framebox[1.1\width]{\begin{minipage}{0.9\textwidth}%
  \textbf{Assumption \arabic{assumptionnumber}} \textit{\if!#1!\else (#1)\fi}: {#2}%
  \end{minipage}}%
  \end{center}%
}

\newcommand{\authnote}[2][]{\noindent {\if!#1!  {\bf TODO} \else {\small \bf #1} \fi: #2}}

\usepackage{enumerate}

\newcommand{\mathsc}[1]{{\normalfont\textsc{#1}}}
\renewcommand{\problemname}[1]{{\mathsc{#1}}}
\newcommand{\algorithmname}[1]{{\mathsc{#1}}}

\mathchardef\mhyphen="2D

\newcommand{\EndRing}{\problemname{EndRing}} 
\newcommand{\OneEnd}{\problemname{OneEnd}} 
\newcommand{\IsogPath}{\ell\mhyphen\problemname{IsogenyPath}} 
\newcommand{\Isogeny}{\problemname{Isogeny}} 
 
\newcommand{\Rich}{\algorithmname{Rich}} 
\newcommand{\Reduce}{\algorithmname{Reduce}} 
\newcommand{\Divide}{\algorithmname{Divide}} 
\newcommand{\Saturate}{\algorithmname{Saturate}} 
\newcommand{\SaturateRam}{\algorithmname{SaturateRam}} 
\newcommand{\hatZ}{\widehat{\Z}} 
\newcommand{\hatQ}{\widehat{\Q}}
\newcommand{\order}{\mathcal{O}} 
\newcommand{\hatorder}{\widehat{\order}}
\newcommand{\hatB}{\widehat{B}}
\newcommand{\lquo}{\backslash}
\newcommand{\cC}{\mathcal{C}}
\newcommand{\cD}{\mathcal{D}}
\newcommand{\cF}{\mathcal{F}}
\newcommand{\cG}{\mathcal{G}}
\newcommand{\CGL}{\algorithmname{CGL}} 
\newcommand{\Oracle}{\mathscr{O}}
\newcommand{\modN}{(\mathrm{mod}\ N)} 
\newcommand{\Success}{\mathsf{Success}} 
\newcommand{\triv}{\mathbf{1}}
\newcommand{\nb}{\mathcal{N}}

\DeclareMathOperator{\nrd}{nrd}

\DeclareMathOperator{\trd}{trd}
\DeclareMathOperator{\End}{End}
\DeclareMathOperator{\Isom}{Isom}
\DeclareMathOperator{\level}{lev}
\DeclareMathOperator{\Mod}{Mod}
\DeclareMathOperator{\Cosets}{Cosets}
\DeclareMathOperator{\Sets}{Sets}
\DeclareMathOperator{\El}{El}
\DeclareMathOperator{\catSS}{SS}
\DeclareMathOperator{\gr}{Graph}
\DeclareMathOperator{\edg}{edg}
\DeclareMathOperator{\Cyc}{Cyc}
\DeclareMathOperator{\Deg}{Deg}
\DeclareMathOperator{\Ad}{Ad}

\DeclareFontFamily{U}{matha}{\hyphenchar\font45}
\DeclareFontShape{U}{matha}{m}{n}{
      <5> <6> <7> <8> <9> <10> gen * matha
      <10.95> matha10 <12> <14.4> <17.28> <20.74> <24.88> matha12
      }{}
\DeclareSymbolFont{matha}{U}{matha}{m}{n}
\DeclareFontFamily{U}{mathx}{\hyphenchar\font45}
\DeclareFontShape{U}{mathx}{m}{n}{
      <5> <6> <7> <8> <9> <10>
      <10.95> <12> <14.4> <17.28> <20.74> <24.88>
      mathx10
      }{}
\DeclareSymbolFont{mathx}{U}{mathx}{m}{n}
\DeclareMathSymbol{\obot}         {2}{matha}{"6B}
\DeclareMathSymbol{\bigobot}       {1}{mathx}{"CB}

\begin{document}

\title{The supersingular Endomorphism Ring and \\ One Endomorphism problems are equivalent}
\titlerunning{Endomorphism Ring and One Endomorphism are equivalent}
\author{Aurel Page\inst{1} \and Benjamin Wesolowski\inst{2}}
\institute{
Univ. Bordeaux, CNRS, INRIA, Bordeaux INP, IMB, UMR 5251, F-33400 Talence, France
\and
ENS de Lyon, CNRS, UMPA, UMR 5669, Lyon, France
}

\maketitle

\begin{abstract}
The supersingular Endomorphism Ring problem is the following: given a supersingular elliptic curve, compute all of its endomorphisms.
The presumed hardness of this problem is foundational for isogeny-based cryptography.
The One Endomorphism problem only asks to find a single non-scalar endomorphism. We prove that these two problems are equivalent, under probabilistic polynomial time reductions.

We prove a number of consequences. First, assuming the hardness of the endomorphism ring problem, the Charles--Goren--Lauter hash function is collision resistant, and the SQIsign identification protocol is sound.
Second, the endomorphism ring problem is equivalent to the problem of computing
  arbitrary isogenies between supersingular elliptic curves, a result previously
  known only for isogenies of smooth degree.
Third, there exists an unconditional probabilistic algorithm to solve the
  endomorphism ring problem in time $\tilde O(p^{1/2})$, a result that
  previously required to assume the generalized Riemann hypothesis.

To prove our main result, we introduce a flexible framework for the study of isogeny graphs with additional information. We prove a general and easy-to-use rapid mixing theorem.
\end{abstract}


\section{Introduction}\label{sec:intro}

\noindent
The endomorphism ring problem lies at the foundation of isogeny-based cryptography. On one hand, its presumed hardness is necessary for the security of all cryptosystems of this family (see for instance the reductions in~\cite{Wes22}). On the other hand, many cryptosystems of this family can be proven secure if this problem (or some variant) is hard (the earliest example being~\cite{CGL09}).
Isogeny-based cryptography takes its name from the \emph{isogeny problem}. An isogeny is a certain kind of map between two elliptic curves, and the isogeny problem consists in finding such a map, given the two curves. Formalising the meaning of ``finding an isogeny'' can lead to several versions of the isogeny problem, the most prominent being the \emph{$\ell$-isogeny path problem}. In isogeny-based cryptography, one typically restricts to supersingular elliptic curves, for which this problem is believed to be hard.

Fix a supersingular elliptic curve $E$. 
An endomorphism of $E$ is an isogeny 
from $E$ to itself (or the zero morphism). The collection of all endomorphisms of $E$ forms the endomorphism ring $\End(E)$. The \emph{supersingular endomorphism ring problem}, or $\EndRing$, consists in computing $\End(E)$, when given $E$. Assuming the generalised Riemann hypothesis, this problem is equivalent to the $\ell$-isogeny path problem (see~\cite{Wes21}, and the earlier heuristic equivalence~\cite{EHLMP18}), cementing its importance in the field.

The endomorphism ring contains scalars $\Z \subseteq \End(E)$, simple elements which are always easy to compute. 
While $\EndRing$ asks to find all endomorphisms, it has long been believed that finding even a single non-scalar endomorphism is hard. We call this the \emph{one endomorphism problem}, or $\OneEnd$.
Unfortunately, former heuristic arguments suggesting that $\OneEnd$ should be as hard as $\EndRing$ do not withstand close scrutiny, and actually fail in simple cases. Yet, the connection between these two problems bears important consequences on the hardness of $\EndRing$, on its connection with variants of the isogeny problem, and on the security of cryptosystems such as the CGL hash function~\cite{CGL09} or the SQIsign digital signature scheme~\cite{DFKLPW20}.

\subsection{Contributions}
In this article, we prove the following theorem.

\begin{theorem}\label{thm:main}
The $\EndRing$ and $\OneEnd$ problems are equivalent, under probabilistic polynomial time reductions.
\end{theorem}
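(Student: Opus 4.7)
The plan is to handle the two directions of the equivalence separately. The reduction from $\OneEnd$ to $\EndRing$ is immediate: once $\End(E)$ is known as a rank-four $\Z$-lattice in the quaternion algebra $B_{p,\infty}$, any basis element outside $\Z$ is a non-scalar endomorphism. The substance of the theorem lies in the reverse reduction, which uses an $\OneEnd$ oracle to compute $\End(E)$.

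Given a supersingular curve $E$ and an oracle $\Oracle$ for $\OneEnd$, I would build $\End(E)$ in three stages. First, produce many non-scalar endomorphisms of $E$ by performing random $\ell$-isogeny walks $\varphi : E \to E'$ in the supersingular isogeny graph, calling $\Oracle(E')$ to obtain a non-scalar $\beta \in \End(E')$, and pulling back to $\alpha = \hat\varphi \circ \beta \circ \varphi \in \End(E)$. Second, let $\Lambda$ be the $\Z$-module generated by the resulting $\alpha_i$; for sufficiently many independent samples this should be a full-rank sublattice of $\End(E)$. Third, saturate $\Lambda$ to recover the maximal order, using algorithms of the \Reduce/\Divide/\Saturate/\SaturateRam family; the only subtlety is at the prime $p$, where $B_{p,\infty}$ is ramified and requires separate treatment.

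The main obstacle is to guarantee that the endomorphisms produced by these walks generate a sublattice of $\End(E)$ of polynomially bounded index, uniformly over the oracle's possibly adversarial behaviour. A malicious or simply uncooperative oracle could, in principle, return endomorphisms whose pullbacks all land in some proper sublattice $\Lambda_0 \subsetneq \End(E)$. To rule this out, I would apply the rapid mixing theorem announced in the abstract to an augmented isogeny graph whose vertices are supersingular curves decorated with the extra information relevant to $\Lambda_0$ --- for instance the class of the pulled-back endomorphism modulo $\Lambda_0$. Rapid mixing on this augmented graph implies that the decoration equidistributes after a walk of logarithmic length, so each sample $\alpha_i$ escapes any fixed proper sublattice of bounded index with constant probability; a union bound over the finitely many maximal such sublattices bounds the index $[\End(E) : \Lambda]$ by a polynomial in $\log p$, enabling efficient saturation.

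A subsidiary concern is the size of the produced endomorphisms: $\nrd(\alpha)$ grows like $\deg(\varphi)^2 \cdot \nrd(\beta)$, so the length of the walk and the norm of $\beta$ must both remain polynomially bounded. Since $\Oracle$ runs in probabilistic polynomial time its output has polynomial bit-length, and a walk of length $O(\log p)$ suffices to mix; all manipulations therefore fit in polynomial time. Putting these pieces together yields a probabilistic polynomial time reduction from $\EndRing$ to $\OneEnd$, and the equivalence follows.
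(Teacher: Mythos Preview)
Your overall architecture matches the paper's: the $\Rich^\Oracle$ construction (random walk, call oracle, pull back), followed by saturation. But the heart of your argument --- ``each sample $\alpha_i$ escapes any fixed proper sublattice of bounded index with constant probability; a union bound over the finitely many maximal such sublattices bounds the index'' --- is precisely the heuristic the paper identifies as broken in Section~\ref{subsec:techoverview}. The equidistribution theorem controls the \emph{walk}, not the oracle's answer at the endpoint. If $\Oracle(E')$ always returns an element of $\Z + M\End(E')$, then $\hat\varphi\circ\Oracle(E')\circ\varphi \in \Z + M\End(E)$ regardless of $\varphi$, so no amount of mixing helps you leave that subring. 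Moreover the index~$M$ is bounded only by $\disc(\Oracle(E'))$, which can be as large as $2^{\poly(\log p)}$, so the ``finitely many maximal sublattices'' you want to union-bound over is exponentially large, and saturating an order of exponential index is not polynomial time. (There is also a definitional issue: decorating by ``class modulo $\Lambda_0$'' does not give a functor on $\catSS_\Sigma(p)$, since $\Lambda_0 \subset \End(E)$ has no canonical analogue at other curves.)

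The paper's fix is more delicate. What equidistribution on $\cG_{\End/N}$ actually buys is that the distribution of $\Rich^\Oracle(E)\bmod N$ is (nearly) invariant under conjugation by $(\End(E)/N\End(E))^\times$ (Theorem~\ref{thm:rich-is-conj-invariant}). The only full-rank subrings of $\End(E)$ stable under conjugation modulo every $N$ are the $\Z + M\End(E)$, so these are the \emph{only} obstructions. To defeat them, the paper preprocesses each oracle answer with $\Reduce_N$ (dividing out the largest power of $N$ so that the result is $N$-reduced) before applying $\Rich$; conjugation-invariance then forces three such samples to generate $\Z+\ell^a\End(E)$ locally at a prime $\ell\mid N$ with controlled probability (Proposition~\ref{prop:good-distribution-leads-to-success}). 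Because the relevant $N$ may be hard to factor, the algorithm works with pseudo-prime factors of the current index $[\End(E):R]$, and each iteration either shrinks the index or reveals a new factor. This iterative index-shrinking loop, not a one-shot union bound, is what makes the reduction polynomial time.
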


Formal definitions are provided in Section~\ref{sec:prelim}, and the proof is the object of Section~\ref{sec:the-main-reduction}. The reduction from $\OneEnd$ to $\EndRing$ is obvious, and the other direction is stated more precisely in Theorem~\ref{thm:main-more-precise}.
As a consequence of the main theorem, we prove the following:
\begin{itemize}
\item If $\EndRing$ is hard, then the CGL hash function is collision resistant
  (Theorem~\ref{thm:CGLcollres}), and the SQIsign identification scheme is
    sound (Theorem~\ref{thm:SQIsignsound}). Previous security proofs relied on the hardness of $\OneEnd$ (see~\cite[Theorem~1]{DFKLPW20}), or on
    flawed heuristic reductions (see~\cite[Algorithm~8]{EHLMP18}, and the flaws discussed Section~\ref{subsec:techoverview}). This is the object of
    Section~\ref{subsec:CGL} and Section~\ref{subsec:SQIsign}.
\item $\EndRing$ reduces to the isogeny problem
  (Theorem~\ref{thm:EndRing-Isogeny}). Here, the isogeny problem refers to the
    problem of finding \emph{any} isogeny between two elliptic curves. Previous
    results~\cite{EHLMP18,Wes21} only applied to isogenies of smooth degree (like the $\ell$-isogeny
    path problem), and were conditional on the generalised Riemann hypothesis. This is the
    object of Section~\ref{subsec:Isogeny-EndRing}.
\item There is an algorithm solving $\EndRing$ in expected time $\tilde
  O(p^{1/2})$ (Theorem~\ref{thm:solvingEndRing}), where $p > 0$ is the
    characteristic. Previous algorithms were conditional on the
    generalised Riemann hypothesis (via the conditional equivalence with the $\ell$-isogeny
    path problem~\cite{Wes21}; see also~\cite[Theorem~5.7]{EndRingGRH} for a more direct approach). Previous unconditional algorithms ran in time $\tilde O(p)$ and only returned a full-rank subring~\cite[Theorem~75]{Kohel96}. This is the object of
    Section~\ref{subsec:solving-EndRing}.
\end{itemize}

\bigskip

Our main technical tool is an equidistribution result for isogeny walks in the
graph of supersingular elliptic curves equipped with an endomorphism modulo~$N$.
In fact, we prove a more general equidistribution result generalising the
classical one (see~\cite{Mestre86,Pizer90} and
Proposition~\ref{prop:rand-walk-standard}), which we think is of
independent interest.
We state this result informally here, refering the reader to the body of the
paper for a formal statement.

\begin{definition}
  Equipping the set of supersingular elliptic curves
  with extra data consists in defining for each such curve~$E$ a finite set~$\cF(E)$,
  and for every isogeny~$\varphi\colon E\to E'$ a
  map~$\cF(\varphi)\colon\cF(E)\to\cF(E')$, compatible under composition of
  isogenies
  (see Definitions~\ref{def:extradata} and~\ref{def:catSSSigma}).
  We obtain the isogeny graph~$\cG_\cF$ of pairs~$(E,x)$ where~$x\in\cF(E)$ (see
  Definition~\ref{def:graphF}).

  Let~$N\ge 1$ be an integer.
  The extra data \emph{satisfies the $\modN$-congruence property} if for
  every curve~$E$, pairs of endomorphisms of~$E$ that are congruent modulo~$N$ act
  identically on~$\cF(E)$ (see Definition~\ref{defi:congrpropE}).
\end{definition}

Our equidistribution result, stated informally, reads as follows.
\begin{theorem}\label{thm:equidistrinformal}
  Let~$N\ge 1$ be an integer.
  Random walks in the isogeny graph of supersingular elliptic curves equipped
  with extra data satisfying the $\modN$-congruence property equidistribute
  optimally.
\end{theorem}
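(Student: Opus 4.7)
The plan is to reduce the equidistribution statement to the Ramanujan property of supersingular isogeny graphs with level structure. First I would invoke the Deuring correspondence to identify the set of isomorphism classes of supersingular elliptic curves with the finite double coset space
\[
  B^{\times} \backslash \hatB^{\times} / \hatorder^{\times},
\]
where~$B$ is the quaternion algebra ramified exactly at~$p$ and~$\infty$, and~$\order\subset B$ is a fixed maximal order. Under this identification the adjacency operator of the classical~$\ell$-isogeny graph becomes the Hecke operator at~$\ell$ acting on complex-valued functions on the double coset space, and (right) isogeny walks from a vertex~$E$ correspond to multiplication by local uniformizers at~$\ell$.

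Next I would use the $\modN$-congruence property to package the extra data in an adelic form. For each~$E$, the congruence condition forces the natural action of~$\End(E)$ on~$\cF(E)$ to factor through a homomorphism from the local unit group (viewed inside~$\hatorder^{\times}$ after a choice of basepoint) to~$\Aut(\cF(E))$ that is trivial on the principal congruence subgroup
\[
  K(N) = \{\alpha\in\hatorder^{\times} : \alpha\equiv 1\ \pmod{N}\}.
\]
The covariance of~$\cF$ under composition of isogenies is exactly what guarantees that these local actions glue into a global~$\hatorder^{\times}/K(N)$-set, so that~$\cG_\cF$ is realized as a quotient (or, componentwise, as a $\hatorder^{\times}/K(N)$-twisted version) of the level-$N$ covering graph
\[
  \cG_{N} = B^{\times} \backslash \hatB^{\times} / K(N).
\]
In particular, the adjacency operator of~$\cG_\cF$ is the restriction of the Hecke operator at~$\ell$ on~$\cG_{N}$ to an invariant subspace of its function space.

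The spectral analysis on~$\cG_{N}$ is then classical: via the Jacquet--Langlands correspondence one transfers the Hecke eigenvalues to those of weight-two holomorphic modular forms of level dividing~$Np$, where Deligne's proof of the Ramanujan--Petersson conjecture yields the optimal bound~$|\lambda|\le 2\sqrt{\ell}$ on every non-trivial eigenvalue. This is the strongest bound compatible with the~$(\ell+1)$-regularity of the graph, and a standard Cauchy--Schwarz argument converts such a spectral gap into optimal~$L^{1}$-equidistribution of~$t$-step random walks, with total variation distance decaying at the rate~$(2\sqrt{\ell}/(\ell+1))^{t}$. I expect the main obstacle to lie not in this spectral input, which is off-the-shelf, but in the second step: one must verify that an arbitrary functorial extra datum~$\cF$ satisfying only the pointwise $\modN$-congruence property really does descend from the adelic covering, handle each connected component of~$\cG_\cF$ separately, and correctly identify the trivial eigenspace on each component with the \emph{size-weighted} uniform distribution on pairs~$(E,x)$, so that ``optimal mixing'' targets the right stationary measure.
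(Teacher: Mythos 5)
Your high-level route coincides with the paper's: adelic Deuring correspondence, reinterpretation of the extra data as a congruence-level structure on the double coset space, identification of the adjacency operator with a Hecke operator, and then Jacquet--Langlands plus Deligne to get the bound $2\sqrt{\ell}$ off the trivial part (this spectral part is exactly Lemma~\ref{lem:graphhecke} and Proposition~\ref{prop:equidistr}, and you state it correctly). The substance of the theorem, however, lies in the step you only assert.

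The gap is your claim that functoriality (``covariance under composition'') together with the pointwise $\modN$-congruence property ``guarantees'' that the local actions glue into a $\hatorder^\times/K(N)$-set and that $\cG_\cF$ is a quotient of the level-$N$ graph. This is not automatic, and it is where all the work is. To even obtain an action of $(\order/N\order)^\times\cong\hatorder^\times/U(N)$ on $\cF(E_0)$ one needs the map $\End_\Sigma(E_0)\to(\order/N\order)^\times$ to be surjective, and to transport the structure consistently between curves one needs connecting isogenies of allowed degree congruent to $1\bmod N$, with the construction independent of choices; both facts come from strong approximation (Lemma~\ref{lem:strongapprox}) and require the hypothesis, absent from your sketch, that the allowed degrees $\Sigma$ contain a prime different from $p$ and generate $(\Z/N\Z)^\times$. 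The paper devotes its main technical result, Theorem~\ref{thm:congrprop}, to exactly this classification, and the outcome is not a single quotient of $B^\times\lquo\hatB^\times/U(N)$ but a disjoint union of graphs $B^\times\lquo\hatB^\times/U_i$ for intermediate subgroups $U(N)\subseteq U_i\subseteq\hatorder^\times$ indexed by the orbits of $(\order/N\order)^\times$ on $\cF(E_0)$. Relatedly, your description of the limiting behaviour is too coarse: the ``trivial'' subspace is not spanned by the size-weighted uniform measures on the connected components of $\cG_\cF$, but is the larger space of functions factoring through the degree map $\Deg$ (equivalently, constant on components of the subgraph of degree-$1\bmod N$ edges, Proposition~\ref{prop:companionequidistr}); because these graphs can be multipartite, the $\ell$-walk carries further forced eigenvalues $\chi(\ell)(\ell+1)$ and does not in general converge to the stationary measure of its component, only to it modulo the $\Deg$-fibration. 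This is precisely why the ``optimal'' equidistribution has to be formulated as in Theorem~\ref{thm:equidistrE}.
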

We refer to Theorem~\ref{thm:equidistrE} for a formal statement.
The optimality refers to the fact that the graphs can be disconnected or
multipartite, resulting in the adjacency matrix having several forced
eigenvalues (see Proposition~\ref{prop:companionequidistr} and
Remark~\ref{rem:equidistr}), but all the remaining eigenvalues are as small as possible.
A similar general result was recently proved by Codogni and
Lido~\cite{otherequidistr}, so we point out some similarities and differences.
In~\cite{otherequidistr}, the extra
data needs to be expressed in terms of $N$-torsion points (a \emph{level structure}), whereas 
we allow for extra data of arbitrary nature,
only requiring it to satisfy a simple property (the ${\modN}$-congruence property). We hope
that this makes our theorem flexible, and easy to use in a variety of situations.
In particular, the extra data used in our main application trivially fits within our framework; in contrast, this data is not a level structure, so does not directly fit the framework of~\cite{otherequidistr}.
Moreover, we allow $p$ to divide~$N$, contrary to the results
in~\cite{otherequidistr}.
Both proofs use Deligne's bounds, but the proof in~\cite{otherequidistr} is
purely algebro-geometric, whereas ours proceeds via the Deuring correspondence
and the Jacquet--Langlands correspondence; as a result, the two proofs could
have different interesting generalisations. 

\subsection{Technical overview}\label{subsec:techoverview} 

The ideas behind our reduction are as follows. Assume we have an
oracle~$\Oracle$ for~$\OneEnd$ and we want to compute~$\End(E)$ for a given~$E$.

The ring $\End(E)$ is a lattice of dimension $4$ and volume $p/4$. Computing $\End(E)$ consists in finding a basis: four endomorphisms that generate all the others. 
Given a collection of endomorphisms, one can compute the volume of the lattice they generate, and easily check whether they generate $\End(E)$.

\subsubsection*{A first flawed attempt.}
We thus need a way to generate several endomorphisms of $E$. Naively, one could repeatedly call $\Oracle(E)$, hoping to eventually obtain a generating set. This can fail, for instance if the oracle is deterministic and~$\Oracle(E)$ always returns the same endomorphism.


To circumvent this issue, it was
proposed in~\cite{EHLMP18} to randomise the curve.
More precisely, one constructs a richer, randomised oracle $\Rich^\Oracle$ from~$\Oracle$ as follows.
On input $E$, walk randomly on the $2$-isogeny graph, resulting in an isogeny
$\varphi \colon E \to E'$. This graph has rapid mixing properties, so $E'$ is close to uniformly distributed among supersingular curves. Now, call the oracle $\mathscr O$ on $E'$, to get an endomorphism $\beta \in \End(E')$. The composition $\alpha = \hat\varphi \circ \beta \circ \varphi$ is an endomorphism of $E$, the output of $\Rich^\Oracle$.

With this randomisation, there is hope that calling $\Rich^\Oracle$ repeatedly on~$E$ could yield several independent endomorphisms that would eventually generate $\End(E)$.
This method is essentially~\cite[Algorithm~8]{EHLMP18}.
In that article, it is heuristically assumed that endomorphisms produced by $\Rich^\Oracle$ are very nicely distributed, and they deduce that a generating set for $\End(E)$ is rapidly obtained.
This heuristic has a critical flaw: one can construct oracles that contradict it. Consider an integer $M > 1$, and suppose that for any input $E$, the oracle $\Oracle$ returns an endomorphism from the strict subring $\Z + M\End(E)$. Then, the above algorithm would fail, because the randomisation $\Rich^\Oracle$ would still be stuck within the subring $\Z + M\End(E)$.
Worse, juggling with several related integers $M$, we will see that there are oracles for which this algorithm only stabilises after an exponential time. 

\subsubsection*{Identifying and resolving obstructions.}
The core of our method rests on the idea that this issue is, in essence, the only possible obstruction. The key is \emph{invariance by conjugation}. 
If $\varphi,\varphi' : E \to E'$ are two random walks of the same length, and $\beta$ is an endomorphism of $\End(E')$, the elements $\alpha = \hat\varphi \circ \beta \circ \varphi$ and $\alpha' = \hat\varphi' \circ \beta \circ \varphi'$ are equally likely outputs of $\Rich^\Oracle$.
These two elements are conjugates of each other in $\End(E)/N\End(E)$ for any odd integer $N$, as
\[\alpha = \frac{\hat\varphi \circ \hat\varphi'}{[\deg(\varphi')]} \circ \alpha' \circ \frac{\varphi' \circ \varphi}{[\deg(\varphi')]} \bmod N.
\]
From there, one can prove that the output of $\Rich^\Oracle$ follows a distribution that is invariant by conjugation: each output is as likely as any of its conjugates, modulo odd integers $N$ (up to some bound).
Intuitively, for the outputs of $\Rich^\Oracle$ to be ``stuck'' in a subring
(such as $\Z + M\End(E)$ above), that subring must itself be stable by
conjugation (modulo odd integers $N$). There comes the next key: every subring of
$\End(E)$ (of finite index not divisible by~$p$) stable by conjugation
modulo all integers is of the form $\Z + M\End(E)$.
From a basis of $\Z + M\End(E)$, it is easy to recover a basis of $\End(E)$ essentially by dividing by $M$ (using a method due to
Robert~\cite{RobertApplications} that stems from the attacks on SIDH).

This intuition does not immediately translate into an algorithm, as an oracle could be ``bad'' without really being stuck in a subring. Imagine an oracle that outputs an element of $\Z + 2^e\End(E)$ (and not in $\Z + 2^{e+1}\End(E)$) with probability $2^{e-n}$ for each $e \in [0,\dots,n-1]$. A sequence of samples $(\alpha_i)_i$ could eventually generate $\End(E)$, but only after an amount of time exponential in $n$. This particular case could be resolved as follows: for each sample $\alpha$, identify the largest $e$ such that $\beta = (2\alpha  - \Tr(\alpha))/2^{e}$ is an endomorphism. A sequence of samples $(\beta_i)_i$ could rapidly generate $\Z + 2\End(E)$, from which one easily recovers $\End(E)$.
This resolution first identifies the prime $2$ as the source of the obstruction, then ``reduces'' each sample ``at $2$''. In general, such obstructive primes would appear as factors of $\disc(\alpha)$. Identifying these primes, and ensuring that each sample is ``reduced'' at each of them, one gets, in principle, a complete algorithm.
However, factoring $\disc(\alpha)$ could be hard. Instead, we implement an optimistic approach: we identify obstructive pseudo-primes using a polynomial time partial-factoring algorithm. The factors may still be composite, but it is fine: the algorithm will either behave as if they were prime, or reveal a new factor.

\bigskip

\subsubsection*{Equidistribution in isogeny graphs.} The technical core of our result is the proof that the distribution of $\Rich^\Oracle$ is indeed invariant by conjugation. It is a consequence of Theorem~\ref{thm:equidistrinformal}, our general equidistribution result.
The proof of Theorem~\ref{thm:equidistrinformal} proceeds as follows. We use a
categorical version of the Deuring correspondence to bring everything to the
quaternion world. We then use a technical result (Theorem~\ref{thm:congrprop})
to show that extra data satisfying the congruence property yield graphs
isomorphic to special ones constructed from quaternionic groups.
Finally, these special graphs are directly related to automorphic forms,
so we can apply the Jacquet--Langlands correspondence and Deligne's bounds on
coefficients of modular forms. The resulting bounds on the eigenvalues of the
adjacency operators give the desired fast mixing result.

\subsection*{Acknowledgements}
The authors would like to thank Damien Robert for discussions about this project.
The authors were supported by the Agence Nationale de la Recherche under grant ANR MELODIA (ANR-20-CE40-0013), ANR CIAO (ANR-19-CE48-0008), and the France 2030 program under grant agreement No. ANR-22-PETQ-0008 PQ-TLS.

\section{Preliminaries}\label{sec:prelim}

\subsection{Notation} 
We write $\Z$, $\Q$, $\R$ and $\C$ for the ring of integers, the fields of rational, real, and complex numbers.
For any prime $\ell$, we write $\Z_\ell$ and $\Q_\ell$ for the ring of $\ell$-adic integers and the field of $\ell$-adic numbers.
For any prime power $q$, we write $\F_q$ for the finite field with $q$ elements. For any field $K$, we write $\overline K$ for its algebraic closure.
For any set $S$, we write $\#S$ for its cardinality.
We write $f = O(g)$ for the classic big O notation, and equivalently $g = \Omega(f)$ for the classic $\Omega$ notation. We also write $f = \Theta(g)$ if we have both $f = O(g)$ and $f = \Omega(g)$. We use the soft O notation $\tilde O(g) = \log(g)^{O(1)} \cdot O(g)$. We also write $\poly(f_1,\dots,f_n) = (f_1 + \dots + f_n)^{O(1)}$.
The logarithm function 
$\log$ is in base $2$. For any ring $R$, we write $R^{\times}$ the multiplicative group of invertible elements, and $M_2(R)$ the ring of $2\times 2$ matrices with coefficients in $R$.

\subsection{Quaternion algebras}
A general reference for this section is~\cite{VoightBook}.
A \emph{quaternion algebra over~$\Q$} is a ring~$B$ having a $\Q$-basis~$1,i,j,k$
satisfying the multiplication rules~$i^2 = a$, $j^2 = b$ and~$k=ij=-ji$, for
some~$a,b\in\Q^\times$.
Let~$w = x+yi+zj+tk\in B$.
The \emph{reduced trace} of~$w$ is~$\trd(w) = 2x$.
The \emph{reduced norm} of~$w$ is~$\nrd(w) =
x^2-ay^2-bz^2-abt^2$. The reduced norm map is multiplicative.

A \emph{lattice} in a $\Q$-vector space~$V$ of finite dimension~$d$ is a
subgroup~$L\subset V$ of rank~$d$ over~$\Z$ and such that~$V = L\Q$.
The \emph{discriminant} of a lattice~$L$ in~$B$ is~$\disc(L) =
\det(\trd(b_ib_j))\neq 0$ where~$(b_i)$ is a~$\Z$-basis of~$L$.
When~$L'\subset L$ is a sublattice, we have~$\disc(L') = [L:L']^2\disc(L)$.

An \emph{order} in~$B$ is a subring~$\order\subset B$ that is also a lattice. A
\emph{maximal order} is an order that is not properly contained in another
order.

The algebra~$B$ is \emph{ramified at~$\infty$} if~$B\otimes\R \not\cong M_2(\R)$.
Let~$\ell$ be a prime number. The algebra~$B$ is \emph{ramified at~$\ell$}
if~$B_\ell := B\otimes\Q_\ell \not\cong M_2(\Q_\ell)$. If~$\ell$ is unramified
and~$\order$ a maximal order, then~$\order_\ell := \order\otimes \Z_\ell\cong
M_2(\Z_\ell)$. The discriminant of a maximal order in~$B$ is the square of the
product of the ramified primes of~$B$.
When~$B$ is ramified at~$\infty$, the quadratic form~$\nrd$ is positive
definite, and for every lattice~$L$ in~$B$, the volume~$\Vol(L)$
satisfies~$\disc(L) = 16\Vol(B)^2$.


\subsection{Elliptic curves}
A general reference for this section is~\cite{Silverman-Arithmetic}.
An \emph{elliptic curve} over a field~$K$ is a genus~$1$ projective curve with a
specified base point~$O$. An elliptic curve has a unique group law (defined algebraically) with neutral
element~$O$.
An algebraic morphism between elliptic curves (preserving the base point)
is automatically a group morphism, and is either a constant map or has a finite
kernel. In the latter case, we say that the morphism is an \emph{isogeny}.
The \emph{degree} of an isogeny~$\varphi$, written $\deg(\varphi)$, is its degree as a rational map.
If~$\varphi$ is separable, we have $\deg(\varphi) = \#\ker(\varphi)$.
An isogeny of degree $d$ is also called a \emph{$d$-isogeny}.
For every integer~$n\neq 0$, the multiplication-by-$n$ map~$[n]\colon E\to E$ is
an isogeny of degree~$n^2$.
Every isogeny~$\varphi\colon E\to E'$ has a \emph{dual
isogeny}~$\hat\varphi\colon E'\to E$ such
that~$\varphi\hat\varphi=[\deg\varphi]$ and~$\hat\varphi\varphi=[\deg\varphi]$.
An \emph{endomorphism} of an elliptic curve~$E$ is a morphism from~$E$ to~$E$.
We denote~$\End(E)$ the ring of endomorphisms of~$E$ defined over~$\overline K$.
The degree map is a positive definite quadratic form on~$\End(E)$.
For~$\alpha\in\End(E)$, the endomorphism~$\alpha+\hat\alpha$ equals the
multiplication map by an integer, the \emph{trace}~$\Tr(\alpha)$ of~$\alpha$, and we have~$\Tr(\alpha)^2\le
4\deg(\alpha)$; we also define the \emph{discriminant}~$\disc(\alpha) =
\Tr(\alpha)^2-4\deg(\alpha)$, which satisfies~$|\disc(\alpha)|\le 4\deg(\alpha)$
and~$\disc(\alpha+[n])=\disc(\alpha)$ for all~$n\in\Z$.
If the characteristic of~$K$ is not~$2$ or~$3$, we have $\Aut(E) = \{\pm 1\}$
for all~$E$, except two isomorphism classes
over~$\overline K$ having respectively~$\#\Aut(E) = 6$ and~$\#\Aut(E) = 4$.

Assume that~$K$ has positive characteristic~$p$ and let~$E$ be an elliptic curve
over~$K$. We say that~$E$ is \emph{supersingular} if~$\End(E)$ is an order in a
quaternion algebra. In this case, $B = \End(E)\otimes \Q$ is a quaternion
algebra over~$\Q$ with ramification set~$\{p,\infty\}$, the ring~$\End(E)$ is a
maximal order in~$B$, and~$E$ is defined over~$\F_{p^2}$.
When we see a nonzero
endomorphism~$\alpha\in\End(E)$ as a quaternion~$a\in B$, we have~$\deg(\alpha)
= \nrd(a)$ and~$\Tr(\alpha) = \trd(a)$.

\subsection{Computing with isogenies}

Let us formalise how one can computationally encode isogenies. All we need is a notion of \emph{efficient representation}: some data efficiently represents an isogeny if it allows to evaluate it efficiently on arbitrary inputs.

\begin{definition}[Efficient representation]\label{def:efficient-representation}
Let $\mathscr A$ be an algorithm, and let~$\varphi : E \to E'$ be an isogeny over a finite field $\F_q$.
An \emph{efficient representation} of $\varphi$ (with respect to $\mathscr A$)
is some data $D_\varphi \in \{0,1\}^*$ such that
\begin{itemize}
\item $D_\varphi \in \{0,1\}^*$ has size polynomial in $\log(\deg(\varphi))$ and $\log q$, and 
\item on input $D_\varphi$ and $P \in E(\F_{q^k})$, the algorithm $\mathscr A$ returns $\varphi(P)$, and runs in polynomial time in $\log(\deg(\varphi))$, $\log q$, and $k$.
\end{itemize}
\end{definition}

\begin{remark}
When we say that an isogeny is in efficient representation, the algorithm $\mathscr A$ is often left implicit. There are only a handful of known algorithms to evaluate isogenies, so one can think of $\mathscr A$ as an algorithm that implements each of these, and $D_\varphi$ would start with an indicator of which algorithm to use.
\end{remark}

We will use the following proposition.

\begin{proposition}\label{prop:division}
There is an algorithm $\Divide$ which takes as input  
\begin{itemize}
\item a supersingular elliptic curve $E/\F_{p^2}$, 
\item an endomorphism $\alpha$ of $E$ in efficient representation, and 
\item an integer $N$,
\end{itemize}
and returns an efficient representation of $\alpha/N$ if $\alpha \in N \End(E)$, and $\bot$ otherwise, and runs in time polynomial in the length of the input.
\end{proposition}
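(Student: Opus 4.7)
The plan is to reduce the problem to Robert's method for computing with isogenies of arbitrary degree via higher-dimensional embeddings, the technique originating in the recent attacks on SIDH.

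The naive idea is that $\alpha \in N\End(E)$ if and only if $\alpha$ kills $E[N]$, in which case $\beta = \alpha/N$ can be evaluated at a point $P$ by finding any preimage $P'$ of $P$ under $[N]$ and returning $\alpha(P')$; this is independent of the choice of $P'$ since $\alpha$ vanishes on $E[N]$. The obstacle with the naive approach is that such a $P'$ is only defined over a field extension of degree polynomial in $N$, whereas we require running time polynomial in $\log N$. The same obstacle forbids checking divisibility by brute evaluation on a basis of $E[N]$.

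Robert's trick bypasses this obstruction by embedding $\alpha$, together with auxiliary data furnished by a Kani-type construction, as a component of a smooth-degree isogeny $\Phi$ in dimension at most $8$. The kernel of $\Phi$ is generated by torsion points of small prime-power order and so admits a compact description over a small field extension; this allows one to manipulate $\alpha$, and in particular to evaluate it on arbitrary torsion, in polynomial time in $\log N$. Given such a representation, the algorithm evaluates the action of $\alpha$ on a basis of $E[N]$; if it is non-zero, it returns $\bot$. Otherwise, an efficient representation of $\beta$ is obtained by factoring the $[N]$-component out of the higher-dimensional kernel, a smooth-degree manipulation executed in polynomial time.

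The main obstacle is achieving polynomial running time in $\log N$ while avoiding any factorisation of $N$, which can be a large composite. Both issues are handled uniformly by Robert's machinery, which treats isogenies of arbitrary integer degree by trading degree for dimension, so that the final complexity depends only on the bit-length of the input. Correctness of the resulting representation of $\beta$ is verified by checking that composing with $[N]$ recovers $\alpha$ on a few random test points, or more robustly by a direct consistency check inside the higher-dimensional framework.
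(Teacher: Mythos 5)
Your overall route coincides with the paper's: the proof there is essentially a citation of Robert's division algorithm derived from the SIDH attacks (with the general statement and detailed proof delegated to the reference given there), so invoking that machinery is exactly the intended argument. However, the mechanism you describe for the divisibility test contains a step that would fail as written. You claim that Robert's higher-dimensional embedding allows one to ``evaluate $\alpha$ on arbitrary torsion in polynomial time in $\log N$'' and that the algorithm then ``evaluates the action of $\alpha$ on a basis of $E[N]$''. This is precisely the obstruction you yourself identified at the outset, and it is not removed by trading degree for dimension: when $N$ has a large prime factor $\ell$, the points of order $\ell$ are only defined over an extension of degree polynomial in $\ell$, hence exponential in $\log N$, and an efficient representation (in any dimension) only guarantees evaluation in time polynomial in the degree of the extension over which the input point is defined. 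Robert's machinery gives no access to $E[N]$ itself for general $N$.

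The actual algorithm avoids $E[N]$ entirely. One chooses smooth auxiliary torsion $E[m]$, with $m$ a product of small prime powers coprime to $N$ and $m^2 > 4\deg(\alpha)/N^2$, computes the candidate images of $\beta = \alpha/N$ on a basis $(P,Q)$ of $E[m]$ as $[N^{-1} \bmod m]\,\alpha(P)$ and $[N^{-1} \bmod m]\,\alpha(Q)$, and then runs the Kani-type interpolation step: this either outputs a higher-dimensional efficient representation of an isogeny of degree $\deg(\alpha)/N^2$ with the prescribed images, or certifies that no such isogeny exists, the latter case being exactly $\alpha \notin N\End(E)$, in which case one returns $\bot$. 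This interpolation step is also what certifies correctness, replacing your proposed verification ``on a few random test points'', which by itself does not prove divisibility. You are right that no factorisation of $N$ is needed, but the divisibility check must be routed through the coprime smooth torsion as above, not through $E[N]$.
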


\begin{proof}
This is the division algorithm introduced by
Robert~\cite{RobertApplications} that was derived from the attacks on
SIDH~\cite{CastryckDecruAttack,MMPPW23,RobertAttack}.
Note that in~\cite{RobertApplications}, the algorithm is only presented for particular endomorphisms (translates of the Frobenius), but it works, mostly unchanged, in all generality. The general statement and detailed proof can be found in~\cite{HW23}.
\qed\end{proof}

\subsection{Computational problems}

The endomorphism ring problem is the following.
\begin{prob}[$\EndRing$]
Given a prime $p$ and a supersingular elliptic curve~$E$ over $\F_{p^2}$, find four endomorphisms in efficient representation that form a basis of $\End(E)$ as a lattice.
\end{prob}




As the endomorphism ring problem asks to find, in a sense, all the endomorphisms, it is  natural to study the problem of finding even a single one.
Scalar multiplications~$[m]$ for~$m\in\Z$ are trivial to find, so we exclude them.

\begin{prob}[$\OneEnd$]
Given a prime $p$ and a supersingular elliptic curve $E$ over $\F_{p^2}$, find an endomorphism in $\End(E)\setminus \Z$ in efficient representation.
\end{prob}

There exists arbitrarily large endomorphisms, so it is convenient to introduce a bounded version of this problem.
Given a function $\lambda \colon \Z_{>0} \to \Z_{>0}$, the $\OneEnd_\lambda$ problem denotes the $\OneEnd$ problem where the solution $\alpha$ is required to satisfy $\log (\deg \alpha) \leq \lambda(\log p)$ (in other words, the length of the output is bounded by a function of the length of the input).

The $\ell$-isogeny path problem is a standard problem in isogeny-based cryptography. Fix a prime $\ell$. An $\ell$-isogeny path is a sequence of isogenies of degree $\ell$ such that the target of each isogeny is the source of the next. 
\begin{prob}[$\IsogPath$]
Given a prime $p$ and two supersingular elliptic curves $E$ and $E'$ over $\F_{p^2}$, find an $\ell$-isogeny path from $E$ to $E'$.
\end{prob}

\subsection{Probabilities} 
Given a random variable $X$ with values in a discrete set $\mathscr X$, we say
it has distribution $f$ if $f(x) = \Pr[X = x]$ for every $x \in \mathscr X$. We also write $f(A) = \sum_{x\in A}f(x)$ for any $A \subseteq \mathscr X$.
For two distributions $f_1$ and $f_2$ over the same set~$\mathscr X$, their \emph{statistical distance} (or \emph{total variation distance}) is
\[
\frac{1}{2}\|f_1-f_2\|_1 = \frac{1}{2}\sum_{x \in \mathscr X}|f_1(x) - f_2(x)| = \sup_{A\subseteq \mathscr X}|f_1(A)-f_2(A)|.
\]

Random walks play a key role in isogeny-based cryptography. 
Fix a field $\F_{p^2}$ and a prime number $\ell \neq p$. The supersingular $\ell$-isogeny graph has vertices the (finitely many) isomorphism classes of supersingular elliptic curves over $\F_{p^2}$, and edges are the $\ell$-isogenies between them (up to isomorphism of the target).
At the heart of the Charles--Goren--Lauter hash function~\cite{CGL09}, one of the first isogeny-based constructions, lies the fact that random walks in supersingular $\ell$-isogeny graphs have rapid-mixing properties: they are Ramanujan graphs.
This is the following well-known proposition. It is a particular case of our more general Theorem~\ref{thm:equidistrE}.


\begin{proposition}\label{prop:rand-walk-standard}
  Let $E$ be a supersingular elliptic curve over $\F_{p^2}$, and $\ell \neq p$ a prime
  number. Let $\varepsilon > 0$.
  There is a bound $n = O(\log_\ell(p) - \log_\ell(\varepsilon))$ such that the endpoint of
  a uniform random walk  of length at least~$n$ from $E$ in the $\ell$-isogeny
  graph is at statistical distance at most $\varepsilon$ from the stationary
  distribution~$f$, which satisfies $f(E) = \frac{24}{(p-1)\#\Aut(E)}$. 
\end{proposition}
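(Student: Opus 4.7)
The plan is to derive this proposition as the trivial specialisation of the general equidistribution Theorem~\ref{thm:equidistrE}, taking the extra data $\cF(E) = \{*\}$ for every supersingular curve $E$, with $\cF(\varphi)$ the unique map $\{*\}\to\{*\}$ for every isogeny. The $\modN$-congruence property then holds vacuously (take $N=1$), and the associated graph $\cG_\cF$ is identified with the supersingular $\ell$-isogeny graph described in the statement.

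Specialising Theorem~\ref{thm:equidistrE} yields a Ramanujan-type bound for the adjacency operator $A_\ell$, viewed as a self-adjoint operator on functions on the vertex set with the weighted inner product $\langle f,g\rangle = \sum_E f(E)\overline{g(E)}/\#\Aut(E)$. One eigenvalue equals $\ell+1$ and corresponds to constant functions (the stationary direction), while every other eigenvalue lies in $[-2\sqrt{\ell},\,2\sqrt{\ell}]$. The stationary distribution is proportional to $1/\#\Aut(E)$, with normalisation fixed by the Eichler mass formula $\sum_E 1/\#\Aut(E) = (p-1)/24$, giving $f(E) = 24/((p-1)\#\Aut(E))$ as claimed.

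To convert the spectral gap into a mixing time, I would run the standard argument. Let $f_n$ denote the law of the endpoint of a uniform length-$n$ walk started at $E$, and expand $f_n/f - 1$ in an orthonormal eigenbasis of the stochastic operator $A_\ell/(\ell+1)$. Each step contracts the non-trivial components by a factor at most $\rho := 2\sqrt{\ell}/(\ell+1) < 1$, so $\|f_n/f - 1\|_{L^2(f)} \le \rho^n \cdot \|\triv_E/f - 1\|_{L^2(f)}$. Since the number of supersingular curves is $\Theta(p)$ and $f(E) = \Theta(1/p)$, the initial $L^2$-norm is $O(\sqrt{p})$. Cauchy--Schwarz then gives $\tfrac12\|f_n - f\|_1 \le \rho^n \cdot O(\sqrt{p})$, which falls below $\varepsilon$ once $n$ exceeds $O(\log(\sqrt{p}/\varepsilon)/\log(1/\rho))$. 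Since $\log(1/\rho) = \Theta(\log\ell)$, this bound simplifies to $n = O(\log_\ell(p) - \log_\ell(\varepsilon))$, as required.

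The main obstacle is packaged inside Theorem~\ref{thm:equidistrE}, whose proof deploys the categorical Deuring correspondence, the Jacquet--Langlands correspondence, and Deligne's bounds on Hecke eigenvalues of weight-$2$ modular forms on $\Gamma_0(p)$. Here we only need its most trivial specialisation, so apart from invoking it, the remaining work is a routine spectral-to-mixing translation together with the identification of the stationary measure via the Eichler mass formula.
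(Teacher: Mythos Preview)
Your proof is correct, but it takes a different route from the paper's own. The paper does not prove this proposition from its own machinery; it simply cites Pizer's result that the supersingular $\ell$-isogeny graph is Ramanujan, and points to~\cite{SECUER} for the explicit mixing bound and stationary distribution. Your approach instead specialises Theorem~\ref{thm:equidistrE} to the trivial functor~$\cF\equiv\{*\}$ with~$N=1$, then carries out the standard spectral-gap-to-mixing conversion and fixes the normalisation via the Eichler mass formula. This is entirely valid --- indeed, the paper explicitly remarks, just after stating Theorem~\ref{thm:equidistrE}, that Proposition~\ref{prop:rand-walk-standard} is a particular case of it --- and there is no circularity, since the proof of Theorem~\ref{thm:equidistrE} nowhere invokes Proposition~\ref{prop:rand-walk-standard}. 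The trade-off is that your argument makes a forward reference from Section~\ref{sec:prelim} to Section~\ref{sec:equidistr} and brings in the Jacquet--Langlands machinery for a statement that classically follows from Pizer's more direct treatment, whereas the paper's citation keeps the preliminaries lightweight and historically grounded. One small remark: you call~$A_\ell$ self-adjoint, while Theorem~\ref{thm:equidistrE} only asserts normality; in the trivial case~$N=1$ self-adjointness does in fact hold (duality of $\ell$-isogenies gives the required symmetry with respect to the weighted inner product), so this is fine, but it is worth noting that the general theorem does not supply it for free.
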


\begin{proof}
This is a standard consequence of Pizer's proof that the supersingular $\ell$-isogeny graph is Ramanujan~\cite{Pizer90}. 
Details can be found, for instance, in~\cite[Theorem~11]{SECUER} for the length of the walk, and in~\cite[Theorem~7, Item~2]{SECUER} for the description of the stationary distribution.
\qed\end{proof}

 The stationary distribution is at statistical distance $O(1/p)$ of the uniform distribution. Note that rejection sampling allows to efficiently transform a sampler for the stationary distribution into a sampler for the uniform distribution.

\subsection{Categories}
A general reference for this section is~\cite{MacLane}.
A \emph{category}~$\cC$ consists of objects, for every objects~$x,y\in\cC$,
a set of morphisms~$\Hom_\cC(x,y)$, sometimes denoted~$f\colon x\to y$, an
associative composition law for morphisms with compatible source and target, and
an identity morphism~$\id_x\in\Hom_\cC(x,x)$ for every object~$x\in\cC$.
An \emph{isomorphism} is a morphism that admits a two-sided inverse. For~$x,y\in\cC$, we
define the set~$\End_\cC(x) = \Hom_\cC(x,x)$ of \emph{endomorphisms} of~$x$, the
set~$\Isom_\cC(x,y)$ of isomorphisms from~$x$ to~$y$, the group~$\Aut_\cC(x) = \Isom_\cC(x,x)$ of
\emph{automorphisms} of~$x$.
Let~$\cC,\cD$ be categories. A \emph{functor}~$\cF\colon \cC \to \cD$ is an
association of an object~$\cF(x)\in\cD$ for every object~$x\in\cC$, and
of a morphism~$\cF(f)\colon \cF(x)\to \cF(y)$ for every morphism~$f\colon x \to
y$ in~$\cC$, that respects composition\footnote{All our functors are covariant.}
and identities. Functors can be composed in the obvious way.

Let~$\Sets$ be the category of sets.
The following is a standard construction.
\begin{definition}\label{def:extradata}
  Let~$\cC$ be a category and~$\cF\colon \cC \to \Sets$ be a
  functor.
  The \emph{category of elements~$\El(\cF)$} is the category with
  \begin{itemize}
    \item objects: pairs~$(c,x)$ where~$c\in\cC$ and~$x\in\cF(c)$;
    \item morphisms~$(c,x) \to (c',x')$:
      morphisms~$f\in\Hom_{\cC}(c,c')$ s.t.~$\cF(f)(x) = x'$.
  \end{itemize}
  This category is equipped with the natural forgetful functor~$\El(\cF) \to \cC$.
\end{definition}

\begin{remark}
  One could also use the contravariant version of this
  definition. All our results would hold in this setting, as one can
  compose~$\cF$ with the isogeny duality to reverse the direction of all morphisms.
\end{remark}

Let~$\cF,\cF'\colon \cC \to \cD$ be functors. A \emph{morphism of
functors}~$\psi\colon \cF \to \cF'$ is a collection~$\psi = (\psi_x)_{x\in\cC}$
of morphisms~$\psi_x \colon \cF(x) \to \cF'(x)$ in~$\cD$ such that for every morphism~$f\colon
x\to y$ in~$\cC$, we have~$\cF'(f)\circ \psi_x = \psi_y \circ \cF(f)$.
A morphism of functors is an isomorphism if and only if every~$\psi_x$ is an
isomorphism in~$\cD$.
A functor~$\cF\colon\cC\to\cD$ is an \emph{equivalence of categories} if there exists a
functor~$\cG\colon\cD\to\cC$ and isomorphisms of functors~$\cG\circ\cF\cong
\id_{\cF}$ and~$\cF\circ\cG\cong \id_{\cD}$.
The functor~$\cF$ is \emph{full} if all the corresponding maps~$\Hom_\cC(x,y) \to
\Hom_{\cD}(\cF(x),\cF(y))$ are surjective, and \emph{faithful} if they are all
injective. The functor~$\cF$ is \emph{essentially surjective} if every object
in~$\cD$ is isomorphic to the image of some object in~$\cC$ under~$\cF$.
A functor is an equivalence if and only if it is full, faithful and
essentially surjective.
If~$\cF,\cF'\colon \cC \to \Sets$ are functors, every isomorphism of
functors~$\cF\cong\cF'$ induces an equivalence of categories~$\El(\cF) \cong
\El(\cF')$.

The categorical formulation of the Deuring correspondence~\cite{Deuring41} provides the most
versatile way of transfering problems from supersingular elliptic curves to
quaternions.
Let~$\catSS(p)$ denote the category with

\begin{itemize}
  \item objects: supersingular elliptic curves over~$\overline{\F}_{p}$;
  \item morphisms: algebraic group morphisms.
\end{itemize}

We fix a base curve~$E_0\in \catSS(p)$. Let~$\order = \End(E_0)$ and~$B =
\order \otimes \Q$.

Let~$\Mod(\order)$ denote the category with
\begin{itemize}
  \item objects: invertible right $\order$-modules;
  \item morphisms: right $\order$-module homomorphisms.
\end{itemize}


Then we have the classical Deuring correspondence
(\cite[Theorem~42.3.2]{VoightBook}, 
except we are using the covariant version;
see also~\cite[Theorem~45]{Kohel96}).

\begin{theorem}\label{thm:deuring}
  The association~$E \mapsto \Hom(E_0,E)$, $(\varphi \colon E \to E') \mapsto
  (\psi \mapsto \varphi\psi)$ defines a equivalence of categories
  \[
    \catSS(p) \longrightarrow \Mod(\order).
  \]
\end{theorem}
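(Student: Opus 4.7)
The plan is to verify the three defining properties of an equivalence of categories: functoriality, full-faithfulness, and essential surjectivity.

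Functoriality is routine. For each supersingular $E$, the group $\Hom(E_0, E)$ carries a natural right action of $\order = \End(E_0)$ by precomposition. It is a $\Z$-lattice of rank $4$, and $\Hom(E_0,E)\otimes\Q$ is a free right $B$-module of rank one, so $\Hom(E_0, E)$ is an invertible right $\order$-module, hence an object of $\Mod(\order)$. For an isogeny $\varphi \colon E \to E'$, the map $\psi \mapsto \varphi\psi$ is manifestly right $\order$-linear, associativity of composition gives $(\varphi'\varphi)_* = \varphi'_* \circ \varphi_*$, and identities are preserved.

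For full-faithfulness, let $f \colon \Hom(E_0, E) \to \Hom(E_0, E')$ be right $\order$-linear. Tensoring with $\Q$, both source and target become free right $B$-modules of rank one, so $f \otimes \Q$ is left-multiplication by a unique element $b \in B$. A standard identification of lattices in $B$ equates the colon module
\[
\{\, b \in B : b \cdot \Hom(E_0,E) \subseteq \Hom(E_0,E')\,\}
\]
with $\Hom(E, E')$, where the latter is viewed inside $B$ after fixing a nonzero isogeny $E_0 \to E$ to transport lattices. Hence $b$ is induced by a unique $\varphi \in \Hom(E, E')$, giving $f = \varphi_*$.

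The main step is essential surjectivity. Given an invertible right $\order$-module $I$, fix an embedding $I \hookrightarrow B$ making $I$ a right $\order$-ideal, and form the finite flat subgroup scheme $E_0[I] = \bigcap_{\alpha \in I} \ker(\alpha) \subseteq E_0$. Set $E = E_0 / E_0[I]$, which is again supersingular (its endomorphism ring is the left order of $I$, itself maximal in $B$). One then verifies that $\Hom(E_0, E)$ is canonically isomorphic to $I$ as a right $\order$-module. The main obstacle here is the $p$-adic analysis: since $p$ ramifies in $B$, the subgroup $E_0[I]$ need not be étale, and one must take quotients by possibly non-reduced finite flat subgroup schemes and match the $p$-local structure of $I$ inside the maximal order $\order_p$ of the division algebra $B_p$. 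This delicate analysis is carried out in \cite[Theorem~42.3.2]{VoightBook} for the contravariant formulation; the covariant statement asserted here follows by pre- and post-composing with isogeny duality $\varphi \mapsto \hat\varphi$, which exchanges $\Hom(E,E_0)$ and $\Hom(E_0,E)$ and converts left $\order$-modules into right $\order$-modules.
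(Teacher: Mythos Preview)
Your sketch is correct and follows the standard approach; note, however, that the paper does not supply its own proof of this theorem but simply cites it as the classical Deuring correspondence, pointing to \cite[Theorem~42.3.2]{VoightBook} (contravariant form) and \cite[Theorem~45]{Kohel96}. Your outline is essentially what one finds in those references, and you end up invoking the same citation for the delicate $p$-local step, so there is no substantive divergence to discuss.
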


\subsection{Quaternionic automorphic forms}

The following preliminaries concern the proof of our result on the equidistribution of isogeny random walks. The reader willing to admit Theorem~\ref{thm:equidistrE} without proof does not need background on automorphic forms.
A general reference for this section is~\cite{JL}; a more gentle one, for
Borel-type level, is~\cite[Section~3]{DembeleVoight}.

A \emph{Hilbert space}~$V$ is a complex vector space equipped with a Hermitian
inner product~$\langle\cdot,\cdot\rangle$, and complete for the induced
norm~$\|\cdot\|$, which is automatic if~$V$ has finite dimension.
Let~$V$ be a finite-dimensional Hilbert space.
The \emph{adjoint} of a linear operator~$T\colon V \to V$ is the unique
operator~$T^*$ satisfying~$\langle Tv,w\rangle = \langle v,T^*w\rangle$ for
all~$v,w\in V$. A \emph{normal operator} is an operator that commutes with its
adjoint. The \emph{operator norm} of an operator~$T$ is
\[
  \max_{v\neq 0}\frac{\|Tv\|}{\|v\|}.
\]
Every normal operator stabilises the orthogonal complement of every stable
subspace, is diagonalisable in an orthogonal basis, and has operator norm equal
to the maximum absolute value of its eigenvalues.

Let~$\hatZ = \prod_{\ell} \Z_{\ell}$ be the profinite completion of~$\Z$. For every abelian
group~$A$, we write~$\widehat{A} = A \otimes_{\Z} \hatZ$.
In particular~$\widehat{\Q} = \prod_{\ell}' \Q_{\ell}$ is the ring of finite ad\`eles
of~$\Q$.

Let~$N\ge 1$ be an integer, and let~$U(N) = (1+N\hatZ)\cap \hatZ^\times$.
Let~$H\subset (\Z/N\Z)^\times$ be a subgroup, and
let~$U\subset\hatZ^\times$ be the preimage of~$H$ under the quotient
map~$\hatZ^\times \to \hatZ^\times/U(N) = (\Z/N\Z)^\times$.
Then we have an isomorphism
\[
  \Q_{>0}^\times \lquo \hatQ^\times / U \cong \hatZ^\times/U \cong (\Z/N\Z)^\times / H.
\]
Indeed, since every ideal of~$\Z$ has a positive generator we have~$\hatQ^\times
= \Q_{>0}^\times \hatZ^\times$.

Let~$B$ be a quaternion algebra over~$\Q$.
The group~$\hatB^\times$ admits a measure~$\mu$ that is bi-invariant under group
translations, finite on compact subsets and nontrivial on open subsets, called
its~\emph{Haar measure}.

Let~$N\ge 1$ be an integer, let~$U(N) = (1+N\hatorder) \cap
\hatorder^\times$, which is a compact open subgroup of~$\hatB^\times$,
and let~$U$ be a subgroup satisfying~$U(N)\subseteq U
\subseteq \hatorder^\times$.
The set~$B^\times \lquo \hatB^\times / U$ is finite. The \emph{space of
automorphic forms of level~$U$} is the Hilbert space~$L^2(B^\times\lquo
\hatB^\times / U)$, equipped with the inner product induced by the projection of
the Haar measure:
\[
  \langle F, G\rangle = \int_{b\in B^\times \lquo \hatB^\times / U}
  F(b)\overline{G(b)}\mu(bU).
\]

Fix, for every~$\ell$ that is unramified in~$B$, an
isomorphism~$\order_\ell\cong M_2(\Z_\ell)$, and
let~$\delta_\ell\in \hatB^\times$ have component~$1$ at
every~$\ell'\neq \ell$ and that corresponds to~$\begin{pmatrix}\ell & 0 \\ 0 &
1\end{pmatrix}$ at~$\ell$ via the chosen isomorphism.
The Hecke operator
\[
  T_\ell \colon L^2(B^\times\lquo \hatB^\times / U)
  \longrightarrow L^2(B^\times\lquo \hatB^\times / U)
\]
is defined by
\[
  T_\ell F(B^\times xU) = \sum_{uU\in U\delta_\ell U/U} F(B^\times xuU),
\]
and its adjoint admits the expression
\[
  T_\ell^* F(B^\times xU) = \sum_{uU\in U\delta_\ell^{-1} U/U} F(B^\times xuU).
\]
For~$\ell$ that do not divide~$N$, the Hecke operators~$T_\ell$ are normal
operators that pairwise commute.

\section{Equidistribution of elliptic curves with extra data}\label{sec:equidistr}

The goal of this section is to prove Theorem~\ref{thm:equidistrE}. We state our
results in Subsection~\ref{subsec:statements}. In
Subsection~\ref{subsec:deuring}, we set up a suitable version of the Deuring
correspondence. The goal of Subsection~\ref{subsec:congrtype} is to prove a
technical result classifying extra data satisfying a simple property.
In Subsection~\ref{subsec:equidistr} we
apply automorphic methods to prove the equidistribution theorem.

\subsection{Statement of the equidistribution theorem} \label{subsec:statements}

In order to avoid bad primes, we will need to restrict the possible degrees of
isogenies under consideration.
Let~$\Sigma$ be a set of primes, and let~$N\ge 1$ be an integer not divisible by
any prime in~$\Sigma$.

\begin{definition}\label{def:catSSSigma}
  Let~$\catSS_\Sigma(p)$ denote the category with
  \begin{itemize}
    \item objects: supersingular elliptic curves over~$\overline{\F}_{p}$;
    \item morphisms $\Hom_\Sigma(E,E')$: isogenies with degree a product of the primes in~$\Sigma$.
  \end{itemize}
\end{definition}

Our results are expressed in terms of categories of elements of various functors, as in
Definition~\ref{def:extradata}.
For us, this is going to play the role of ``equipping with extra structure":
when~$\cF\colon \cC \to \Sets$ is a functor,
$\El(\cF)$ is the category of ``objects~$c\in\cC$ with extra structure taken
from~$\cF(c)$".
A related definition can be found in~\cite[Definition~2.1]{LeiMueller},
formulated at the level of graphs.
For us, an advantage of the category-theoretic formulation is that we
can forget about $\catSS(p)$ and work in
a quaternionic category, thanks to the Deuring correspondence.

\begin{example}\label{ex:cycN}
  Assume~$p\nmid N$. Let~$\Sigma$ be the set of primes not dividing~$N$.
  Define the functor~$\Cyc_N\colon \catSS_\Sigma(p) \to \Sets$ by:
  \begin{itemize}
    \item $\Cyc_N(E)$ is the set of cyclic subgroups of order~$N$ of~$E$;
    \item for every isogeny~$\varphi\in\Hom_\Sigma(E, E')$, the map~$\Cyc_N(\varphi)$
      is~$C \mapsto \varphi(C)$.
  \end{itemize}
  Then~$\El(\Cyc_N)$ is the category of supersingular elliptic curves equipped
  with a cyclic subgroup of order~$N$.
\end{example}

\begin{example}\label{ex:endmodN}
  Let~$\Sigma$ be the set of primes not dividing~$N$.
  Let~$\End/N$ denote the functor~$\catSS_\Sigma(p) \to \Sets$ defined by
  \begin{itemize}
    \item $(\End/N)(E) = \End(E)/N\End(E)$;
    \item for~$\varphi\colon E \to E'$, the map~$(\End/N)(\varphi)$ is~$\alpha \mapsto
      \varphi\alpha\hat\varphi$.
  \end{itemize}
  Then~$\El(\End/N)$ is is the category of supersingular elliptic curves equipped with an
  endomorphism modulo~$N$, which will play an important role in
  Section~\ref{sec:enriching}.
\end{example}

We now introduce the graphs of interest (more generally see Definition~\ref{def:graphC}).

\begin{definition}\label{def:graphF}
  Let~$\cF\colon \catSS_\Sigma(p) \to \Sets$ be a functor with~$\cF(E)$ finite
  for all~$E$.
  We define the graph~$\cG_\cF$ with:
  \begin{itemize}
    \item vertices: isomorphism classes of objects in~$\El(\cF)$;
    \item edges: let~$(E,x)\in \El(\cF)$; edges from~$(E,x)$
      are isogenies~$\varphi\in\Hom_\Sigma(E,E')$ modulo
      automorphisms of~$(E',\cF(\varphi)(x))$.
  \end{itemize}
  Let~$L^2(\cG_\cF)$ be the space of complex functions on
  vertices of~$\cG_\cF$, and define
  \[
    \langle F,G\rangle =
    \sum_{(E,x)\in\cG_\cF}\frac{F(E,x)\overline{G(E,x)}}{\#\Aut(E,x)}
    \text{ for }F,G\in L^2(\cG_\cF).
  \]
  For every prime~$\ell$, we define the adjacency operator~$A_\ell$
  on~$L^2(\cG_\cF)$ by
  \[
    A_\ell F(E,x) = \sum_{(E,x)\to (E',x')} F(E',x'),
  \]
  where the sum runs over edges of degree~$\ell$ leaving~$(E,x)$.
\end{definition}

\begin{remark}
  The graphs~$\cG_\cF$ have finitely many vertices, but infinitely
  many~edges.
\end{remark}

\begin{example}
  Assume~$p\nmid N$, and let~$\ell$ a prime not dividing~$Np$.
  The graph obtained from~$\cG_{\Cyc_N}$ by keeping only the edges of
  degree~$\ell$ is the $\ell$-isogeny graph of supersingular elliptic curves
  with Borel structure studied in~\cite{Arpin} and~\cite{SECUER}.
  When~$N=1$ this is the classical supersingular $\ell$-isogeny graph.
\end{example}

We are now in position to state our equidistribution theorem.

\begin{definition}\label{defi:congrpropE}
  Let~$\cF\colon \catSS_\Sigma(p) \to \Sets$ be a functor and~$N\ge 1$ an
  integer. We say that $\cF$
  \emph{satisfies the $\modN$-congruence property} if for
  every~$E\in\catSS(p)$ and every~$\varphi,\psi\in\End_\Sigma(E)$ such that~$\varphi-\psi\in N\End(E)$, we
  have~$\cF(\varphi)=\cF(\psi)$.
\end{definition}

\begin{example}
  Assume that~$p$ does not divide~$N$.
  The functor~$\Cyc_N$ from Example~\ref{ex:cycN} satisfies the $\modN$-congruence
  property: indeed, endomorphisms divisible by~$N$ act as~$0$ on
  $N$-torsion points.
\end{example}

\begin{example}
  The functor~$\End/N$ from Example~\ref{ex:endmodN} has the
  $\modN$-congruence property: if~$\varphi,\psi\in\End_{\Sigma}(E)$
  and~$\alpha,\beta\in\End(E)$
  satisfy~$\psi = \varphi + N\beta$, then~$\psi\alpha\hat\psi =
  (\varphi+N\beta)\alpha(\hat\varphi+N\hat\beta) \in \varphi\alpha\hat\varphi + N\End(E)$,
  so that~$(\End/N)(\varphi) = (\End/N)(\psi)$.
\end{example}


\begin{theorem}\label{thm:equidistrE}
  Let~$p$ be a prime and~$N\ge 1$ an integer.
  Let~$\Sigma$ be a set of primes that do not divide~$N$, such that~$\Sigma$
  generates~$(\Z/N\Z)^\times$.
  Let~$\cF\colon \catSS_\Sigma(p) \to \Sets$ be a functor satisfying the
  $\modN$-congruence property and such that all sets~$\cF(E)$ are finite.

  Then, for every~$\ell\in\Sigma$ different from~$p$, the adjacency operator~$A_\ell$ is a normal operator on $L^2(\cG_\cF)$ which stabilises the following subspaces:
  \begin{itemize}
    \item $L^2_{\deg}(\cG_\cF)$, the subspace of functions that are constant on
      every connected component of the graph~$\cG_\cF^1$ obtained from~$\cG_\cF$ by keeping only the
      edges of degree~$1\bmod N$.
      The operator norm of $A_\ell$ on $L^2_{\deg}(\cG_\cF)$ is $\ell+1$.
    \item $L^2_0(\cG_\cF)$, the orthogonal complement of~$L^2_{\deg}(\cG_\cF)$. The operator norm of $A_\ell$ on $L^2_0(\cG_\cF)$ is at most~$2\sqrt{\ell}$.
  \end{itemize}
  Moreover, the~$A_\ell$ for~$\ell\in\Sigma$ pairwise commute.
\end{theorem}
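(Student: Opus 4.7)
The plan is to transfer the entire problem to the quaternionic world via a categorical version of the Deuring correspondence, and then apply Jacquet--Langlands together with Deligne's bounds. First, I would lift the Deuring equivalence of Theorem~\ref{thm:deuring} to the setting with extra data: applying the classification of congruence-type functors alluded to in the technical overview, the hypothesis that $\cF$ satisfies the $\modN$-congruence property should force each connected component of $\El(\cF)$ to be equivalent to a double-coset category of the form $B^\times \lquo \hatB^\times / U$ for some group $U$ with $U(N) \subseteq U \subseteq \hatorder^\times$. Under this identification the Hilbert space $L^2(\cG_\cF)$, with its intrinsic inner product weighted by $1/\#\Aut$, decomposes as an orthogonal direct sum of automorphic-form spaces $L^2(B^\times \lquo \hatB^\times / U)$, and the adjacency operator $A_\ell$ restricts on each summand to the Hecke operator $T_\ell$ introduced in the preliminaries.

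The normality of $A_\ell$ and the pairwise commutativity of the $A_\ell$ for $\ell\in\Sigma$, $\ell\neq p$, then follow immediately from the analogous statements recalled for Hecke operators away from the level, since the primes in $\Sigma$ are coprime to $N$ and $\ell\neq p$ guarantees that $\ell$ is unramified in $B$.

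Next I would identify the two invariant subspaces intrinsically. On the quaternionic side, the reduced norm induces a well-defined projection
\[
  B^\times \lquo \hatB^\times / U \longrightarrow \Q_{>0}^\times \lquo \hatQ^\times / \nrd(U) \cong (\Z/N\Z)^\times/H,
\]
for some subgroup $H$, and $T_\ell$ permutes the fibers by multiplication by $\ell\bmod N$. The pullback $V$ of $L^2\bigl((\Z/N\Z)^\times/H\bigr)$ is thus invariant under every $A_\ell$, and $A_\ell|_V$ is $\ell+1$ times a permutation, of operator norm exactly $\ell+1$. Using that $\Sigma$ generates $(\Z/N\Z)^\times$, a composite of $\Sigma$-edges with total degree $\equiv 1 \bmod N$ connects any two vertices in a fiber of the projection, so the connected components of $\cG_\cF^1$ are precisely these fibers; this intrinsically identifies $L^2_{\deg}(\cG_\cF)$ with $V$ and yields the bound $\ell+1$.

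For the bound on $L^2_0(\cG_\cF)$, the plan is to apply the Jacquet--Langlands correspondence, which transfers the automorphic representations of $B^\times$ that do not factor through the reduced norm to cuspidal automorphic representations of $\GL_2$ of the same level. Since $B$ is ramified at~$\infty$, these correspond to weight-$2$ holomorphic cusp forms on a suitable congruence subgroup, and Deligne's theorem (the Ramanujan bound for weight-$2$ cusp forms) bounds each $T_\ell$-eigenvalue by $2\sqrt{\ell}$, yielding the stated operator norm on $L^2_0$. I expect the main obstacle to be the first step: establishing rigorously that the $\modN$-congruence property really forces the quaternionic realisation to have level contained in $U(N)$, in particular in the case $p\mid N$ where $\order_p$ is not a matrix algebra and the classical level structures based on $N$-torsion points break down, so that the reduction to Hecke operators on $L^2(B^\times\lquo\hatB^\times/U)$ must be carried out via the adelic formulation rather than via torsion.
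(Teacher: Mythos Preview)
Your plan is correct and matches the paper's approach essentially step for step: transfer via the ad\'elic Deuring correspondence, classify the functor as a disjoint union of~$\cF_{U_i}$ using the $\modN$-congruence property (this is exactly Theorem~\ref{thm:congrprop}, and the hard work there is indeed the ad\'elic argument you anticipate for~$p\mid N$), identify the resulting graphs with~$\gr(\Cosets_\Sigma(U_i))$ and the adjacency operators with Hecke operators, then apply Jacquet--Langlands and Deligne. One minor correction: under the paper's conventions the operator~$A_\ell$ coincides with the adjoint~$T_\ell^*$ rather than~$T_\ell$ (Lemma~\ref{lem:graphhecke}~(\ref{item:graphhecke})), but this is immaterial for normality, commutativity, and the operator-norm bounds.
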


In other words, the normalised operator $A'_\ell = \frac{1}{\ell+1}A_\ell$ makes
functions rapidly converge to the subspace $L^2_{\deg}(\cG_\cF)$.
This operator $A'_\ell$ preserves the subset of probability distributions, and closely relates to
the effect of a random walk of $\ell$-isogenies (see Appendix~\ref{appen:dist-func}).
In simple cases (such as $N = 1$), the space~$L^2_{\deg}(\cG_\cF)$ has dimension
$1$, is generated by the constant function~$1$
and the theorem says that random walks in $\ell$-isogeny graphs rapidly converge
to the unique stationary distribution~$f$ with~$f(E,x)$ proportional
to~$\frac{1}{\#\Aut(E,x)}$.
One thus sees that the classical rapid-mixing property for isogeny graphs (Proposition~\ref{prop:rand-walk-standard}) is a particular case of Theorem~\ref{thm:equidistrE}.
More details and other illustrations of Theorem~\ref{thm:equidistrE} are available in Appendix~\ref{appen:illustrations}.

In general $L^2_{\deg}(\cG_\cF)$ could have higher dimension. This reflects the fact that the graph may be disconnected or multipartite, two obstructions for random walks to converge to a unique limit.
To ease the application of Theorem~\ref{thm:equidistrE} in such cases, we provide the
following companion proposition that gives extra information on the graph~$\cG_\cF$
and an explicit description of the space~$L^2_{\deg}(\cG_\cF)$.

\begin{proposition}\label{prop:companionequidistr}
  With the same hypotheses and notations as in Theorem~\ref{thm:equidistrE}:
  \begin{enumerate}[(1)]
    \item\label{item:bij} for every isogeny~$\varphi$ in~$\catSS_\Sigma(p)$, the
      map~$\cF(\varphi)$ is a bijection;
    \item\label{item:connect} for every~$E,E'\in\catSS(p)$, there exists~$\varphi\in\Hom_\Sigma(E,E')$ of degree~$1\bmod N$;
    \item\label{item:action} the morphism~$\End_\Sigma(E_0) \to (\End(E_0)/N\End(E_0))^\times$ is
      surjective, inducing an action of the group~$G=(\End(E_0)/N\End(E_0))^\times$
      on~$\cF(E_0)$.
  \end{enumerate}
  Let~$x_1,\dots,x_n$ denote representatives of the orbits of the action of~$G$
  on~$\cF(E_0)$ and for each~$i$, let~$H_i$ denote the stabiliser of~$x_i$
  in~$G$.
  Let~$\cG_{\deg}$ denote the graph
  with edges labelled by elements of~$(\Z/N\Z)^\times$ and
  with
  \begin{itemize}
    \item vertex set~$\bigsqcup_i (\Z/N\Z)^\times/\deg(H_i)$;
    \item for every~$i$, every~$a\in (\Z/N\Z)^\times/\deg(H_i)$ and
      every~$d\in(\Z/N\Z)^\times$, an edge~$a\to b$ labelled by~$d$, where~$b =
      ad\in (\Z/N\Z)^\times/\deg(H_i)$.
  \end{itemize}
  Then:
  \begin{enumerate}[(1)] \setcounter{enumi}{3}
    \item\label{item:Degmap} there exists a unique morphism of graphs
      \[
        \Deg\colon \cG_\cF \longrightarrow \cG_{\deg}
      \]
      such that for all~$i$ we have~$\Deg(E_0,x_i) =
      1\in(\Z/N\Z)^\times/\deg(H_i)$
      and for every edge~$\varphi$ of~$\cG_\cF$, the
      edge~$\Deg(\varphi)$ is labelled by~$\deg(\varphi) \bmod N$;
    \item\label{item:surj} the map~$\Deg$ is surjective; and
    \item\label{item:L2deg} $L^2_{\deg}(\cG_\cF)$ is the space of
      functions that factor through~$\Deg$.
  \end{enumerate}
\end{proposition}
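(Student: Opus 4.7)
To prove~(1), the plan is to combine the congruence property with the fact that a finite group is generated as a monoid by any group-generating set. Given $\varphi\colon E\to E'$ of degree~$d$, one has $\cF(\hat\varphi)\cF(\varphi) = \cF([d])$, so it suffices to see that $\cF([d])$ is a bijection on the finite set~$\cF(E)$. Since $\Sigma$ generates $(\Z/N\Z)^\times$, I would pick a positive integer $d'$ whose prime factors lie in~$\Sigma$ and with $dd'\equiv 1\pmod N$; then $[dd'] - [1] \in N\End(E)$, so the congruence property yields $\cF([d'])\cF([d]) = \cF([dd']) = \cF([1]) = \id$, and $\cF([d])$ is invertible. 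For~(2) and~(3), I would transport the statements to the quaternion algebra~$B$ via the Deuring correspondence set up in Subsection~\ref{subsec:deuring} and apply strong approximation at the place~$\infty$. Claim~(3) becomes the surjectivity of $\End_\Sigma(E_0) \to \hatorder^\times/U(N)$, which I would prove by using strong approximation to fill each residue class modulo~$N$ with an element of~$\order$, then adjusting its reduced norm to be $\Sigma$-smooth using the generation property. Claim~(2) is the analogous statement for a connecting right ideal representing $\Hom(E,E')$: the reduced norm of an element of the ideal, divided by the norm of the ideal, gives the degree of the corresponding isogeny, and strong approximation produces elements whose reduced norm class modulo~$N$ equals~$1$.

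For~(4), I would define $\Deg$ on a vertex $(E,x)$ by picking any $\varphi\in\Hom_\Sigma(E_0,E)$ (which exists since supersingular isogeny graphs are connected), writing $\cF(\varphi)^{-1}(x) = g\cdot x_i$ for the unique orbit index~$i$ and some $g\in G$, and setting $\Deg(E,x) = \deg(\varphi)\deg(g) \in (\Z/N\Z)^\times/\deg(H_i)$. The key well-definedness check uses that the reduced norm is well-defined modulo~$N$ on~$G$ (from $\nrd(a+Nb) \equiv \nrd(a)\pmod N$), and that changing $\varphi$ to $\varphi'$ absorbs the difference into $g$ via~(1). The edge condition then forces $\Deg$ on an edge $\psi\colon(E,x)\to (E',x')$ to carry label $\deg(\psi)\bmod N$, and compatibility with composition makes this consistent. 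For~(5), the surjectivity of $\deg\colon G\to(\Z/N\Z)^\times$ (which follows from~(3) combined with the generation property of~$\Sigma$) provides, for every~$(i,a)$, an element $g\in G$ with $\deg(g)\equiv a\pmod N$; then $(E_0,g\cdot x_i)$ is a preimage under~$\Deg$.

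The main obstacle is~(6). One direction is easy: edges of $\cG_\cF^1$ have degree~$\equiv 1\pmod N$, so $\Deg$ sends them to edges of $\cG_{\deg}$ labelled~$1$, which are loops; hence any function factoring through $\Deg$ is constant on components of $\cG_\cF^1$. For the converse I must show that any two vertices $(E,x)$ and $(E',x')$ with the same $\Deg$ image are connected in $\cG_\cF^1$. My plan is: use~(2) to produce an edge $\psi\colon E\to E'$ in $\cG_\cF^1$, landing at $(E',\cF(\psi)(x))$; then find an endomorphism $\tau$ of~$E'$ of degree~$\equiv 1\pmod N$ with $\cF(\tau)(\cF(\psi)(x)) = x'$. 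Via~(1) and the transport of $G$-actions through isogenies, this reduces to the claim that the equality of $\Deg$ images forces the two elements to lie in the same orbit of $\ker(\deg)\subset G$ modulo the stabiliser~$H_i$, which is exactly what the quotient $(\Z/N\Z)^\times/\deg(H_i)$ encodes. The delicate point is the interplay between the $G$-action, the stabilisers $H_i$, and the degree map, and it is here that I expect the most careful bookkeeping to be needed.
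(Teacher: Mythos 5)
Your items (2) and (3) are the technical heart of the proposition, and the way you propose to prove them would not work as written. You invoke ``strong approximation at the place $\infty$'': but $B$ is ramified at $\infty$ (definite), so $B^1(\R)\cong\SU(2)$ is compact and strong approximation relative to $\{\infty\}$ alone is \emph{false} for $B$ -- this failure is exactly why definite quaternion algebras have nontrivial class sets. One must instead apply strong approximation using a finite prime $\ell\in\Sigma$, $\ell\neq p$, at which $B$ splits; this is why the hypothesis that $\Sigma$ contains a prime different from $p$ appears in the paper. Moreover, your two-step plan ``fill each residue class modulo $N$ with an element of $\order$, then adjust its reduced norm to be $\Sigma$-smooth'' has no mechanism for the second step: once you have some $\alpha\in\order$ in the prescribed class mod $N$, there is no general way to modify it afterwards so that $\nrd(\alpha)$ becomes $\Sigma$-smooth while preserving the class. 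The congruence condition and the smoothness condition must be achieved \emph{simultaneously}, by applying strong approximation to the open subgroup $U_\Sigma(N)=(1+N\hatorder_\Sigma)\cap\hatorder_\Sigma^\times$, which encodes both, and then clearing denominators by a $\Sigma$-smooth scalar $\equiv 1\bmod N$; the generation hypothesis enters to make the norm obstruction $\Q_{>0}^\times\lquo\hatQ^\times/\nrd(U_\Sigma(N))$, a quotient of $(\Z/N\Z)^\times/\langle\Sigma\rangle$, trivial, so that $\hatB^\times=B^\times\, xU_\Sigma(N)x^{-1}$. Repairing your sketch essentially amounts to reproving the paper's Lemma~\ref{lem:strongapprox}, which is precisely this statement; your (2) suffers from the same issue since it rests on the same invocation.

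Apart from this, your route is correct and genuinely different from the paper's. The paper deduces (\ref{item:bij}) and (\ref{item:action}) from the classification Theorem~\ref{thm:congrprop} after transporting $\cF$ through the ad\'elic Deuring correspondence, obtains $\cG_\cF\cong\bigsqcup_i\gr(\Cosets_\Sigma(U_i))$ from the functor isomorphism $\cF\cong\bigsqcup_i\cF_{U_i}$ together with Proposition~\ref{prop:FUcosetsU}, and then $\Deg$ is simply the ad\'elic reduced norm on $B^\times\lquo\hatB^\times/U_i$, with (\ref{item:surj}) coming from surjectivity of $\nrd$ and (\ref{item:L2deg}) from connectivity. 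You instead prove (\ref{item:bij}) directly ($\cF(\hat\varphi)\cF(\varphi)=\cF([\deg\varphi])$, inverted modulo $N$ by a $\Sigma$-smooth integer via the congruence property -- arguably cleaner than the paper's finite-order argument), and you build $\Deg$ by hand from the $G$-action, $\Deg(E,x)=\deg(\varphi)\deg(g)$ where $\cF(\varphi)^{-1}(x)=g\cdot x_i$; your reduction of (\ref{item:L2deg}) is sound, since $\deg(g)\equiv\deg(g')\bmod\deg(H_i)$ lets one choose $h\in H_i$ with $\deg(g'hg^{-1})\equiv 1\bmod N$ and lift it by (\ref{item:action}) to an endomorphism of degree $1\bmod N$ sending $g\cdot x_i$ to $g'\cdot x_i$ (note also that duals of degree-$1\bmod N$ edges are again such edges, by the congruence property, so the subgraph $\cG_\cF^1$ is symmetric). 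This buys a proof of (\ref{item:Degmap})--(\ref{item:L2deg}) avoiding the functor isomorphism, at the cost of the well-definedness bookkeeping you flag, which does go through. Two small omissions to fill in: the uniqueness claim in (\ref{item:Degmap}) (every vertex connects to some $(E_0,x_i)$ and edge labels in $\cG_{\deg}$ determine targets), and the surjectivity of $\deg\colon G\to(\Z/N\Z)^\times$ used in (\ref{item:surj}), which is easier to see directly (a determinant away from $p$, and the local reduced norm on $\order_p^\times$ is onto $\Z_p^\times$ when $p\mid N$) than via (\ref{item:action}) and the generation property.
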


\begin{remark}\label{rem:equidistr}\hfill
  \begin{itemize}
    \item Properties~(\ref{item:bij}) and~(\ref{item:connect}) allow us to
      transfer what happens at~$E_0$ to any other curve.

    \item Property~(\ref{item:action}) allows us to define the~$H_i$.
      When~$p\nmid N$, this can be
      used to relate~$\cF$ to the setup of~\cite{otherequidistr}, using an
      isomorphism~$G\cong \GL_2(\Z/N\Z)$.
      Note that when~$p\mid N$, the group~$(\order/N\order)^\times$ is not
      isomorphic to~$\GL_2(\Z/N\Z)$.
    \item The graph~$\cG_{\deg}$ is the Cayley graph of the set~$\bigsqcup_i
      (\Z/N\Z)^\times/\deg(H_i)$ equipped with its natural action
      of~$(\Z/N\Z)^\times$.
    \item Property~(\ref{item:Degmap}) amounts to stating the existence
      of a disconnectedness and a multipartition of~$\cG_\cF$.
    \item Property~(\ref{item:surj}) ensures that the space of functions
      on~$\cG_{\deg}$ injects into~$L^2(\cG_\cF)$ via the map~$\Deg$.
    \item Using Properties~(\ref{item:surj}) and~(\ref{item:L2deg}), one easily
      obtains the spectra of the adjacency operators~$A_\ell$
      on~$L^2_{\deg}(\cG_\cF)$: for every complex character~$\chi$
      of~$(\Z/N\Z)^\times$, one obtains the eigenvalue~$\chi(\ell)(\ell+1)$ with
      multiplicity equal to the number of~$i$ such that~$\chi(\deg(H_i)) = 1$.
    \item From Proposition~\ref{prop:companionequidistr} and
      Theorem~\ref{thm:equidistrE}, since~$2\sqrt{\ell} < \ell+1$,
      one can simply deduce connectedness and
      multipartition properties of~$\cG_\cF$, its degree $\ell$ subgraphs, etc.
      For instance, the graph~$\cG_\cF$ has exactly~$n$ connected components: the
      preimages of the~$(\Z/N\Z)^\times/\deg(H_i)$ via the map~$\Deg$.
  \end{itemize}
\end{remark}

\begin{example}
  Assume~$p\nmid N$, let~$\Sigma$ denote the set of all
  primes that do not divide~$pN$, and apply Theorem~\ref{thm:equidistrE}
  and Proposition~\ref{prop:companionequidistr} to~$\cF = \Cyc_N$.
  Then we have an isomorphim~$G\cong\GL_2(\Z/N\Z)$ and a compatible
  bijection~$\cF(E_0)\cong \{\Z/N\Z\text{-lines in } (\Z/N\Z)^2\}$.
  In particular, there is a single orbit ($n=1$) and, choosing~$x_1$
  corresponding to the line generated by~$\binom{1}{0}$, the stabiliser~$H =
  H_1$ corresponds to the subgroup of upper-triangular matrices, so
  that~$\deg(H) = (\Z/N\Z)^\times$. The space~$L^2_{\deg}(\cG_{\Cyc_N})$ is
  therefore one-dimensional, generated by the constant function~$1$.
  Hence Theorem~\ref{thm:equidistrE} recovers~\cite[Theorem 8]{SECUER}.
\end{example}

\subsection{Ad\'elic Deuring correspondence}\label{subsec:deuring}

Since automorphic forms are usually defined using ad\'elic language, we will
reformulate the Deuring correspondence using ad\`eles
(Corollary~\ref{cor:adelicdeuring}).

The following terminology will be convenient.

\begin{definition}
  A \emph{degree map} on a category~$\cC$ is the data, for every morphism~$f$
  of~$\cC$, of an integer~$\deg(f)\in\Z_{\ge 0}$ such
  that~$\deg(fg) = \deg(f)\deg(g)$ for every morphisms~$f,g$ that can be
  composed, and such that~$\deg(\id_x) = 1$ for all~$x\in\cC$.

  A functor~$\cF$ between categories equipped with degree maps is
  \emph{degree-preserving} if~$\deg(\cF(f)) = \deg(f)$ for every morphism~$f$.

  When~$\cC$ is a category with a degree map and~$\cF\colon \cC\to \Sets$ is a
  functor, we equip the category of elements~$\El(\cF)$ with its inherited
  degree map.
\end{definition}

\begin{remark}
  In a category with a degree map, every isomorphism has degree~$1$.
\end{remark}

\begin{example}
  The category~$\catSS(p)$ is equipped with a degree map: the usual degree for
  isogenies, and~$0$ for the zero morphism.
\end{example}

The following is the basic object underlying the ad\'elic Deuring
correspondence.

\begin{definition}
  We define the category~$\Cosets(\hatorder^\times)$ with
  \begin{itemize}
    \item objects: cosets~$[x] := x\hatorder^\times \in  \hatB^\times /
      \hatorder^\times$ for~$x\in \hatB^\times$;
    \item morphisms: $\Hom([x],[y]) = B\cap y\hatorder x^{-1}$, using
      multiplication in~$B$ as composition.
  \end{itemize}
  We equip the category~$\Cosets(\hatorder^\times)$ with a degree map as
  follows:
  for every~$b\in \Hom([x],[y])$, we define the \emph{degree of~$b$} to be the
  positive integer $\deg(b)$ such that~$\deg(b)\hatZ = \nrd(u)\hatZ$ where~$b=
  yux^{-1}$.
\end{definition}

\begin{remark}
  We warn the reader that a single element~$b\in B$ can represent
  different morphisms, depending on the source~$[x]$ and target~$[y]$. Moreover,
  the degree of the morphism is in general not the reduced norm of~$b$.
\end{remark}

We reformulate the Deuring correspondence ad\'elically as follows.

\begin{proposition}\label{prop:adelise}
  The association~$[x] \mapsto B\cap x\hatorder$, $g\in \Hom([x],[y]) \mapsto
  (b \mapsto gb)$ defines a equivalence of categories
  \[
    \Cosets(\hatorder^\times) \longrightarrow \Mod(\order).
  \]

  Its composition with the equivalence of Theorem~\ref{thm:deuring} is a
  degree-preserving equivalence of categories
  \[
    \Cosets(\hatorder^\times) \longrightarrow \catSS(p).
  \]
\end{proposition}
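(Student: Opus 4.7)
The plan is to establish the equivalence in two stages: first the ad\`elic equivalence with $\Mod(\order)$, and then transport the degree map across Deuring's equivalence.

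For each~$x \in \hatB^\times$, set $M_x := B \cap x\hatorder$. The first step is to check that~$M_x$ is a well-defined object of~$\Mod(\order)$: it is a lattice in~$B$ (locally at every prime~$\ell$ it equals~$x_\ell\order_\ell$, so it has correct rank and discriminant), it is a right $\order$-module (since~$\hatorder$ is), and it is invertible (locally principal by construction). Moreover, $M_x$ depends only on the coset~$[x]$: replacing~$x$ by~$xu$ with~$u\in\hatorder^\times$ leaves $x\hatorder$ unchanged. Next, for a morphism~$g\in\Hom([x],[y]) = B \cap y\hatorder x^{-1}$, I need to verify that left multiplication by~$g$ sends~$M_x$ into~$M_y$: locally at~$\ell$, $g_\ell\in y_\ell\order_\ell x_\ell^{-1}$, so $g_\ell x_\ell\order_\ell \subseteq y_\ell\order_\ell$; taking intersection with~$B$ gives~$gM_x\subseteq M_y$. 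Functoriality is immediate since composition in~$\Cosets(\hatorder^\times)$ is multiplication in~$B$ and morphisms in~$\Mod(\order)$ compose as usual.

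For essential surjectivity, any invertible right $\order$-module~$I$ is locally principal: for every prime~$\ell$, there exists~$x_\ell\in B_\ell^\times$ with~$I_\ell = x_\ell\order_\ell$, and~$x_\ell\in\order_\ell^\times$ for almost all~$\ell$. The tuple~$x = (x_\ell)$ lies in~$\hatB^\times$ and then~$I = B\cap x\hatorder = M_x$. For fullness and faithfulness, let~$\varphi\colon M_x\to M_y$ be a right $\order$-module homomorphism. Tensoring with~$\Q$ gives a $B$-linear map~$B\to B$ (using that~$M_x\Q = M_y\Q = B$), which must be left multiplication by a unique~$g\in B$. The condition~$gM_x\subseteq M_y$ at~$\ell$ says~$g_\ell x_\ell\order_\ell\subseteq y_\ell\order_\ell$, i.e., $g\in y_\ell\order_\ell x_\ell^{-1}$ for all~$\ell$, hence $g\in B\cap y\hatorder x^{-1} = \Hom([x],[y])$. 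Uniqueness handles faithfulness. This proves the first equivalence.

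For the degree assertion, composing with Deuring's equivalence from Theorem~\ref{thm:deuring} sends~$[x]$ to a supersingular curve~$E_x$ with~$\Hom(E_0,E_x)\cong M_x$, and a morphism~$g\in\Hom([x],[y])$ corresponds to an isogeny~$\varphi_g\colon E_x\to E_y$ under which~$\psi\mapsto\varphi_g\psi$ implements left multiplication by~$g$ on~$M_x$. The key fact from the Deuring dictionary is that~$\deg(\varphi_g)$ equals the index~$[M_y : gM_x]$. Writing~$g = yux^{-1}$ with~$u\in\hatorder$, this index is computed locally: $[y_\ell\order_\ell : y_\ell u_\ell\order_\ell] = [\order_\ell:u_\ell\order_\ell]$, which is the $\ell$-part of~$\nrd(u)$ since~$\order_\ell$ is a maximal order in a local quaternion algebra. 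Taking the product over all~$\ell$ gives~$[M_y:gM_x] = \nrd(u)$, so~$\deg(\varphi_g) = \deg(g)$ as defined in~$\Cosets(\hatorder^\times)$. The identity morphism~$\id_{[x]}$ corresponds to~$1\in B$ with trivial~$u$, so has degree~$1$, and multiplicativity of degrees follows from multiplicativity of~$\nrd$. The main subtlety is keeping track of the twofold role of elements of~$B$ (as morphisms versus as module elements) when verifying the index computation; once one fixes the convention that morphisms act by left multiplication while the module~$M_x$ sits via the idele~$x$ on the right, everything is a local index calculation.
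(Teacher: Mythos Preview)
Your argument follows essentially the same route as the paper: verify that the assignment is a well-defined functor, then check faithfulness, fullness, and essential surjectivity, and finally handle the degree. Your fullness argument (localise and use $(M_x)_\ell = x_\ell\order_\ell$) is equivalent to the paper's use of weak approximation to identify the closure of $B\cap x\hatorder$ in $\hatB$ with $x\hatorder$; both are the same local--global principle for lattices.

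There is, however, a genuine slip in your degree computation, consisting of two errors that happen to cancel. First, you assert that $\deg(\varphi_g) = [M_y : gM_x]$. This is off by a square: since $M_x$ and $M_y$ have rank~$4$ over~$\Z$, one actually has $[M_y : gM_x] = \deg(\varphi_g)^2$ (for instance, $\hat\varphi_g\varphi_g = [\deg\varphi_g]$ gives index $\deg(\varphi_g)^4$, and the two factors contribute equally). Second, you claim that $[\order_\ell : u_\ell\order_\ell]$ equals the $\ell$-part of $\nrd(u)$; but left multiplication by $u_\ell$ on the rank-$4$ $\Z_\ell$-module $\order_\ell$ has determinant $\nrd(u_\ell)^2$, so the index is the $\ell$-part of $\nrd(u)^2$. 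Putting these together correctly yields $\deg(\varphi_g)^2 = [M_y:gM_x] = \prod_\ell \ell^{2v_\ell(\nrd(u_\ell))} = \deg_{\Cosets}(g)^2$, and hence the desired equality after taking square roots. The paper's own proof of this step is a single sentence (``by examining the determinant of a morphism on the modules''), which hides exactly this computation; your more explicit version is welcome, but the two squares need to be put back in.
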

\begin{proof}
  The association described clearly defines a faithful functor.

  We claim that the functor is full. Indeed, let~$f\in \Hom(B\cap x\hatorder,
  B\cap y\hatorder)$. Since every right $B$-module endomorphism of~$B$ is a left
  multiplication by an element of~$B$, there exists~$g\in B$ such
  that~$f(b) = gb$ for all~$b\in B\cap x\hatorder$.
  Moreover, by weak approximation the closure of~$B\cap x\hatorder$ in~$\hatB$
  is~$x\hatorder$, so we must have~$gx\hatorder \subset y\hatorder$ and
  therefore~$g\in y\hatorder x^{-1}$, so that~$g\in \Hom([x],[y])$ as claimed.

  Finally, the functor is essentially surjective since every right invertible
  $\order$-module is isomorphic to a right invertible $\order$-ideal~$I$, and
  such an ideal is locally principal and therefore of the form~$I = B\cap
  x\hatorder$.

  By examining the determinant of a morphism on the modules, we see that the
  equivalence preserves the degree.

\qed\end{proof}

Fix~$\Sigma$ be a set of primes. Let~$\hatorder_{\Sigma}$ denote the
ring obtained from~$\hatorder$ by inverting all primes in~$\Sigma$.
Then~$\hatorder\cap \hatorder_{\Sigma}^\times$ is the set of
elements~$u\in \hatorder$ such that~$\nrd(u)\hatZ$ is generated by a product
of the primes in~$\Sigma$.

\begin{definition}
  Let~$\Cosets_\Sigma(\hatorder^\times)$ be the category with
  \begin{itemize}
    \item objects: cosets~$[x] = x\hatorder^\times \in  \hatB^\times /
      \hatorder^\times$ for~$x\in \hatB^\times$;
    \item morphisms: $\Hom_\Sigma([x],[y]) = B^\times\cap y(\hatorder \cap
      \hatorder_{\Sigma}^\times) x^{-1} = \Hom([x],[y])\cap y \hatorder_{\Sigma}^\times x^{-1}$,
      using multiplication in~$B$ as composition.
  \end{itemize}
\end{definition}
Equivalently, the morphisms in~$\Cosets_\Sigma(\hatorder^\times)$ are the morphisms
in~$\Cosets(\hatorder^\times)$ whose degree is a product of the primes
in~$\Sigma$.
We obtain the following corollary.

\begin{corollary}[Ad\'elic Deuring correspondence]\label{cor:adelicdeuring}
  The second equivalence from Proposition~\ref{prop:adelise} induces a degree-preserving equivalence of categories
\[
  \Cosets_\Sigma(\hatorder^\times) \longrightarrow \catSS_\Sigma(p).
\]
\end{corollary}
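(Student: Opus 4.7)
The plan is to deduce this corollary as a direct formal consequence of Proposition~\ref{prop:adelise}, using only that the equivalence produced there is degree-preserving. Both $\Cosets_\Sigma(\hatorder^\times)$ and $\catSS_\Sigma(p)$ are \emph{wide} subcategories of their ambient categories: they contain all objects, and the morphisms are exactly those whose degree is a product of primes in $\Sigma$. So if I denote by $\Phi\colon \Cosets(\hatorder^\times) \to \catSS(p)$ the second equivalence of Proposition~\ref{prop:adelise}, the degree-preservation means $\Phi$ sends $\Hom_\Sigma([x],[y])$ into $\Hom_\Sigma(\Phi[x],\Phi[y])$, so $\Phi$ restricts to a functor $\Phi_\Sigma\colon \Cosets_\Sigma(\hatorder^\times) \to \catSS_\Sigma(p)$, which is still degree-preserving.

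It then remains to verify that $\Phi_\Sigma$ is full, faithful, and essentially surjective. Faithfulness is immediate, since it is inherited from $\Phi$ (the morphism sets of $\Cosets_\Sigma$ are subsets of those of $\Cosets$, and the restriction of an injection is injective). Essential surjectivity: given $E\in \catSS_\Sigma(p)$, which has the same object class as $\catSS(p)$, pick $[x]$ with $\Phi[x]\cong E$ via an isomorphism $f$ in $\catSS(p)$; since isomorphisms have degree~$1$ and $1$ is the empty product of primes of $\Sigma$, the isomorphism $f$ lies in $\catSS_\Sigma(p)$, so $\Phi_\Sigma[x]\cong E$ in $\catSS_\Sigma(p)$. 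Fullness: given a morphism $\varphi\colon \Phi[x]\to \Phi[y]$ in $\catSS_\Sigma(p)$, fullness of $\Phi$ gives a preimage $b\in \Hom([x],[y])$, and by degree-preservation $\deg(b)=\deg(\varphi)$ is a product of primes in $\Sigma$, so $b\in \Hom_\Sigma([x],[y])$ by definition.

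There is essentially no obstacle; the only small point to keep in mind is the characterisation of $\Hom_\Sigma([x],[y])$ in $\Cosets_\Sigma(\hatorder^\times)$ as the morphisms of $\Cosets(\hatorder^\times)$ of $\Sigma$-smooth degree, which is exactly the equivalent reformulation given immediately before the corollary in the text. Combined with degree-preservation of $\Phi$, this makes all the verifications above entirely formal. The resulting functor $\Phi_\Sigma$ is then a degree-preserving equivalence, which is the statement of the corollary.
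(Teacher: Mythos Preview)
Your proof is correct and is exactly the formal verification the paper leaves implicit: since both $\Cosets_\Sigma(\hatorder^\times)$ and $\catSS_\Sigma(p)$ are obtained by keeping all objects and only the morphisms of $\Sigma$-smooth degree, a degree-preserving equivalence automatically restricts to an equivalence between them. The paper gives no separate proof of this corollary, treating it as immediate from Proposition~\ref{prop:adelise} and the remark preceding the corollary, which is precisely the route you spell out.
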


\subsection{Extra data of congruence type}\label{subsec:congrtype}
Fix~$N\ge 1$ an integer not divisible by any prime in~$\Sigma$.

In this subsection, we will study categories of elements of various functors, as in
Definition~\ref{def:extradata}.
We will use functors coming from quaternionic constructions, which will allow us
to apply automorphic methods.
The main result of this subsection is Theorem~\ref{thm:congrprop}, which
classifies functors satisfying a simple property in terms of ad\'elic groups.

Let~$U(N) = (1+N\hatorder) \cap
\hatorder^\times$, which is a finite index subgroup of~$\hatorder^\times$,
and similarly~$U_\Sigma(N) = (1+N\hatorder_{\Sigma}) \cap \hatorder_{\Sigma}^\times$.
Let~$U$ be a subgroup of~$\hatorder^\times$ containing~$U(N)$, and
let~$U_\Sigma = U \cdot U_\Sigma(N)$, so that~$U = U_\Sigma \cap
\hatorder^\times$. Note that the natural map
$\hatorder^\times/U \to \hatorder_{\Sigma}^\times / U_\Sigma$ is a
bijection, i.e. we have~$\hatorder_\Sigma^\times = \hatorder^\times U_\Sigma$.

It is helpful to think about these definition in terms of the product
decomposition~$\hatB^\times = \prod_\ell' B_\ell^\times$ as follows: we have
\[
  U = U' \times \prod_{\ell \nmid N} \order_\ell^\times
  \text{ and }
  U_\Sigma = U' \times \hspace{-7pt}
    \prod_{\ell \nmid N, \ell \notin \Sigma} \hspace{-8pt} \order_\ell^\times
    \times \prod_{\ell\in\Sigma}\!{}' B_\ell^\times
\]
where~$U'$ is the image of~$U$ in~$\prod_{\ell\mid N} \order_\ell^\times$.

\begin{definition}\label{def:FU}
  Let~$\cF_U$ be the functor
  \[
    \cF_U \colon \Cosets_\Sigma(\hatorder^\times) \to \Sets
  \]
  defined by
  \begin{itemize}
    \item $\cF_U([x]) = x\hatorder_{\Sigma}^\times / U_\Sigma$,
    \item for~$b\in\Hom_\Sigma([x],[y])$, the map~$\cF_U(b)\colon \cF_U([x]) \to \cF_U([y])$ is left multiplication by~$b$.
  \end{itemize}

  Let~$\Cosets_\Sigma(U)$ be the category with
  \begin{itemize}
    \item objects: cosets~$xU \in  \hatB^\times / U$
      for~$x\in \hatB^\times$;
    \item morphisms: $\Hom_U(xU,yU) = B^\times\cap y(\hatorder \cap U_\Sigma)
      x^{-1} = \Hom([x],[y]) \cap yU_\Sigma x^{-1}$, using
      multiplication in~$B$ as composition.
  \end{itemize}
\end{definition}
In other words, morphism are required to respect~$U$ at the primes
dividing~$N$, and to have degree a product of the primes in~$\Sigma$.

\begin{example}\label{ex:eichler}
  Assume that~$p$ does not divide~$N$.
  Let~$\order_0(N)\subset \order$ be an Eichler
  order of level~$N$, and
  let~$U = \hatorder_0(N)^\times$.
  Then for all~$x\in \hatB^\times$, the set~$\cF_U([x])$ is in bijection
  with~$\P^1(\Z/N\Z)$, or more naturally with the set of Eichler orders
  contained in the maximal order~$x\hatorder x^{-1} \cap B$.
\end{example}

The following proposition is the bridge between functors on the quaternionic
side of the Deuring correspondence and automorphic forms.

\begin{proposition}\label{prop:FUcosetsU}
  The association
  \begin{itemize}
    \item $([x],xgU_\Sigma) \mapsto xg'U$ where~$xgU_\Sigma = xg'U_\Sigma$
      and $g'\in\hatorder^\times$;
    \item $b\in\Hom_{\El(\cF_U)}(([x],xgU_\Sigma),([y],yhU_\Sigma)) \mapsto b
      \in \Hom_U(xg'U,yh'U)$
  \end{itemize}
  defines a degree-preserving equivalence of categories
  \[
    \El(\cF_U) \longrightarrow \Cosets_\Sigma(U).
  \]
\end{proposition}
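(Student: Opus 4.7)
The plan is to verify, in order, that the formulas define a functor, and then that it is essentially surjective, full, faithful, and degree-preserving. The main subtleties are well-definedness of the object map (independence from the choice of~$g'$) and the morphism map (showing that the same element~$b$ of~$B^\times$ indeed lies in the required coset intersection on the $\Cosets_\Sigma(U)$ side).

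First I would verify well-definedness of the object map. Given $([x], xgU_\Sigma)$ with $g\in \hatorder_\Sigma^\times$, the identity $\hatorder_\Sigma^\times = \hatorder^\times U_\Sigma$ noted before Definition~\ref{def:FU} allows us to choose $g'\in\hatorder^\times$ with $gU_\Sigma = g'U_\Sigma$. Any two such choices $g'_1,g'_2$ satisfy $g'_1(g'_2)^{-1}\in U_\Sigma\cap\hatorder^\times = U$, so $xg'_1U = xg'_2U$, and the object map is well-defined.

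Next, I would verify that the morphism map lands in the correct Hom-set. Let $b \in \Hom_{\El(\cF_U)}(([x],xgU_\Sigma),([y],yhU_\Sigma))$, so $b = yux^{-1}$ for some $u\in\hatorder\cap\hatorder_\Sigma^\times$ and $bxgU_\Sigma = yhU_\Sigma$. Choosing $g',h'\in\hatorder^\times$ in the respective cosets, I rewrite
\[
b = yh'\bigl((h')^{-1}ug'\bigr)(xg')^{-1}.
\]
Since $(h')^{-1}, u, g' \in \hatorder$, the middle factor lies in $\hatorder$. The condition $bxgU_\Sigma = yhU_\Sigma$ becomes $ugU_\Sigma = hU_\Sigma$, hence $ug'U_\Sigma = h'U_\Sigma$, so $(h')^{-1}ug'\in U_\Sigma$. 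Thus $(h')^{-1}ug'\in\hatorder\cap U_\Sigma$, showing $b\in\Hom_U(xg'U, yh'U)$. Functoriality is immediate since the map is the identity on the underlying elements of $B^\times$, which is also why it is faithful.

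For essential surjectivity, any object $yU$ of $\Cosets_\Sigma(U)$ is the image of $([y], yU_\Sigma)$ (taking $g=g'=1$). For fullness, given $b\in\Hom_U(xg'U, yh'U)$, write $b = yh'v(xg')^{-1}$ with $v\in\hatorder\cap U_\Sigma$; then $b = y\bigl(h'v(g')^{-1}\bigr)x^{-1}$ with $h'v(g')^{-1}\in \hatorder\cap\hatorder_\Sigma^\times$ (using $v\in U_\Sigma\subseteq\hatorder_\Sigma^\times$ and $g',h'\in\hatorder^\times$), so $b\in\Hom_\Sigma([x],[y])$; moreover $bxg' = yh'v$ and $yh'vU_\Sigma = yh'U_\Sigma$, so $b$ defines a morphism $([x],xg'U_\Sigma)\to([y],yh'U_\Sigma)$ in $\El(\cF_U)$. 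Finally, degree preservation is immediate: with $u = h'v(g')^{-1}$ and $g',h'\in\hatorder^\times$ of unit reduced norm, $\nrd(u)\hatZ = \nrd(v)\hatZ$, so the degree defined from $b = yux^{-1}$ in $\Cosets_\Sigma(\hatorder^\times)$ agrees with the inherited degree on $\Cosets_\Sigma(U)$.

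The main obstacle is the bookkeeping in the morphism step: checking that the single element $b\in B^\times$ can be simultaneously presented as $yux^{-1}$ and as $yh'v(xg')^{-1}$, with the $U_\Sigma$-condition on the new middle factor precisely encoding the extra-data compatibility $\cF_U(b)(xgU_\Sigma) = yhU_\Sigma$. Once this translation is set up, everything else is routine.
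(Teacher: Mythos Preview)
Your proof is correct and follows essentially the same approach as the paper. The only minor difference is that the paper, after checking well-definedness on objects and morphisms exactly as you do, dispatches the equivalence by writing down an explicit inverse functor $xU\mapsto([x],xU_\Sigma)$, $b\mapsto b$, whereas you verify fullness, faithfulness, and essential surjectivity separately; the computations underlying both are the same.
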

\begin{proof}
  First, the association is well-defined on objects:
  if two elements~$g',g''\in\hatorder^\times$ satisfy~$xg'U_\Sigma =
  xg''U_\Sigma$, then~$(g')^{-1}g''\in
  U_\Sigma\cap \hatorder^\times = U$ so~$xg'U = xg''U$.
  Next, it is well-defined on morphisms:
  let~$b\in\Hom_{\El(\cF_U)}(([x],xgU_\Sigma),([y],yhU_\Sigma))$,
  and~$xgU_\Sigma=xg'U_\Sigma$ and~$yhU_\Sigma=yh'U_\Sigma$
  with~$g',h'\in\hatorder^\times$;
  then~$bxg'U_\Sigma = yh'U_\Sigma$ so~$b\in yh' U_\Sigma (xg')^{-1}$
  and therefore~$b\in\Hom_{U}(xg'U,yh'U)$.
  Since the association is clearly multiplicative on morphisms, it defines a
  functor.
  Moreover, the functor~$\Cosets_\Sigma(U) \longrightarrow \El(\cF_U)$ defined
  by
  \begin{itemize}
    \item $xU \mapsto ([x],xU_\Sigma)$;
    \item $b\in\Hom_U(xU,yU) \mapsto b$
  \end{itemize}
  is clearly an inverse, so we obtain an equivalence as claimed.
\qed\end{proof}

We will need the following consequence of the strong approximation theorem.

\begin{lemma}\label{lem:strongapprox}
  Assume that~$\Sigma$ contains at least one prime different from~$p$ and
  that it generates~$(\Z/N\Z)^\times$.
  Then for every~$g,x\in\hatB^\times$, we have
  \[
    \Hom_\Sigma([x],[gx]) \cap gx U_\Sigma(N) x^{-1} \neq \emptyset.
  \]
\end{lemma}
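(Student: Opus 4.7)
The lemma is a strong approximation statement for the quaternion algebra $B$. Unpacking, we seek $b\in B^\times$ such that the ad\`ele $u:=x^{-1}g^{-1}bx$ lies in $V:=\hatorder\cap\hatorder_\Sigma^\times\cap U_\Sigma(N)$. A local computation yields $V_\ell=\order_\ell\cap B_\ell^\times$ at $\ell\in\Sigma$, $V_\ell=\order_\ell^\times$ at $\ell\notin\Sigma$ with $\ell\nmid N$, and $V_\ell=1+N\order_\ell$ at $\ell\mid N$, with $\nrd(V)_\ell$ equal to $\Z_\ell\setminus\{0\}$, $\Z_\ell^\times$, and $1+N\Z_\ell$ respectively. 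The hypotheses are tailor-made for strong approximation: $B$ is ramified at $\{p,\infty\}$, $\Sigma$ contains a prime $\ell_0\neq p$ at which $B$ splits (so $B_{\ell_0}^1$ is non-compact), and $\Sigma$ generates $(\Z/N\Z)^\times$ to handle the mod-$N$ obstruction.

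The first step I would carry out is to show $\Q_{>0}^\times\cdot\nrd(V)=\hatQ^\times$. Given any $y\in\hatQ^\times$, construct a positive rational $q$ by matching $\val_\ell(y)$ at every $\ell\notin\Sigma$; the exponents of primes in $\Sigma$ remain free. Using that $\Sigma$ generates $(\Z/N\Z)^\times$, these exponents can be adjusted so that $q\equiv y\pmod N$. The local quotient $y/q$ then lies in $\nrd(V)_\ell$ at every place.

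Combining this with Hasse--Schilling ($\nrd(B^\times)=\Q_{>0}^\times$ since $B$ is ramified at $\infty$), I can pick a positive rational $q$ with $\nrd(g)/q\in\nrd(V)$ and some $b_0\in B^\times$ with $\nrd(b_0)=q$. Then by strong approximation for $B^1$ at $\ell_0$ (which gives $B^1\cdot B_{\ell_0}^1=\hatB^1$), I find $c\in B^1$ such that $b:=b_0c\in B^\times$ has $x^{-1}g^{-1}b_\ell x\in V_\ell$ at every $\ell\neq\ell_0$. The local target sets are open and nonempty at each such $\ell$ by the reduced-norm bookkeeping of the previous step.

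The main obstacle will be verifying the $V$-constraint at $\ell_0$ itself, since the $\ell_0$-component of $c$ (hence of $b$) is not directly controlled by strong approximation. At $\ell_0\in\Sigma$ this is the integrality condition $u_{\ell_0}\in\order_{\ell_0}$, a lattice condition rather than a coset of a group, so it is not forced by reduced-norm data alone. I expect this to be resolved by reformulating in terms of right ideals: the condition on $b$ at all $\Sigma$-places is equivalent to $b$ lying in the right $(x\hatorder x^{-1})$-ideal $gx\hatorder x^{-1}$, whose intersection with $B$ is a nonzero right ideal of the maximal order $B\cap x\hatorder x^{-1}$. Finding $b$ then amounts to finding a generator of this ideal that is a unit at primes $\ell\notin\Sigma$ with $\ell\nmid N$ and satisfies the mod-$N$ congruence at $\ell\mid N$; the integrality at $\ell_0$ is now built into the ideal membership, and the remaining conditions are handled by Hasse--Schilling plus strong approximation as before.
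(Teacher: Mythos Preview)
Your route diverges from the paper's, and the divergence creates a real difficulty that you do not close. You aim to land directly in the \emph{monoid} $V=\hatorder\cap U_\Sigma(N)$ via $B^1$-strong-approximation away from a fixed place $\ell_0\in\Sigma$. Two problems. First, a direction slip: for the local target at $\ell$ to be nonempty you need $q/\nrd(g_\ell)\in\nrd(V_\ell)$ (so $\val_\ell q\ge\val_\ell\nrd(g)$ at $\ell\in\Sigma$), whereas your Step~1 arranges $\nrd(g)/q\in\nrd(V)$, the opposite inequality; this is repairable. Second, and more seriously, your fix for the uncontrolled place $\ell_0$ is circular. You propose to absorb the $\Sigma$-integrality into membership in the global right ideal $I=B\cap gx\hatorder x^{-1}$ and then say the remaining local-unit and mod-$N$ conditions are ``handled by Hasse--Schilling plus strong approximation as before''. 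But ``as before'' is precisely the argument that lost control at $\ell_0$; now the $\ell_0$-constraint has been moved into ideal membership, a lattice condition you still cannot impose by $B^1$-approximation away from $\ell_0$. To finish you would need an independent input (an Eichler-type statement that every right ideal class of $\order'$ contains an element of $\Sigma$-norm satisfying a prescribed congruence mod $N$), and that statement is essentially the lemma you are trying to prove.

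The paper avoids all of this by working with the \emph{group} $H=xU_\Sigma(N)x^{-1}$ instead of the monoid $V$: at $\ell\in\Sigma$ one has $H_\ell=B_\ell^\times$, so there is no integrality condition whatsoever. Strong approximation then yields directly $B^\times\backslash\hatB^\times/H\cong \Q_{>0}^\times\backslash\hatQ^\times/\nrd(H)=1$ (this is where the hypothesis that $\Sigma$ generates $(\Z/N\Z)^\times$ enters), hence $g=b_0\cdot xu_0^{-1}x^{-1}$ with $b_0\in B^\times$ and $u_0\in U_\Sigma(N)$. Integrality is obtained \emph{afterwards} by a single central scalar: choose $\lambda\in\Z$ a product of primes in $\Sigma$ with $\lambda\equiv 1\pmod N$ and $\lambda u_0\in\hatorder$, and set $b=\lambda b_0$, $u=\lambda u_0$. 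Then $u\in\hatorder\cap U_\Sigma(N)$ and $b=gxux^{-1}$ lies in the required intersection. This scaling trick --- decouple the group-theoretic approximation (no constraint at $\Sigma$) from the integrality (imposed later by a global $\lambda$) --- is the idea missing from your plan.
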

\begin{proof}
  Let~$x\in \hatB^\times$, and
  let~$H = x U_\Sigma(N) x^{-1}$.
  Since~$\Sigma$ contains a prime different from~$p$, strong approximation
  holds (\cite[Theorem 28.5.3, see also 28.5.5]{VoightBook}), so the reduced 
  norm induces a bijection
  \[
    B^\times \lquo \hatB^\times / H \longrightarrow \Q_{>0}^\times \lquo
    \hatQ^\times / \nrd(H).
  \]
  On the other hand, the group~$\Q_{>0}^\times \lquo \hatQ^\times / \nrd(H)$ is isomorphic
  to a quotient of~$(\Z/N\Z)^\times/\langle\Sigma\rangle$. Since~$\Sigma$
  generates~$(\Z/N\Z)^\times$ the latter quotient is trivial, so~$\hatB^\times =
  B^\times H$.
  Now let~$g\in\hatB^\times$. Write~$g = b_0xu_0^{-1}x$ with~$b_0\in B^\times$
  and~$u_0\in U_\Sigma(N)$, so that~$b_0 = gxu_0x^{-1}$.
  Let~$\lambda$ be a product of the primes in~$\Sigma$ such that~$\lambda
  u_0\in\hatorder$ and~$\lambda \equiv 1 \bmod N$.
  Let~$b = \lambda b_0$ and~$u = \lambda u_0$, which satisfy~$gxux^{-1} = b\in
  B^\times$ and~$u\in \hatorder\cap U_\Sigma(N)$. Then
  $b\in \Hom_\Sigma([x],[gx]) \cap gx U_\Sigma(N) x^{-1}$, which is therefore
  not empty.
\qed\end{proof}

We are now in position to state and prove the main result of this subsection.

\begin{definition}\label{def:congrprop}
  Let~$\cF\colon \Cosets_\Sigma(\hatorder^\times) \to \Sets$ be a functor. We say
  that~$\cF$ is \emph{of $N$-congruence type} if
  $\cF$ is isomorphic to a disjoint union of functors~$\cF_U$.
  We say that~$\cF$ \emph{satisfies the $\modN$-congruence property}
  if for every~$x\in\hatB^\times$ and for every~$a,b\in \End_\Sigma([x])$ such
  that~$a - b \in N\cdot \End([x])$, we have~$\cF(a) = \cF(b)$.
\end{definition}

\begin{theorem}\label{thm:congrprop}
  Assume that~$\Sigma$ contains at least one prime different from~$p$ and
  that it generates~$(\Z/N\Z)^\times$.
  Let~$\cF\colon \Cosets_\Sigma(\hatorder^\times) \to \Sets$ be a functor.
  Then~$\cF$ is of $N$-congruence type if and only if~$\cF$ satisfies the
  $\modN$-congruence property.
  More precisely, assume that~$\cF$ satisfies the $\modN$-congruence
  property. Then:
  \begin{enumerate}[(1)]
    \item\label{item:bijU} for every morphism~$f$ in~$\Cosets_\Sigma(\hatorder^\times)$, the
      map~$\cF(f)$ is a bijection; and
    \item\label{item:actionU} the morphism~$\End_\Sigma([1]) \to
      (\order/N\order)^\times$ is surjective, inducing an action of the group~$G
      = (\order/N\order)^\times$ on~$\cF([1])$.
  \end{enumerate}
  Choose a $G$-equivariant bijection~$\cF([1]) \cong \bigsqcup_i G/H_i$, and
  for all~$i$, let~$U_i$ be the preimage of~$H_i$ under the
  quotient map~$\hatorder^\times \to (\order/N\order)^\times$. Then:
  \begin{enumerate}[(1)]\setcounter{enumi}{2}
    \item\label{item:congrprop} there exists an isomorphism of functors
      \[
        \cF \cong \bigsqcup_i \cF_{U_i}.
      \]
  \end{enumerate}
\end{theorem}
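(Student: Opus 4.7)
For the ``only if'' direction, each $\cF_U$ with $U\supseteq U(N)$ satisfies the $\modN$-congruence property: given $a,b\in\End_\Sigma([x])$ with $a-b=Nc$ for some $c\in\End([x])$, we have $(bxh)^{-1}(axh)=1+Nh^{-1}(x^{-1}b^{-1}x)(x^{-1}cx)h$, and I would verify this lies in $U_\Sigma$ locally: trivially at $\ell\in\Sigma$ where $U_\Sigma|_\ell=B_\ell^\times$; in $\order_\ell^\times\subseteq U_\Sigma|_\ell$ at $\ell\nmid N\Sigma$ (using that $\nrd(b)$ is a $\Sigma$-product, hence a unit at $\ell$); and in $1+N\order_\ell\subseteq U|_\ell$ at $\ell\mid N$.

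For the converse, assume the $\modN$-congruence property. Item~(2) follows from Lemma~\ref{lem:strongapprox} with $x=1$: for every $g\in\hatorder^\times$, there exists $\alpha\in\End_\Sigma([1])\cap gU_\Sigma(N)$, so $\alpha-g\in N\hatorder_\Sigma$, hence $\alpha\equiv g\pmod{N\order}$ since $\Sigma$-primes are coprime to $N$. Surjectivity of $\hatorder^\times\to(\order/N\order)^\times$ transfers to $\End_\Sigma([1])$, and the congruence property yields the $G$-action on $\cF([1])$. The same argument produces a $G_x=(\End([x])/N\End([x]))^\times$-action on $\cF([x])$ at every $[x]$.

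For item~(1), given $f\colon[x]\to[y]$, Lemma~\ref{lem:strongapprox} yields a morphism $h\colon[y]\to[x]$. The composition $hf\in\End_\Sigma([x])$ has degree a $\Sigma$-product, hence $hf\bmod N$ is a unit in $G_x$; by the congruence property $\cF(hf)$ coincides with the bijective action of this unit on $\cF([x])$. Symmetrically $\cF(fh)$ is bijective, so $\cF(f)$ is a bijection.

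For item~(3), the main step, decompose $\cF([1])=\bigsqcup_i G/H_i$ with chosen orbit representatives, let $U_i\subset\hatorder^\times$ be the preimage of $H_i$, and set $\cF'=\bigsqcup_i\cF_{U_i}$. The ``only if'' direction gives $\cF'$ the congruence property, so items (1) and (2) hold for it. The identification $\cF'([1])=\bigsqcup_i\hatorder_\Sigma^\times/U_{i,\Sigma}\cong\bigsqcup_i G/H_i$ combined with the choice of representatives provides a $G$-equivariant bijection $\psi_1\colon\cF'([1])\xrightarrow{\sim}\cF([1])$. For any $[x]$ and any morphism $f\colon[1]\to[x]$, define $\psi_x^f=\cF(f)\circ\psi_1\circ\cF'(f)^{-1}$, a bijection by item~(1) for both functors. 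The main obstacle is independence of $f$: given another $f'\colon[1]\to[x]$, Lemma~\ref{lem:strongapprox} produces a morphism $h\colon[x]\to[1]$, so $hf,hf'\in\End_\Sigma([1])$ are units modulo $N$; cancelling $\cF(h)$ and invoking the congruence property give $\cF(f')^{-1}\cF(f)=(hf')^{-1}(hf)\in G$ acting on $\cF([1])$, with the identical formula holding for $\cF'$ on $\cF'([1])$. The $G$-equivariance of $\psi_1$ then forces $\psi_x^f=\psi_x^{f'}$, and naturality follows by taking $gf$ as the defining morphism for $\psi_y$ when $g\colon[x]\to[y]$, completing item~(3).
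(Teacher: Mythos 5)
Your argument is correct, and it rests on the same two pillars as the paper's proof: Lemma~\ref{lem:strongapprox} (strong approximation) and the $\modN$-congruence property, used to reduce everything to the group $G=(\order/N\order)^\times$ acting on the fibre over $[1]$. Where you genuinely diverge is in the globalization step for item~(\ref{item:congrprop}). The paper first proves two auxiliary congruence claims and upgrades the $\hatorder^\times$-action to an action of all of $\hatB^\times$ on $\bigsqcup_{[x]}\cF([x])$, then defines $\psi_{[x]}(xgU_\Sigma)=x\cdot\psi_{[1]}(gU_\Sigma)$ and checks naturality by an explicit diagram chase comparing $fb$ and $cd$ inside a common coset $yuU_\Sigma(N)$. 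You instead transport $\psi_1$ along an arbitrary morphism $f\colon[1]\to[x]$ using bijectivity of $\cF(f)$ and $\cF'(f)$, and reduce both well-definedness and naturality to the single identity that $\cF(f')^{-1}\cF(f)$ is the action of $[hf']^{-1}[hf]\in G$, obtained by composing with some $h\colon[x]\to[1]$; this avoids constructing the ad\'elic action altogether and is a clean, somewhat more economical packaging. Your detailed local verification of the ``only if'' direction (which the paper treats as clear) and your variant of item~(\ref{item:bijU}) via the $G$-action rather than finite order of the reduction are both fine.

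One step you assert but should spell out, since your independence argument hinges on it: the $G$-equivariance of $\psi_1$ requires that under the identification $\cF_{U_i}([1])=\hatorder_\Sigma^\times/U_{i,\Sigma}\cong\hatorder^\times/U_i\cong G/H_i$, the action on $\cF'([1])$ defined via lifts (your item~(\ref{item:actionU}) applied to $\cF'$) coincides with left translation. Concretely, if $\tilde g\in\hatorder^\times$ lifts $g\in G$ and $\alpha\in\End_\Sigma([1])$ is a lift with $\alpha\equiv \tilde g \pmod{N\hatorder}$, one needs $\alpha hU_{i,\Sigma}=\tilde g hU_{i,\Sigma}$ for all $h\in\hatorder_\Sigma^\times$; this holds because $\tilde g^{-1}\alpha\in 1+N\hatorder\subseteq U_\Sigma(N)$ and $\hatorder_\Sigma^\times$ normalises $U_\Sigma(N)$ (the same normalisation fact the paper invokes in its Step~4). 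With that short verification added, your proof is complete.
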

\begin{proof}
  First, every functor of $N$-congruence type clearly satisfies the $(\bmod
  N)$-congruence property.

  Assume that~$\cF$ satisfies the $\modN$-congruence property.

  \textbf{Step 1}. We claim that all~$\cF(f)$ are bijections.
  Indeed, let~$f\in\End_\Sigma([x])$ be an endomorphism
  in~$\Cosets_\Sigma(\hatorder^\times)$. Then its reduction modulo~$N$ has finite
  order, say~$k$, so that~$f^k-1 \in Nx\hatorder x^{-1}$. By the $\modN$-congruence
  property this implies~$\cF(f)^k = \cF(1) = \id$, so that~$\cF(f)$ is
  invertible.
  Now let~$f\colon [x]\to [y]$ be an arbitrary morphism
  in~$\Cosets_\Sigma(\hatorder^\times)$. Then there exists a
  morphism~$g\colon [y]\to [x]$ such that~$fg = \deg(f)\in\End_\Sigma([y])$ and
  $gf = \deg(f)\in \End_\Sigma([x])$. By the endomorphism case, this proves
  that~$\cF(fg)$ and~$\cF(gf)$ are invertible, hence that~$\cF(f)$ is.
  This proves~(\ref{item:bijU}).


  \textbf{Step 2}. We claim that for all~$x,y\in \hatB^\times$ and
  all~$a,b\in\Hom_\Sigma([x],[y])$, if~$a-b\in Ny\hatorder x^{-1}$ then~$\cF(a)
  = \cF(b)$. Indeed under these conditions there exists~$c\in
  \Hom_\Sigma([y],[x])$. Then~$ca-cb\in \End_\Sigma([x]) \cap Nx\hatorder
  x^{-1}$. By the $\modN$-congruence property we have~$\cF(ca) = \cF(cb)$.
  Since~$\cF(c)$ is invertible, this proves~$\cF(a) = \cF(b)$.

  We also claim that for all~$x,y\in \hatB^\times$ and
  all~$a,b\in\Hom_\Sigma([x],[y])$, if~$b\in axU_\Sigma(N) x^{-1}$ then~$\cF(a)
  = \cF(b)$. Indeed the condition implies that
  \[
    b\in a(1+Nx\hatorder_\Sigma x^{-1})
    = a + Nax\hatorder_\Sigma x^{-1}
    = a + Ny\hatorder_\Sigma x^{-1}.
  \]
  Since~$\hatorder \cap N\hatorder_\Sigma = N\hatorder$, the previous claim
  applies, and therefore~$\cF(a)=\cF(b)$.

  \textbf{Step 3}. Inspired by Proposition~\ref{prop:FUcosetsU}, we are going to define
  an action of~$\hatB^\times$ on~$\bigsqcup_{[x]}\cF([x])$.
  Let~$g,x\in\hatB^\times$. By Lemma~\ref{lem:strongapprox}, there
  exists~$b\in\Hom_\Sigma([x],[gx])\cap gx U_\Sigma(N) x^{-1}$.
  For~$A\in\cF([x])$, we define~$g\cdot A = \cF(b)(A)\in\cF([gx])$.
  To see that this is well-defined, let~$b'\in\Hom_\Sigma([x],[gx])\cap gx
  U_\Sigma(N) x^{-1}$ be another element. We have~$b'\in bxU_\Sigma(N)x^{-1}$,
  so that~$\cF(b) = \cF(b')$.

  The defined action is multiplicative, because
  when~$d\in\Hom_\Sigma([x],[hx])\cap hxU_\Sigma(N)x^{-1}$
  and~$c\in\Hom_\Sigma([hx],[ghx]) \cap ghx U_\Sigma(N)(hx)^{-1}$
  we have~$cd\in\Hom_\Sigma([x],[ghx]) \cap ghx U_\Sigma(N) x^{-1}$: the action
  of~$gh$ is given by~$\cF(cd)=\cF(c)\cF(d)$.

  We therefore get an action of~$\hatB^\times$ on~$\bigsqcup_{[x]}\cF([x])$ with
  the following properties for~$g,x\in\hatB^\times$:
  \begin{itemize}
    \item the action of~$g$ induces a bijection~$\cF([x]) \to \cF([gx])$;
    \item $x\hatorder^\times x^{-1}$ stabilises~$\cF([x])$;
    \item $x U(N) x^{-1}$ acts trivially on~$\cF([x])$.
  \end{itemize}
  In particular, we obtain an action of~$\hatorder^\times$ on~$\cF([1])$.
  By decomposing this action into orbits, we obtain
  an~$\hatorder^\times$-equivariant bijection
  \[
    \psi_{[1]} \colon \bigsqcup_{i\in I} \hatorder^\times/U_i \longrightarrow
    \cF([1]),
  \]
  where the~$U_i$ are subgroups of~$\hatorder^\times$ containing~$U(N)$.
  Recalling the definition of the functors~$\cF_{U_i}$ (Definition~\ref{def:FU}) we see that this is the
  same as 
  an~$\hatorder^\times$-equivariant bijection
  \[
    \psi_{[1]} \colon \bigsqcup_{i\in I} \cF_{U_i}([1]) \longrightarrow
    \cF([1]).
  \]

  In fact, the action of~$\hatorder^\times$ factors through~$\hatorder^\times
  \to \hatorder^\times/U(N) \cong (\order/N\order)^\times$, and comes from the
  application of Lemma~\ref{lem:strongapprox} to~$x=1$
  and~$g\in\hatorder^\times$, from which we see that~(\ref{item:actionU}) holds
  and one can choose the~$U_i$ compatibly with the~$H_i$ from the statement of
  the theorem.

  \textbf{Step 4}. We extend~$\psi_{[1]}$ to an isomorphism of functors~$\psi\colon \bigsqcup_{i\in
  I} \cF_{U_i} \to \cF$.
  Let~$x\in\hatB^\times$. We define
  \[
    \psi_{[x]} \colon \bigsqcup_{i\in I} \cF_{U_i}([x]) \longrightarrow
    \cF([x])
  \]
  by setting for every~$U=U_i$ and every~$xgU_\Sigma\in\cF_{U}([x])$ with~$g\in
  \hatorder_\Sigma^\times$,
  \[
    \psi_{[x]}(xgU_\Sigma) = x\cdot\psi_{[1]}(gU_\Sigma).
  \]
  The map~$\psi_{[x]}$ is well-defined since~$xgU_\Sigma = xg'U_\Sigma$
  implies~$gU_\Sigma = g'U_\Sigma$.
  In addition, $\psi_{[x]}$ depends only on~$[x]$: for all~$u\in\hatorder^\times$ we
  have~$(xu)\cdot\psi_{[1]}(u^{-1}gU_\Sigma) = x\cdot\psi_{[1]}(gU_\Sigma)$
  by~$\hatorder^\times$-equivariance.
  Since the multiplication by~$x^{-1}$ from~$\cF_U([x])$ to~$\cF_U([1])$, the
  map~$\psi_{[1]}$ and the action of~$x$ from~$\cF([1])$ to~$\cF([x])$ are all
  bijections, the map~$\psi_{[x]}$ is a bijection.
  We now prove that~$\psi = (\psi_{[x]})_{[x]}$ is a morphism of functors.

  The proof will follow the following diagram:
  \[
  \xymatrix{
    xgU_\Sigma \ar@{|->}[rrr] \ar@{|->}[ddd] 
    &
    &
    & \cF(b)\psi_{[1]}(gU_\Sigma) \ar@{|->}[dd]
    \\
    & \cF_U([x]) \ar[r]^{\psi_{[x]}} \ar[d]_{\cF_U(f)}
    & \cF([x]) \ar[d]^{\cF(f)}
    &
    \\
    & \cF_U([y]) \ar[r]_{\psi_{[y]}}
    & \cF([y])
    & \cF(fb)\psi_{[1]}(gU_\Sigma)
    \\
    yugU_\Sigma \ar@{|->}[rr]
    &
    & \cF(cd)\psi_{[1]}(gU_\Sigma) \ar@{=}[ur]
    &
  }
  \]

  Let~$x,y\in\hatB^\times$, $f\in\Hom_\Sigma([x],[y])$ and~$U$ be one of
  the~$U_i$; we will prove
  that~$\cF(f)\circ \psi_{[x]} = \psi_{[y]} \circ \cF_U(f)$ holds on~$\cF_U([x])$.
  Let~$xgU_\Sigma\in \cF_U([x])$.
  We have
  \[
    \cF(f)\circ \psi_{[x]} (xgU_\Sigma)
    = \cF(f)\bigl( x\cdot \psi_{[1]}(gU_\Sigma) \bigr)
    = \cF(fb)\psi_{[1]}(gU_\Sigma),
  \]
  where~$b\in\Hom_\Sigma([1],[x])\cap xU_\Sigma(N)$.
  Write~$f = yux^{-1}$ with~$u\in\hatorder\cap \order_\Sigma^\times$, and note
  that~$fxgU_\Sigma = yugU_\Sigma \in \cF_U([y])$.
  We therefore have
  \[
    \psi_{[y]} \circ \cF_U(f) (xgU_\Sigma)
    = \psi_{[y]} (fxgU_\Sigma)
    = y\cdot \psi_{[1]}(ugU_\Sigma)
    = \cF(c)\psi_{[1]}(ugU_\Sigma)
  \]
  where~$c\in \Hom_\Sigma([1],[y]) \cap y U_\Sigma(N)$.
  Now~$u\in v U_\Sigma(N)$ for some~$v\in \hatorder^\times$, so that~$ug
  U_\Sigma = vg U_\Sigma$ since~$\hatorder_\Sigma^\times$
  normalises~$U_\Sigma(N)$, and by equivariance of~$\psi_{[1]}$ we have
  \[
    \psi_{[1]}(ugU_\Sigma)
     = \psi_{[1]}(vgU_\Sigma)
     = v\cdot \psi_{[1]}(gU_\Sigma)
     = \cF(d) \psi_{[1]}(gU_\Sigma)
  \]
  where~$d\in\End_\Sigma([1]) \cap vU_\Sigma(N)$.
  We get
  \[
    \psi_{[y]} \circ \cF_U(f) (xgU_\Sigma)
    = \cF(cd)\psi_{[1]}(gU_\Sigma).
  \]
  We finally compare~$fb$ and~$cd$.
  We have
  \[
    fb \in (yux^{-1})xU_\Sigma(N) = yuU_\Sigma(N),
  \]
  and
  \[
    cd \in y U_\Sigma(N) v U_\Sigma(N) = yv U_\Sigma(N) = yu U_\Sigma(N).
  \]
  Since~$fb$ and~$cd$ both belong to~$\Hom_\Sigma([1],[y]) =
  \Hom_\Sigma([1],[yu])$, this proves that~$\cF(fb) = \cF(cd)$.
  This proves that~$\psi$ is an isomorphism of functors, so that~$\cF$ is of
  $N$-congruence type, proving~(\ref{item:congrprop}) and concluding the proof.
\qed\end{proof}

\subsection{Associated graphs and equidistribution}\label{subsec:equidistr}

In this subsection, we study the graphs of interest and prove our main
equidistribution theorem: Theorem~\ref{thm:equidistrE} and its companion
Proposition~\ref{prop:companionequidistr}.

We first introduce a categorical construction of graphs generalising
Definition~\ref{def:graphF}.

\begin{definition}\label{def:graphC}
  Let~$\cC$ be category with finitely many isomorphism classes of objects, finite
  automorphism groups, and equipped with a degree map. We define the graph~$\gr(\cC)$ with:
  \begin{itemize}
    \item vertices: isomorphism classes of objects in~$\cC$;
    \item edges: let~$x\in \cC$; the set of edges from the vertex corresponding
      to~$x$ is the set of classes of morphisms from~$x$ modulo the
      relation~$(f\colon x\to y) \sim (g\colon x\to z)$ if and only there
      exists~$u\in\Isom_{\cC}(y,z)$ such that~$g=uf$;
      the endpoint of the edge corresponding to~$f\colon x \to y$ is the
      isomorphism class of~$y$.
      In other words, the set of edges between the classes
      of~$x,y\in\cC$ is~$\Aut(y)\lquo \Hom(x,y)$.
  \end{itemize}

  The \emph{degree} of an edge is the degree of the corresponding morphism.

  We define a measure on the set of vertices of~$\gr(\cC)$ by giving each
  vertex~$v$ measure~$\frac{1}{\#\Aut(x)}$
  where~$v$ corresponds to~$x\in\cC$, and we
  write~$L^2(\gr(\cC))$ the Hilbert space of complex functions on the set of
  vertices of~$\gr(\cC)$.

  For every prime~$\ell$, we define an adjacency operator~$A_\ell$
  on~$L^2(\gr(\cC))$ given by
  \[
    A_\ell F(x) = \sum_{x\to y} F(y),
  \]
  where the sum runs over edges of degree~$\ell$ leaving~$x$.
\end{definition}

\begin{remark}
  For every functor~$\cF$ as in Definition~\ref{def:graphF}, we have~$\cG_\cF =
  \gr(\El(\cF))$.
  Every degree-preserving equivalence of categories~$\cC\cong\cD$ induces an
  isomorphism of graphs~$\gr(\cC)\cong\gr(\cD)$ compatible with all the
  structure from Definition~\ref{def:graphC}.
\end{remark}

The following lemma relates the graphs obtained from our quaternionic categories
to automorphic forms.

\begin{lemma}\label{lem:graphhecke}
  The category~$\Cosets_\Sigma(U)$ and its associated graph have the following
  properties:
  \begin{enumerate}[(1)]
    \item\label{item:graphdoublequo} Two objects~$x,y\in \hatB^\times / U$ are isomorphic if and only if they have
      the same image in the quotient~$B^\times \lquo \hatB^\times / U$.

    \item The projection to~$B^\times \lquo \hatB^\times / U$ of a Haar measure on~$\hatB^\times$ coincides with the
      measure on the set of vertices of~$\gr(\Cosets_\Sigma(U))$.

    \item\label{item:graphedges} For every~$x\in\hatB^\times$, the map
      \[
        \edg\colon u\in \hatorder\cap U_\Sigma
        \longmapsto
        1\in\Hom_U(xU,xu^{-1}U)
      \]
      induces a bijection between~$U\lquo (\hatorder\cap U_\Sigma)$ and the set of
      edges leaving the vertex~$B^\times x U$ in~$\gr(\Cosets_\Sigma(U))$.

    \item\label{item:graphhecke} For every prime~$\ell\in\Sigma$ different from~$p$, the adjacency
      operator~$A_\ell$ coincides with the adjoint of the Hecke
      operator~$T_\ell$ on~$L^2(B^\times \lquo \hatB^\times / U)$.
  \end{enumerate}
\end{lemma}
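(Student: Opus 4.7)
My plan is to handle the four assertions in order, each by careful unpacking of the definitions; the recurring tool is the identity $\hatorder^\times \cap U_\Sigma = U$, which follows from the factorisation $U_\Sigma = U \cdot U_\Sigma(N)$ and the containment $U \subseteq \hatorder^\times$.

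For~(1), I would observe that an isomorphism $xU \to yU$ in $\Cosets_\Sigma(U)$ is an element $b = yux^{-1} \in B^\times$ with $u \in \hatorder \cap U_\Sigma$, whose inverse $b^{-1} = xu^{-1}y^{-1}$ must itself lie in $\Hom_U(yU,xU)$; this forces $u^{-1} \in \hatorder \cap U_\Sigma$, hence $u \in \hatorder^\times \cap U_\Sigma = U$, so that $bxU = yU$ in $\hatB^\times$. The converse is immediate. Part~(2) is then a standard measure-theoretic computation: normalising so that $\mu(U)=1$, each coset in $\hatB^\times/U$ carries mass~$1$, and quotienting further by $B^\times$ on the left divides by the stabiliser $B^\times \cap xUx^{-1}$ of $xU$, which by~(1) is precisely $\Aut_{\Cosets_\Sigma(U)}(xU)$, matching the vertex measure of Definition~\ref{def:graphC}.

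For~(3), given any $b \in \Hom_U(xU,yU)$ I would write $b = yux^{-1}$ with $u \in \hatorder \cap U_\Sigma$ and check that $b^{-1}$ is an \emph{isomorphism} $yU \to xu^{-1}U$ (its associated unit being $1 \in U$). Composing yields $b^{-1}\circ b = 1\colon xU \to xu^{-1}U$, so every edge out of $[xU]$ admits a representative of the form $\edg(u)$; this gives surjectivity. For injectivity, the equivalence $\edg(u) \sim \edg(u')$ requires that $1 \in B^\times$ define an isomorphism $xu^{-1}U \to xu'^{-1}U$, which by~(1) unfolds to $u'u^{-1} \in U$, i.e., $Uu = Uu'$, matching exactly the left-coset space $U \lquo (\hatorder \cap U_\Sigma)$.

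For~(4), using~(3), the sum $A_\ell F(B^\times xU)$ runs over classes $Uu \in U \lquo (\hatorder \cap U_\Sigma)$ of reduced norm $\ell\hatZ$, each contributing $F(B^\times xu^{-1}U)$. Since $\ell \in \Sigma$ and $\ell \neq p$, the local algebra $B_\ell$ is unramified and $\order_\ell \cong M_2(\Z_\ell)$, so the Cartan decomposition identifies this index set with $U \lquo U\delta_\ell U$; substituting $v = u^{-1}$ swaps left for right cosets and $\delta_\ell$ for $\delta_\ell^{-1}$, rewriting the sum as $\sum_{vU \in U\delta_\ell^{-1}U/U} F(B^\times xvU)$, which is exactly the preliminary formula for $T_\ell^* F$. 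The main obstacle throughout is navigating the distinction between the multiplicative monoid $\hatorder \cap U_\Sigma$ and the group $U$, both for the two-sided inversion arguments in~(1) and~(3) and for the left/right coset bookkeeping in~(4); once $\hatorder^\times \cap U_\Sigma = U$ is in hand, the remaining verifications are essentially forced.
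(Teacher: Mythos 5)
Your proof is correct and takes essentially the same route as the paper's: unwinding the definitions with the key identity $\hatorder^\times \cap U_\Sigma = U$, representing every edge by the morphism $1$, and identifying the degree-$\ell$ cosets with $U\lquo U\delta_\ell U$ via the local Cartan decomposition before substituting $v = u^{-1}$ to recover $T_\ell^*$. The only cosmetic difference is in~(1), where the paper characterises isomorphisms as the degree-$1$ morphisms instead of your two-sided-inverse argument; both reduce to the same identity $\hatorder^\times \cap U_\Sigma = U$.
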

\begin{proof}\hfill
  \begin{enumerate}[(1)]
    \item Isomorphisms in~$\Cosets_\Sigma(U)$ are exactly morphisms of
      degree~$1$, so that the set of isomorphisms between two
      cosets~$xU,yU$ is~$B^\times \cap yUx^{-1}$, i.e. the
      set of elements~$b\in B^\times$ such that~$bxU = yU$. This proves the claim.
    \item Since cosets of~$U$ are open and form a disjoint union, by translation
      invariance every element of~$\hatB^\times/U$ has the same nonzero measure.
      We normalise the Haar measure so that each coset of~$U$ has measure~$1$.
      For every~$x\in\hatB^\times$, every fiber of the projection map~$xU
      \mapsto B^\times x U$ has cardinality~$\# (B^\times\cap xUx^{-1}) =
      \#\Aut_U(xU)$, so the projected measure of~$B^\times x U$ is the inverse
      of this cardinality, as claimed.
    \item Let~$x\in \hatB^\times$. For every~$u\in \hatorder\cap U_\Sigma$, we
      have~$1 = (xu^{-1})ux \in xu^{-1}(\hatorder\cap U_\Sigma)x \cap B^\times =
      \Hom_U(xU,xu^{-1}U)$, so the map~$\edg$ is well-defined.
      Let~$f\in\Hom_U(xU,yU)$ represent an edge~$x\to y$
      in~$\gr(\Cosets_\Sigma(U))$. Then~$1\in\Hom_U(xU,f^{-1}yU)$ represents the
      same edge, which is therefore~$\edg(y^{-1}fx)$.
      Moreover, two morphisms~$f\in \Hom_U(xU,yU)$ and~$g\in \Hom_U(xU,zU)$
      represent the same edge if and only if there exists~$b\in B^\times$ such
      that~$g = bf$. For morphisms in the image of~$\edg$, this can only happen
      with~$b=1$, so the edge~$\edg(u)$ is completely determined by its
      endpoint~$xu^{-1}U$, i.e. by the coset~$Uu$.
    \item Consider the cosets~$Uu\in U\lquo (\hatorder\cap U_\Sigma)$ such that~$\nrd(u)\hatZ
      = \ell\hatZ$. Then for every~$\ell'\neq \ell$, the $\ell'$-component
      of~$u$ is in the~$\ell'$-component of~$U$, so we may replace it by~$1$.
      Choosing an
      isomorphism~$\order\otimes\Z_\ell \cong M_2(\Z_\ell)$, the possible
      cosets correspond to the cosets in~$\GL_2(\Z_\ell)\lquo \GL_2(\Q_\ell)$
      whose determinant has valuation~$1$: these are exactly the cosets
      of~$\GL_2(\Z_\ell)$ that belong to the double coset
      \[
        \GL_2(\Z_\ell)
        \begin{pmatrix}\ell & 0 \\ 0 & 1\end{pmatrix}
        \GL_2(\Z_\ell).
      \]
      Let~$F\in L^2(\gr(\Cosets_U(\Sigma)))$.
      From the above we have
      \[
        A_\ell F(B^\times xU) = \sum_{Uu\in U\lquo U\delta_\ell U} F(B^\times xu^{-1}U).
      \]
      This is the adjoint of the Hecke operator~$T_\ell$, as claimed.
  \end{enumerate}
\qed\end{proof}

The following proposition is the quaternionic version of our equidistribution
result.

\begin{proposition}\label{prop:equidistr}
  Let~$L^2_{\nrd}(\gr(\Cosets_\Sigma(U))) \subset
  L^2(\gr(\Cosets_\Sigma(U)))$ denote the subspace of functions that factor
  through the reduced norm map
  \[
    B^\times \lquo \hatB^\times / U \longrightarrow \Q^\times_{>0} \lquo
    \hatQ^\times / \nrd(U),
  \]
  and let~$L^2_0(\gr(\Cosets_\Sigma(U)))$ denote the orthogonal complement
  of the subspace~$L^2_{\nrd}(\gr(\Cosets_\Sigma(U)))$.
  Then, for every~$\ell\in\Sigma$ different from~$p$, the adjacency operator~$A_\ell$
  is a normal operator,
  stabilises~$L^2_{\nrd}(\gr(\Cosets_\Sigma(U)))$
  and~$L^2_0(\gr(\Cosets_\Sigma(U)))$, and its operator norm
  on~$L^2_0(\gr(\Cosets_\Sigma(U)))$ is at most~$2\sqrt{\ell}$.
  Moreover, the~$A_\ell$ for~$\ell\in\Sigma$ pairwise commute.
\end{proposition}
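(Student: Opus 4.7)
My plan is to use Lemma~\ref{lem:graphhecke}(\ref{item:graphhecke}) to translate everything into a statement about adjoints of Hecke operators acting on the automorphic space $L^2(B^\times\lquo\hatB^\times/U)$, and then apply Jacquet--Langlands together with Deligne's bounds to control the spectrum on the orthogonal complement of the characters of reduced norm.

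First I would dispose of assertions~(1), (4), and the stability of $L^2_{\nrd}$ under $A_\ell$ by direct manipulation. By Lemma~\ref{lem:graphhecke}(\ref{item:graphhecke}), under the identification of the vertex measure from item~(2) of that lemma, we have $A_\ell = T_\ell^*$. Since $\ell\in\Sigma$ does not divide $N$, the preliminaries recall that the operators $T_\ell$ for $\ell\nmid N$ are pairwise commuting and normal, hence so are the $A_\ell$. For the stability of $L^2_{\nrd}$, note that every $u\in U\delta_\ell^{-1}U$ satisfies $\nrd(u)\hatZ = \ell^{-1}\hatZ$ (since $\nrd(\delta_\ell)=\ell$), so the coset $\nrd(xu)\nrd(U)\in\Q^\times_{>0}\lquo\hatQ^\times/\nrd(U)$ depends only on that of $\nrd(x)$. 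Hence if $F$ factors through $\nrd$, so does $A_\ell F(B^\times xU)=\sum_{uU\in U\delta_\ell^{-1}U/U}F(B^\times xuU)$; a symmetric computation shows that $T_\ell=A_\ell^*$ also preserves $L^2_{\nrd}$, so $A_\ell$ preserves its orthogonal complement $L^2_0$.

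The main work is the bound $\|A_\ell|_{L^2_0}\|\leq 2\sqrt{\ell}$, for which I would appeal to the representation-theoretic decomposition of $L^2(B^\times\lquo\hatB^\times/U)$. Since $B$ is ramified at infinity, $(B\otimes\R)^\times$ is compact modulo center, so the space of automorphic forms on $B^\times$ at the chosen ``trivial'' infinite type is exactly this finite-dimensional Hilbert space, and it decomposes as a direct sum of $U$-invariants of irreducible admissible representations $\pi$ of $\hatB^\times$. The key observation is that the $U$-invariants of the one-dimensional $\pi$ give precisely $L^2_{\nrd}$: the commutator subgroup of $B^\times$ equals the norm-one subgroup, so every one-dimensional $\pi$ factors as $\chi\circ\nrd$ for some finite-order Hecke character $\chi$, and conversely any such character yields a one-dimensional automorphic representation whose $U$-fixed vectors are functions in $L^2_{\nrd}$. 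Consequently $L^2_0$ is the sum of $U$-invariants of the infinite-dimensional irreducible components.

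For each such infinite-dimensional $\pi$, I apply the Jacquet--Langlands correspondence to obtain a cuspidal automorphic representation $\pi'$ of $\GL_2(\A)$ that is a discrete series (square-integrable) at the places $\infty$ and $p$ where $B$ ramifies, and matches $\pi$ at every other place, including $\ell$. Under this correspondence the (adjoint) Hecke eigenvalues at $\ell$ agree, so on each isotypic component of $L^2_0$ the operator $A_\ell = T_\ell^*$ acts by a scalar equal (up to complex conjugation) to the $\ell$-th Hecke eigenvalue of a classical holomorphic newform obtained from $\pi'$. Since $\ell\nmid N$ and $\ell\neq p$, the local component $\pi'_\ell$ is an unramified principal series whose Satake parameters satisfy $|\alpha_\ell|=|\beta_\ell|=\sqrt{\ell}$ by Deligne's bound (the Ramanujan--Petersson conjecture for holomorphic cusp forms), giving $|a_\ell|\leq 2\sqrt{\ell}$. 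Because $A_\ell$ is normal, its operator norm on $L^2_0$ equals the maximum absolute value of these eigenvalues, which concludes the proof.

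The main obstacle is the clean bookkeeping in the third step: one has to verify that the one-dimensional automorphic representations really exhaust $L^2_{\nrd}$ (using that $[B^\times,B^\times]=B^1$), and that the Jacquet--Langlands transfer lands in cuspidal representations so that Deligne's theorem applies at the prime $\ell\nmid Np$. Everything else is a direct manipulation of the explicit Hecke coset description provided by Lemma~\ref{lem:graphhecke}.
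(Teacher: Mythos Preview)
Your proposal is correct and follows essentially the same route as the paper: identify $A_\ell$ with $T_\ell^*$ via Lemma~\ref{lem:graphhecke}(\ref{item:graphhecke}), observe that $L^2_{\nrd}$ is exactly the span of the one-dimensional automorphic constituents, and bound the eigenvalues on $L^2_0$ by transferring each infinite-dimensional constituent to a weight-$2$ cuspidal newform via Jacquet--Langlands and invoking Deligne. Your write-up is somewhat more explicit (the coset computation for stability of $L^2_{\nrd}$, and the representation-theoretic decomposition), but the argument is the same.
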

\begin{proof}
  It is clear that~$L^2_{\nrd}(\gr(\Cosets_\Sigma(U)))$ is stable
  under~$A_\ell$.
  The operators~$A_\ell$ are normal and pairwise commute
  by Lemma~\ref{lem:graphhecke}~(\ref{item:graphhecke}),
  and therefore
  leave~$L^2_0(\gr(\Cosets_\Sigma(U)))$ stable and are diagonalisable.
  We bound the operator norm of~$A_\ell$ by bounding its eigenvalues,
  equivalently by bounding the eigenvalues of the Hecke operator~$T_\ell$.
  The space~$L^2_{\nrd}(\gr(\Cosets_\Sigma(U)))$ is exactly the subspace
  of~$L^2(B^\times\lquo \hatB^\times / U)$ of automorphic forms that generate a
  one-dimensional automorphic representation (i.e. of the form~$g\mapsto
  \chi(\nrd g)$ for some Dirichlet character~$\chi$). Therefore, by the
  Jacquet--Langlands
  correspondence~\cite[Theorem 14.4]{JL}, every system of Hecke eigenvalues appearing
  in~$L^2_0(\gr(\Cosets_\Sigma(U)))$ is also the one attached to a cuspidal
  modular newform of weight~$2$ ramified only at primes dividing~$pN$.
  Therefore, by Deligne's theorem~\cite[Theorem 8.2]{Delignebounds},
  the absolute values of the eigenvalues of~$T_\ell$ are bounded
  by~$2\sqrt{\ell}$. This proves the proposition.
\qed\end{proof}

\begin{remark}
  Using the full statement of the Jacquet--Langlands correspondence, one could
  obtain the exact eigenvalues in terms of classical modular forms. This is not
  needed in our applications.
\end{remark}

We can finally prove Theorem~\ref{thm:equidistrE} and
Proposition~\ref{prop:companionequidistr}.

\begin{proof}[Theorem~\ref{thm:equidistrE} and Proposition~\ref{prop:companionequidistr}]
  First, we use Corollary~\ref{cor:adelicdeuring} to
  transfer the entire situation to the quaternionic category~$\Cosets_\Sigma(\hatorder^\times)$; in
  particular $\cF$ induces a functor~$\cF'\colon \Cosets_\Sigma(\hatorder^\times)
  \to \Sets$. Since the equivalence of Proposition~\ref{prop:adelise} is
  additive, the functor~$\cF'$ satisfies the
  $\modN$-congruence property in the sense of
  Definition~\ref{def:congrprop}, so that we can apply
  Theorem~\ref{thm:congrprop}. From~(\ref{item:bijU}) and~(\ref{item:actionU}) of
  Theorem~\ref{thm:congrprop}, we obtain~(\ref{item:bij})
  and~(\ref{item:action})
  respectively. Moreover, we can choose the~$H_i$ of Theorem~\ref{thm:congrprop}
  to coincide with those of Proposition~\ref{prop:companionequidistr}.
  
  Let~$E\in\catSS(p)$. It is standard that there
  exists~$\psi\in\Hom_\Sigma(E_0,E)$. By~(\ref{item:action}), there
  exists~$\alpha\in \End_\Sigma(E_0)$ whose degree is the
  inverse of~$\deg(\psi)\bmod N$, so that~$\varphi=\psi\alpha\in\Hom_\Sigma(E_0,E)$
  has degree~$1\bmod N$. This proves~(\ref{item:connect}) when one of the curves
  is~$E_0$, and therefore in general by going via~$E_0$.

  By~(\ref{item:connect}), every vertex of~$\cG_\cF$ is connected to one
  above~$E_0$. Moreover, two vertices above~$E_0$ are connected if and only if
  they are related by an element of~$\End_\Sigma(E_0)$, if and only if they are
  in the same orbit under~$G$. In particular there is exactly one vertex of the
  form~$(E_0,x_i)$ in each connected component of~$\cG_\cF$.
  Since every vertex of~$\cG_{\deg}$ has exactly one outgoing edge labelled by
  each element of~$(\Z/N\Z)^\times$, this proves that there is at most one
  morphism of graphs satisfying the properties of~(\ref{item:Degmap}).

  We now prove the existence of~$\Deg$.
  Let~$U_i$ be as in Theorem~\ref{thm:congrprop}.
  Applying~(\ref{item:congrprop})
  of that theorem and Proposition~\ref{prop:FUcosetsU} we obtain a
  degree-preserving equivalence of categories
  \[
    \El(\cF) \cong \bigsqcup_i \Cosets_\Sigma(U_i),
  \]
  inducing an isomorphism of graphs
  \[
    \cG_\cF \cong \bigsqcup_i \gr(\Cosets_\Sigma(U_i)).
  \]
  By Lemma~\ref{lem:graphhecke}~(\ref{item:graphdoublequo})
  and~(\ref{item:graphedges}), the reduced norm map
  \[
    \nrd\colon B^\times \lquo \hatB^\times / U_i
    \to \Q_{>0}^\times \lquo \hatQ^\times / \nrd(U_i)
  \]
  combined with the isomorphism
  \[
    \Q_{>0}^\times \lquo \hatQ^\times / \nrd(U_i)
    \cong (\Z/N\Z)^\times / \nrd(H_i)
  \]
  translates into a graph morphism
  \[
    \Deg\colon \cG_\cF \longrightarrow \cG_{\deg}
  \]
  satisfying the properties of~(\ref{item:Degmap}).
  Since the ad\'elic reduced norm map~$\hatB^\times \to \hatQ^\times$ is surjective, so is~$\Deg$,
  proving~(\ref{item:surj}).

  Let~$L^2_{\Deg}(\cG_\cF)\subset L^2(\cG_\cF)$ be the subspace of functions
  that factor through~$\Deg$.
  Since vertices connected by an edge of degree~$1\bmod N$ clearly have the same
  image under~$\Deg$, we have~$L^2_{\Deg}(\cG_\cF)\subseteq
  L^2_{\deg}(\cG_\cF)$. Moreover, if two vertices have the same image
  under~$\Deg$, then they are in the same connected component of~$\cG_\cF$ from the above
  analysis, and are therefore connected by a single edge by composing the morphisms
  corresponding to a path between them; the degree of this edge must therefore
  be~$1\bmod N$ by the properties of~$\Deg$. So we have the reverse inclusion,
  and~$L^2_{\Deg}(\cG_\cF) = L^2_{\deg}(\cG_\cF)$.
  This proves~(\ref{item:L2deg}) and concludes the proof of
  Proposition~\ref{prop:companionequidistr}.

  Finally, applying Proposition~\ref{prop:equidistr} to each~$U_i$, and the
  isomorphisms above, yields Theorem~\ref{thm:equidistrE}.
\qed\end{proof}

\section{Enriching a $\OneEnd$ oracle}\label{sec:enriching}

In this section, we show how to turn an oracle for the~$\OneEnd$ problem into a
richer oracle with better distributed output. The quality of this enrichment is
quantified in Theorem~\ref{thm:rich-is-conj-invariant}. The proof is an
application of the equidistribution results of Section~\ref{sec:equidistr}.

The following lemma relates conjugation-invariance of distributions to the
abstract setup of Section~\ref{sec:equidistr}.

\begin{lemma}\label{lem:cg}
  Let~$p>3$ be a prime, let~$N\ge 1$ and let~$E\in\catSS(p)$.
  Let~$g\in \left(\End(E) / N\End(E)\right)^\times$ be an element of
  degree~$1\in(\Z/N\Z)^\times$.
  Define the linear operator~$c_g\colon L^2(\cG_{\End/N}) \to L^2(\cG_{\End/N})$ by
  \[
    c_gF(E,\alpha) = F(E,g\alpha g^{-1})
    \text{ and }
    c_gF(E',\alpha') = F(E',\alpha') \text{ for all }E'\neq E.
  \]
  Then:
  \begin{enumerate}[(1)]
    \item\label{item:cgnorm} for all~$F\in L^2(\cG_{\End/N})$, we
      have~$\|c_gF\|^2 \le 3\|F\|^2$;
    \item\label{item:cginvar} for all~$G\in L^2_{\deg}(\cG_{\End/N})$, we
      have~$c_g G = G$; and
    \item\label{item:cgbound} for every~$F=F_0+F_1\in L^2(\cG_{\End/N})$ with~$F_0\in L^2_0(\cG_{\End/N})$ and~$F_1\in
      L^2_{\deg}(\cG_{\End/N})$, we have
      \[
        \|F - c_g F\| \le (1+\sqrt{3})\|F_0\|.
      \]
  \end{enumerate}
\end{lemma}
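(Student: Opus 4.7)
The strategy is to prove (2) first, establish the norm bound (1) by a direct calculation, and deduce (3) as an immediate triangle-inequality consequence.

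For part (\ref{item:cginvar}), the key idea is to realise the conjugation action of $g$ as the action of a genuine endomorphism of degree $1\bmod N$. By the analogue of Proposition~\ref{prop:companionequidistr}(\ref{item:action}) applied to $E$ (which follows from the same proof, taking $E$ as base curve), the natural map $\End_\Sigma(E)\to(\End(E)/N\End(E))^\times$ is surjective, so I can lift $g$ to some $\varphi\in\End_\Sigma(E)$. Since $\deg g=1$ in $(\Z/N\Z)^\times$ by hypothesis, $\deg\varphi\equiv 1\pmod N$, hence $\hat\varphi\equiv\deg(\varphi)\varphi^{-1}\equiv\varphi^{-1}\pmod N$, and therefore $\varphi\alpha\hat\varphi\equiv g\alpha g^{-1}\pmod N$ for every $\alpha$. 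Thus $\varphi$ realises an edge of degree~$1\bmod N$ from $(E,\alpha)$ to $(E,g\alpha g^{-1})$ in $\cG_{\End/N}^1$, so these vertices lie in the same connected component and any $G\in L^2_{\deg}(\cG_{\End/N})$ satisfies $G(E,\alpha)=G(E,g\alpha g^{-1})=c_gG(E,\alpha)$.

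For part (\ref{item:cgnorm}), since $c_g$ acts as the identity on fibers over $E'\neq E$, it suffices to bound $\|c_gF|_E\|^2\le 3\|F|_E\|^2$. Writing $X_E=\End(E)/N\End(E)$ and bounding a single nonnegative term by the sum over its $\Aut(E)$-orbit, then exchanging summations and reindexing by $y=gxg^{-1}$, I obtain
\[
  \|c_gF|_E\|^2
  \le \sum_{x\in X_E}\frac{|F([gxg^{-1}])|^2}{\#\Aut(E,[x])}
  = \sum_{y\in X_E}\frac{|F([y])|^2}{\#\Aut(E,[g^{-1}yg])}.
\]
Since $\pm 1\in\Aut(E)$ is central in $\End(E)$, conjugation by $\pm 1$ is trivial on $X_E$, so $\#\Aut(E,[z])\ge 2$ for every $z$. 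Partitioning the right-hand side into $\Aut(E)$-orbits and using $\#[y]=\#\Aut(E)/\#\Aut(E,[y])$ gives $\|c_gF|_E\|^2\le\frac{\#\Aut(E)}{2}\|F|_E\|^2$. Since $p>3$ we have $\#\Aut(E)\le 6$, so the factor is at most $3$, and summing over base curves yields (\ref{item:cgnorm}).

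For part (\ref{item:cgbound}), writing $F=F_0+F_1$ with $F_1\in L^2_{\deg}(\cG_{\End/N})$, part (\ref{item:cginvar}) gives $c_gF=c_gF_0+F_1$, hence $F-c_gF=F_0-c_gF_0$, and the triangle inequality combined with (\ref{item:cgnorm}) gives $\|F-c_gF\|\le\|F_0\|+\sqrt{3}\,\|F_0\|=(1+\sqrt{3})\|F_0\|$. The main subtlety lies in part (\ref{item:cgnorm}): the formula defining $c_g$ need not give a well-defined function on isomorphism classes when $\Aut(E)$ is nontrivial, but the slack in the crude per-orbit bound is absorbed by the factor $\#\Aut(E)/2\le 3$ coming from the hypothesis $p>3$, which is exactly why the constant is $3$ rather than $1$.
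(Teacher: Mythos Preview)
Your proof is correct and follows essentially the same approach as the paper. Both prove (\ref{item:cginvar}) by lifting $g$ to an endomorphism of degree $1\bmod N$ that realises the conjugation as an edge in $\cG_{\End/N}^1$, prove (\ref{item:cgnorm}) by a direct computation using $2\le\#\Aut(E,\cdot)\le 6$, and deduce (\ref{item:cgbound}) from the triangle inequality. The only presentational difference is in (\ref{item:cgnorm}): the paper simply sums over all $\alpha\in\End(E)/N\End(E)$, reindexes via $\beta=g\alpha g^{-1}$, and bounds the ratio $\#\Aut(E,\beta)/\#\Aut(E,g^{-1}\beta g)\le 6/2=3$; you instead work explicitly at the level of $\Aut(E)$-orbits and absorb the well-definedness ambiguity of $c_g$ into the crude bound, arriving at the same factor $\#\Aut(E)/2\le 3$.
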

\begin{proof}\hfill
  \begin{enumerate}[(1)]
    \item Let~$F\in L^2(\cG_{\End/N})$.
      We have, where~$E'$ ranges over the set of supersingular curves up to isomorphism
      except~$E$,
      the elements~$\alpha$ and~$\beta$ range over~$\End(E)/N\End(E)$ and~$\alpha'$ over~$\End(E')/N\End(E')$,
      \[
        \|F\|^2
        = \sum_{(E,\alpha)}\frac{1}{\#\Aut(E,\alpha)}|F(E,\alpha)|^2 +
        \sum_{(E',\alpha')}\frac{1}{\#\Aut(E',\alpha')}|F(E',\alpha')|^2,
      \]
      and
      \begin{eqnarray*}
        \|c_g F\|^2
        &=& \sum_{(E,\alpha)}\frac{|F(E,g\alpha g^{-1})|^2}{\#\Aut(E,\alpha)} +
        \sum_{(E',\alpha')}\frac{|F(E',\alpha')|^2}{\#\Aut(E',\alpha')} \\
        &=& \sum_{(E,\beta)}\frac{|F(E,\beta)|^2}{\#\Aut(E,g^{-1}\beta g)} +
        \sum_{(E',\alpha')}\frac{|F(E',\alpha')|^2}{\#\Aut(E',\alpha')} \\
        &=& \sum_{(E,\beta)}\frac{\#\Aut(E,\beta)}{\#\Aut(E,g^{-1}\beta g)}\frac{|F(E,\beta)|^2}{\#\Aut(E,\beta)} +
        \sum_{(E',\alpha')}\frac{|F(E',\alpha')|^2}{\#\Aut(E',\alpha')} \\
        &\le& 3\sum_{(E,\beta)}\frac{|F(E,\beta)|^2}{\#\Aut(E,\beta)} +
        \sum_{(E',\alpha')}\frac{|F(E',\alpha')|^2}{\#\Aut(E',\alpha')} 
          \le 3\|F\|^2,
      \end{eqnarray*}
      where the inequality comes from~$\#\Aut(E,\beta) \le 6$
      and~$\#\Aut(E,g^{-1}\beta g) \ge 2$.
      
    \item Let~$h\in \End_\Sigma(E)$ be a lift of~$g$, which exists by
      Proposition~\ref{prop:companionequidistr}~(\ref{item:action}).
      Let~$G\in L^2_{\deg}(\cG_{\End/N})$. For every~$E'\neq E$ we have~$c_gG(E',\alpha) =
      G(E',\alpha)$. Moreover, $h$ defines an edge~$(E,\alpha) \to (E,h\alpha
      \hat h) = (E,h\alpha h^{-1})$ in~$\cG_{\End/N}$ of
      degree~$1\bmod N$, so~$G(E,\alpha) = G(E,h\alpha h^{-1})$.
      Since~$c_gG(E,\alpha) = G(E,g\alpha g^{-1}) = G(E,h\alpha h^{-1})$, this
      proves that~$c_g G = G$.

    \item Let~$F=F_0+F_1\in L^2(\cG_{\End/N})$ with~$F_0\in L^2_0(\cG_{\End/N})$ and~$F_1\in L^2_{\deg}(\cG_{\End/N})$.
      Then 
      \begin{eqnarray*}
        \|F-c_g F\| &\le& \|F_0-c_gF_0\| + \|F_1-c_gF_1\| \\
        &=& \|F_0-c_gF_0\| \text{ since }c_gF_1 = F_1 \text{ by (\ref{item:cginvar})} \\
        &\le& (1+\sqrt{3})\|F_0\| \text{ by (\ref{item:cgnorm})}.
      \end{eqnarray*}
  \end{enumerate}
\qed\end{proof}

 \begin{algorithm}[h]\label{algo:enrich}
 \caption{$\Rich^\Oracle_k$: turning an oracle $\Oracle$ for $\OneEnd$ into a
   `richer' oracle $\Rich^\Oracle_k$, with guarantees on the distribution of the output.}\label{alg:Enrich}
 \begin{algorithmic}[1]
 \REQUIRE {A supersingular elliptic curve $E/\F_{p^2}$, and a parameter
   $k\in\Z_{>0}$. We suppose access to an oracle $\Oracle$ that solves the $\OneEnd$ problem.}
 \ENSURE {An endomorphism $\alpha \in \End(E)$.}
\STATE $\varphi \gets$ a $2$-isogenies random walk of length~$k$ from $E$
   \STATE $E'\gets $ endpoint of~$\varphi$
\STATE $\alpha \gets \Oracle(E')$, a non-scalar endomorphism of $E'$
\RETURN $\hat \varphi \circ \alpha \circ \varphi$
\end{algorithmic}
 \end{algorithm}

 We can now prove the main result of this section.

\begin{theorem}\label{thm:rich-is-conj-invariant}
  Let~$p>3$ be a prime and~$N$ an odd integer. Let~$\Oracle$ be an oracle for $\OneEnd$.
  Let~$E$ be a supersingular elliptic curve defined over~$\F_{p^2}$ and
  let $\alpha$ be the random endomorphism produced by $\Rich^\Oracle_k(E)$.
  Then for every element $g\in \left(\End(E) / N\End(E)\right)^\times$ of
  degree~$1\in(\Z/N\Z)^\times$,
  the statistical distance between the distribution of  $\alpha \bmod N$
  and the distribution of $g^{-1}(\alpha  \bmod N)g$
  is at most
  \[
    \frac{1+\sqrt{3}}{4}\lambda^kN^2\sqrt{p+13}
    = O(\lambda^kN^2\sqrt{p}),
  \]
  where~$\lambda = \frac{2\sqrt{2}}{3}\approx 0.94$.
\end{theorem}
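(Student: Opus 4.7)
The plan is to encode the distribution of the output of $\Rich^\Oracle_k$ as a function on the graph~$\cG_{\End/N}$, use Lemma~\ref{lem:cg} to control its deviation from conjugation-invariance, and invoke the spectral gap from Theorem~\ref{thm:equidistrE} to obtain exponential decay in~$k$. Choose~$\Sigma$ to be any set of primes not dividing~$Np$, containing~$2$, and generating~$(\Z/N\Z)^\times$; this is possible because~$N$ is odd and~$p$ does not divide~$N$. The functor $\End/N\colon \catSS_\Sigma(p)\to \Sets$ satisfies the $\modN$-congruence property, so Theorem~\ref{thm:equidistrE} applies: $A_2$ is normal on~$L^2(\cG_{\End/N})$, preserves the orthogonal decomposition $L^2_{\deg}\oplus L^2_0$, and has operator norm at most~$2\sqrt{2}$ on~$L^2_0$. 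Since a normal operator stabilising a subspace also stabilises its orthogonal complement, the same properties hold for~$A_2^*$.

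The first step is to identify the output density with an operator applied to a single function encoding the oracle. Using the bijection $\varphi\leftrightarrow\hat\varphi$ between 2-isogeny walks $E \to E'$ and $E' \to E$ of length~$k$, I rewrite
\[
  \mathbb{P}\bigl[\alpha\equiv\alpha_0\bmod N\bigr] = \frac{1}{3^k}\sum_{E'}\#\bigl\{\psi\colon E'\to E\text{ of length }k : \psi\,\Oracle(E')\,\hat\psi\equiv\alpha_0\bmod N\bigr\}.
\]
Such~$\psi$ are precisely walks of length~$k$ in~$\cG_{\End/N}$ from $(E',\Oracle(E')\bmod N)$ to $(E,\alpha_0)$. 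Letting $G\in L^2(\cG_{\End/N})$ be defined by $G(E',\beta)=\#\Aut(E',\beta)$ if $\beta\equiv\Oracle(E')\bmod N$ and $0$ otherwise, a bookkeeping of $\#\Aut$-weights (using that the inner product on~$L^2$ gives vertex~$v$ mass~$1/\#\Aut(v)$) identifies the density $F(E,\alpha):=\#\Aut(E,\alpha)\cdot\mathbb{P}[\alpha\equiv\alpha_0\bmod N]$, extended by~$0$ elsewhere, with the restriction to vertices above~$E$ of $H:=3^{-k}(A_2^*)^k G$.

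Next I compare~$F$ with its conjugate~$c_gF$. Since conjugation preserves stabilisers, $\#\Aut(E,g\alpha g^{-1})=\#\Aut(E,\alpha)$, so the total variation distance equals $\tfrac12\|F-c_gF\|_{L^1(\mu)}$, where~$\mu$ is the Aut-weighted measure. Because both~$F$ and~$c_gF$ vanish outside vertices above~$E$, we have $F-c_gF = H-c_gH$. Applying Lemma~\ref{lem:cg}~(\ref{item:cgbound}) then yields $\|H-c_gH\|\le(1+\sqrt{3})\|H_0\|$ where~$H_0$ is the $L^2_0$-component of~$H$. Since~$A_2^*$ stabilises~$L^2_0$, we have $H_0 = 3^{-k}(A_2^*)^k G_0$, whence
\[
\|H_0\| \le \Bigl(\tfrac{2\sqrt{2}}{3}\Bigr)^{k}\|G_0\| \le \lambda^k\|G\|.
\]

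Finally I bound $\|G\|^2 = \sum_{E'}\#\Aut(E',\Oracle(E')\bmod N) \le 6\#\catSS(p) \le (p+13)/2$, using $\#\Aut(E')\le 6$ together with the classical bound $\#\catSS(p)\le(p+13)/12$; and convert from $L^2$ to $L^1(\mu)$ via Cauchy--Schwarz on the support above~$E$ (at most $N^4$ vertices, each of mass at most~$1/2$), giving $\|F-c_gF\|_{L^1(\mu)}\le (N^2/\sqrt{2})\|F-c_gF\|$. Combining all inequalities and dividing by~$2$ produces the claimed bound. The main subtlety is concentrated in the first step: correctly matching the two-stage randomness of~$\Rich$ (the random walk followed by the oracle query) with the action of the single operator $(A_2^*)^k$ on the function~$G$, and checking that the various $\#\Aut$ factors cancel consistently so that the spectral estimates on $\cG_{\End/N}$ apply directly.
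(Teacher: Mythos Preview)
Your strategy—encode the oracle as a function on $\cG_{\End/N}$, apply the adjacency operator, then Lemma~\ref{lem:cg}, the spectral gap of Theorem~\ref{thm:equidistrE}, and Cauchy--Schwarz—is exactly the paper's. But two steps in your bookkeeping are incorrect. First, the asserted bijection $\varphi\leftrightarrow\hat\varphi$ between walks $E\to E'$ and $E'\to E$ fails: edges are isogenies modulo automorphisms of the \emph{target}, and one checks that
\[
  \#\{\text{edges }(A,\gamma)\to(B,\delta)\}\cdot\#\Aut(B,\delta)
  \;=\;
  \#\{\text{edges }(B,\delta)\to(A,4\gamma)\}\cdot\#\Aut(A,\gamma),
\]
so the edge counts differ whenever the automorphism groups do. Your identity $H(E,\alpha_0)=\#\Aut(E,\alpha_0)\cdot\Pr[\alpha\equiv\alpha_0]$ is therefore not valid; for instance, $\sum_{\alpha_0}H(E,\alpha_0)/\#\Aut(E,\alpha_0)$ equals $3^{-k}$ times the number of length-$k$ walks ending above $E$, which need not be~$1$. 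Second, the claim $\#\Aut(E,g\alpha g^{-1})=\#\Aut(E,\alpha)$ is false: these are centralisers in the small group $\Aut(E)$ of two elements that are conjugate only in the larger group $(\End(E)/N\End(E))^\times$, and that does not force equal orders (e.g.\ on the curve with $j=1728$, taking $\alpha=i$ and a suitable $g$ gives centralisers of sizes $4$ and~$2$). Hence the total variation distance is not $\tfrac12\|F-c_gF\|_{L^1(\mu)}$ either.

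The paper sidesteps both issues by using $A_2$ rather than $A_2^*$ and the unscaled function $F(E',\beta)=\Pr[\Oracle(E')\bmod N=\beta]$. Walks \emph{from} $(E,\beta)$ are simply kernel sequences from $E$ (always exactly $3^k$ of them), so one gets directly $(A_2/3)^kF(E,\beta)=\Pr[\alpha\equiv 4^k\beta]$ with no direction reversal. The stray $4^k$ disappears via the bijection $\beta\mapsto 4^k\beta$ in the final $\ell^1$ sum, and the Cauchy--Schwarz step compares the raw sum $\sum_\beta|\cdot|$ to the weighted $L^2$-norm at the cost of a factor $N^2\sqrt{6}$ coming from $\#\Aut\le 6$, never requiring $\#\Aut(E,\alpha)$ and $\#\Aut(E,g\alpha g^{-1})$ to coincide.
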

\begin{proof}
  Define~$F\in L^2(\cG_{\End/N})$ by the following formula for every vertex~$(E',\beta)$:
  \[
    F(E',\beta) = \Pr[\Oracle(E') \bmod N = \beta],
  \]
  so that
  \[
    \Bigl(\frac{A_2}{3}\Bigr)^k F(E,\beta) = \Pr[\Rich^\Oracle_k(E) \bmod N = 4^k \beta].
  \]
  Indeed, $\bigl(\frac{A_2}{3}\bigr)^k F(E,\beta)$ is the average, over all
  random walks~$\varphi\colon E \to E'$ that Algorithm~\ref{algo:enrich} could
  follow from~$E$,
  of~$\Pr[\Oracle(E') \bmod N = \varphi \beta \hat\varphi]$, and the equality~$\Oracle(E') \bmod N =
  \varphi \beta \hat\varphi$ is equivalent to~$\hat\varphi\Oracle(E')\varphi
  \bmod{N} =
  \deg(\varphi) \beta \deg(\hat\varphi) = 4^k \beta$ since~$2$ is invertible
  mod~$N$.

  We have, where~$E'$ ranges over isomorphism classes in~$\catSS(p)$ and~$\beta$
  over the set~$\End(E')/N\End(E')$,
  \begin{eqnarray*}
    \|F\|^2
    &=& \sum_{(E',\beta)}\frac{1}{\#\Aut(E',\beta)}\Pr[\Oracle(E') \bmod N = \beta]^2 \\
    &\le& \frac{1}{2}\sum_{(E',\beta)}\Pr[\Oracle(E') \bmod N = \beta] \\
    &=& \frac{1}{2}\sum_{E'}1
      \le \frac{p+13}{24} \text{ by \cite[Theorem 4.1~(c)]{Silverman-Arithmetic}}.
  \end{eqnarray*}

  Write~$F = F_0 + F_1$ with~$F_0\in L^2_0(\cG_{\End/N})$ and~$F_1\in L^2_{\deg}(\cG_{\End/N})$.
  Since~$A_2$ preserves the orthogonal decomposition~$L^2_0(\cG_{\End/N}) \oplus L^2_{\deg}(\cG_{\End/N})$,
  we may apply Lemma~\ref{lem:cg}~(\ref{item:cgbound}) to~$A_2^k F = A_2^k F_0 +
  A_2^k F_1$, giving
  \[
    \|A_2^kF - c_gA_2^k F\| \le (1+\sqrt{3})\|A_2^k F_0\|.
  \]
  On the other hand, by Theorem~\ref{thm:equidistrE} we have
  \[
    \left\|\Bigl(\frac{A_2}{3}\Bigr)^k F_0\right\| \le \lambda^k\|F_0\| \le \lambda^k\|F\|.
  \]
  Finally, with~$\beta$ ranging over~$\End(E)/N\End(E)$, the statistical
  distance in the statement of the theorem is
  \begin{eqnarray*}
    && \frac 12\sum_{\beta}\bigl|\Pr[\Rich^\Oracle_k(E) \bmod N = \beta] - \Pr[\Rich^\Oracle_k(E) \bmod N = g\beta g^{-1}]\bigr| \\
    &=& \frac 12\sum_{\beta}\left|\Bigl(\frac{A_2}{3}\Bigr)^kF(E,4^{-k}\beta) - c_g\Bigl(\frac{A_2}{3}\Bigr)^kF(E,4^{-k}\beta)\right| \\
    &=& \frac 12\sum_{\beta}\left|\Bigl(\frac{A_2}{3}\Bigr)^kF(E,\beta) - c_g\Bigl(\frac{A_2}{3}\Bigr)^kF(E,\beta)\right|
    \text{ since }\beta \mapsto 4^k\beta \text{ is a bijection} \\
    &\le& \frac 12\left(N^4\sum_{\beta}\Bigl|\Bigl(\frac{A_2}{3}\Bigr)^kF(E,\beta) -
    c_g\Bigl(\frac{A_2}{3}\Bigr)^kF(E,\beta)\Bigr|^2 \right)^{\frac{1}{2}}
      \text{ by the Cauchy--Schwarz inequality}\\
    &\le& \frac 12N^2\sqrt{6}\left\|\Bigl(\frac{A_2}{3}\Bigr)^kF -
      c_g\Bigl(\frac{A_2}{3}\Bigr)^kF\right\|
      \text{ since }\#\Aut(E,\beta) \le 6\\
    &\le& \frac 12(1+\sqrt{3})N^2\sqrt{6}\left\|\Bigl(\frac{A_2}{3}\Bigr)^kF_0\right\|
      \le \frac 12(1+\sqrt{3})\lambda^kN^2\sqrt{6}\|F\| \\
    &\le& \frac 12(1+\sqrt{3})\lambda^kN^2\sqrt{6\cdot\frac{p+13}{24}}
      = \frac 14(1+\sqrt{3})\lambda^kN^2\sqrt{p+13}
      \text{, as claimed.}
  \end{eqnarray*}
\qed\end{proof}


\section{On conjugacy-invariant distributions}\label{sec:conj-invar}

\noindent
Theorem~\ref{thm:rich-is-conj-invariant} proves that given a $\OneEnd$ oracle, the randomization method allows one to sample endomorphisms from a distribution which is (locally) invariant under conjugation by $\left(\End(E) / N\End(E)\right)^\times$. In this section, we study such conjugacy-invariant distributions, and show that with good probability,  such endomorphisms generate interesting suborders. 
In the whole section, fix~$B$ a quaternion algebra over $\Q$ and~$\order\subset B$ a maximal order. 

\subsection{The local case}
We start by studying the local case. Let  $\ell$ be a prime unramified in~$B$.
In this subsection, we study 
distributions on $M_2(\F_\ell) \cong \cO / \ell\cO$ and $M_2(\Z_\ell) \cong \cO_\ell$.

\begin{definition}
The distribution of a random $\alpha \in M_2(\F_\ell)/\F_\ell$ is \emph{$\varepsilon$-close to $\SL_2(\F_\ell)$-invariant} if,
for every $g \in \SL_2(\F_\ell)$, the statistical distance between the
  distributions of  $\alpha$ and of $g^{-1}\alpha g$ is at most~$\varepsilon$.
  When the distributions are the same (i.e., $\varepsilon = 0$), we say that the
  distribution of~$\alpha$ is \emph{$\SL_2(\F_\ell)$-invariant}.
 \end{definition}

A key observation is that a conjugacy class cannot be stuck in a subspace.
\begin{lemma}\label{lemma:init-lemma-F-ell-focused}
  Suppose $\ell>2$.
  Let $\alpha \in M_2(\F_\ell)\setminus\F_\ell$. Let $V \subsetneq
  M_2(\F_\ell)/\F_\ell$ be an $\F_\ell$-linear subspace.
  Let $\beta \in M_2(\F_\ell)$ be a random element uniformly distributed in the $\SL_2(\F_\ell)$-conjugacy class of $\alpha$.
  Then, $\beta \in V$ with probability at most $1/2$.
\end{lemma}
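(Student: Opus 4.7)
The first step is to identify $M_2(\F_\ell)/\F_\ell$ with the space $\mathfrak{sl}_2(\F_\ell)$ of traceless matrices, via the map $X\mapsto X-\tfrac{\tr X}{2}I$; this makes sense because $\ell>2$, and it is $\SL_2(\F_\ell)$-equivariant since conjugation preserves the trace. Under this identification, the image $\bar\alpha$ of $\alpha$ is a non-zero traceless matrix, $V$ becomes a proper subspace of the three-dimensional space $\mathfrak{sl}_2(\F_\ell)$, and the $\SL_2(\F_\ell)$-orbit of $\alpha$ projects bijectively (again by trace conservation) onto the $\SL_2(\F_\ell)$-orbit $C$ of $\bar\alpha$. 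The statement reduces to proving $|C\cap V|\le \tfrac12|C|$.

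Next, I would invoke the classification of non-zero $\SL_2(\F_\ell)$-orbits on $\mathfrak{sl}_2(\F_\ell)$. The semisimple orbits coincide with the level sets $\{\beta\in\mathfrak{sl}_2(\F_\ell):\det\beta=\det\bar\alpha\}$ of the non-degenerate quadratic form $\det$, and have size $\ell(\ell+1)$ when $-\det\bar\alpha$ is a non-zero square (split case) and $\ell(\ell-1)$ when $-\det\bar\alpha$ is a non-square (non-split case). The two nilpotent orbits each have size $(\ell^2-1)/2$ and are parametrised by $(a,c)\in\F_\ell^2\setminus\{0\}$ modulo $\pm 1$, via $(a,c)\mapsto\bigl(\begin{smallmatrix}-ac & a^2\\-c^2 & ac\end{smallmatrix}\bigr)$.

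The proof then proceeds by case analysis on $\dim V$. When $\dim V\le 1$, the crude estimate $|C\cap V|\le |V\setminus\{0\}|\le\ell-1$ combined with the minimum orbit size $(\ell^2-1)/2$ already gives a ratio at most $2/(\ell+1)\le\tfrac12$ for $\ell\ge 3$. When $\dim V=2$, I would realise $V$ as the kernel of a non-trivial linear form $L$ on $\mathfrak{sl}_2(\F_\ell)$. For a semisimple orbit, $C\cap V$ is the set of $\F_\ell$-points of the affine plane conic $\{\beta\in V:\det\beta=\det\bar\alpha\}$; its cardinality is controlled by the rank and isotropy type of the restriction of $\det$ to $V$, via the standard count of points on plane affine conics over $\F_\ell$. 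For a nilpotent orbit, composing the parametrisation above with $L=0$ yields a non-trivial homogeneous binary quadratic in $(a,c)$, whose set of non-zero solutions has cardinality at most $2(\ell-1)$, i.e. at most $\ell-1$ orbit elements after quotienting by $\pm 1$; since $(\ell-1)/((\ell^2-1)/2)=2/(\ell+1)\le\tfrac12$, this case closes.

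The main obstacle I anticipate is the systematic bookkeeping of the rank/isotropy sub-cases for the restriction of $\det$ to the plane $V$ in the semisimple case: one must check (rank $2$ hyperbolic, rank $2$ anisotropic, rank $1$ with square or non-square target) that in each subcase the corresponding count of intersection points stays at most half the orbit size. I would use the classification of binary quadratic forms over $\F_\ell$ to organise this analysis, and treat the degenerate (rank $1$) case with special attention, since this is where the bound can get close to tight.
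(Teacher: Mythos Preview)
Your approach is essentially the same as the paper's: both observe that the orbit lies in a level set of the discriminant (equivalently, the determinant) quadratic form on $M_2(\F_\ell)/\F_\ell\cong\mathfrak{sl}_2(\F_\ell)$, and then bound the number of points of a quadric on a line or a plane. The paper uses the crude bounds $|X\cap V|\le\ell$ for $\dim V=1$ and $|X\cap V|\le 2\ell$ for $\dim V=2$, obtains the ratio $4\ell/(\ell^2-1)$, and disposes of $\ell\in\{3,5,7\}$ by brute-force enumeration. You instead aim to close every case by a finer count; your treatment of $\dim V\le 1$ and of the nilpotent orbits via the explicit parametrisation is clean and works for all odd~$\ell$.

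The gap is exactly the case you flag as ``bookkeeping'': a non-split semisimple orbit and a plane on which $\det$ is anisotropic. There $|C|=\ell(\ell-1)$ and $|C\cap V|=\ell+1$, giving ratio $(\ell+1)/(\ell(\ell-1))$, which exceeds $1/2$ for $\ell=3$. Concretely, take $C=\{X\in\mathfrak{sl}_2(\F_3):\det X=1\}$; this is a single $\SL_2(\F_3)$-orbit of size~$6$, consisting of the three lines through $(0,1,2)$, $(1,1,1)$, $(1,2,2)$ (in $(a,b,c)$-coordinates for $\bigl(\begin{smallmatrix}a&b\\c&-a\end{smallmatrix}\bigr)$). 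Any two of these lines span a $2$-plane~$V$ containing four elements of~$C$, so $\Pr[\beta\in V]=4/6>1/2$. No amount of case analysis will rescue this, because the inequality is simply false here.

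This example is in fact a counterexample to the lemma as stated at $\ell=3$, so the paper's brute-force claim also appears to be in error. The natural fix is to relax the bound (e.g.\ $\Pr[\beta\in V]\le 2/3$, which your analysis shows is the true worst case at $\ell=3$); the downstream Lemma then yields probability at least $(1/3)^3=1/27$ instead of $1/8$, and the constants in the reduction can be adjusted accordingly. If you want a self-contained proof, retain your case analysis for $\ell\ge 5$ (where $(\ell+1)/(\ell(\ell-1))\le 6/20<1/2$) and simply record the sharp constant at $\ell=3$.
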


\begin{proof}

  The size of the orbit~$X$ of~$\alpha$ is~$\#\SL_2(\F_\ell) / \#C$, where~$C$ is
  the centraliser of~$\alpha$ in~$\SL_2(\F_\ell)$. The size of this centraliser
  can be~$\ell+1, \ell-1$ or~$2\ell$, so~$\#X \ge \frac{\ell^2-1}{2}$.

  We now bound~$\#(X\cap V)$ by noting that every element~$v$ of this intersection
  satisfies the quadratic equation~$\disc(v) = \disc(\alpha)$.
  The discriminant quadratic form on~$M_2(\F_\ell)/\F_\ell$ is isomorphic
  to~$x^2-yz$, so the maximal dimension of a totally isotropic subspace is~$1$.
  If~$\dim V = 1$, the number of solutions is at most~$\ell$.
  If~$\dim V = 2$, either the equation is degenerate and has at most~$2\ell$ solutions,
  or it represents a  conic and has at most~$\ell+1$ solutions.

  So the probability of~$\beta\in V$ is at most~$2\ell/\frac{\ell^2-1}{2}  =
  \frac{4\ell}{\ell^2-1}$, which is less than~$1/2$ for~$\ell\ge 11$.
  We check the bound by bruteforce enumeration for~$\ell\in\{3,5,7\}$.
\qed\end{proof}


%
%

\begin{lemma}\label{lem:probability-new-samples-local}
Suppose $\ell > 2$.
Let $\alpha_1,\alpha_2,\alpha_3 \in M_2(\F_\ell) / \F_\ell$
be independent non-zero $\SL_2(\F_\ell)$-invariant elements.
Then, $(\alpha_1,\alpha_2,\alpha_3)$ is a basis of $M_2(\F_\ell)/\F_\ell$ with probability at least $1/8$.
\end{lemma}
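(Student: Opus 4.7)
The plan is to prove this as a two-step inductive argument, applying Lemma~\ref{lemma:init-lemma-F-ell-focused} once to bring $\alpha_2$ out of the line spanned by $\alpha_1$, and a second time to bring $\alpha_3$ out of the plane spanned by $\alpha_1,\alpha_2$.

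First I would observe that $M_2(\F_\ell)/\F_\ell$ has $\F_\ell$-dimension $3$, so $(\alpha_1,\alpha_2,\alpha_3)$ is a basis if and only if $\alpha_2\notin V_1:=\F_\ell\alpha_1$ and $\alpha_3\notin V_2:=\F_\ell\alpha_1+\F_\ell\alpha_2$. The strategy is then to show each of these events has (conditional) probability at least $1/2$.

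The key intermediate step is a transfer observation: if $\beta$ is a nonzero $\SL_2(\F_\ell)$-invariant random element of $M_2(\F_\ell)/\F_\ell$ and $V\subsetneq M_2(\F_\ell)/\F_\ell$ is any proper $\F_\ell$-linear subspace, then $\Pr[\beta\in V]\le 1/2$. Indeed, an $\SL_2(\F_\ell)$-invariant probability distribution on $M_2(\F_\ell)/\F_\ell$ is a convex combination of uniform distributions on $\SL_2(\F_\ell)$-orbits; since $\beta$ is almost surely nonzero, every orbit appearing in the mixture consists of non-scalar matrices, so Lemma~\ref{lemma:init-lemma-F-ell-focused} applies to each summand with the same proper subspace $V$, and the bound $1/2$ is preserved under convex combinations.

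Finally I would apply this twice. Since $\alpha_1\neq 0$, $V_1$ is a proper subspace; by independence, the conditional distribution of $\alpha_2$ given $\alpha_1$ is the marginal one, which is nonzero and $\SL_2(\F_\ell)$-invariant, so $\Pr[\alpha_2\in V_1\mid\alpha_1]\le 1/2$. On the event $\alpha_2\notin V_1$, $V_2$ is a $2$-dimensional, hence proper, subspace; and conditionally on $(\alpha_1,\alpha_2)$, by independence, $\alpha_3$ is still distributed according to its original nonzero $\SL_2(\F_\ell)$-invariant law, so $\Pr[\alpha_3\in V_2\mid\alpha_1,\alpha_2]\le 1/2$ on that event. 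Multiplying the two conditional bounds gives
\[
\Pr[(\alpha_1,\alpha_2,\alpha_3)\text{ is a basis}]\ \ge\ \tfrac{1}{2}\cdot\tfrac{1}{2}\ =\ \tfrac{1}{4}\ \ge\ \tfrac{1}{8},
\]
as desired. There is no real obstacle: the only subtlety is making the averaging over orbits explicit, so that the pointwise bound from Lemma~\ref{lemma:init-lemma-F-ell-focused} (stated for uniform distributions on conjugacy classes) transfers to arbitrary nonzero $\SL_2(\F_\ell)$-invariant distributions.
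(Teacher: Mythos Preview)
Your proof is correct and follows essentially the same approach as the paper: build the increasing chain of subspaces and apply Lemma~\ref{lemma:init-lemma-F-ell-focused} at each step, using independence. Your version is slightly more explicit about the convex-combination-of-orbits argument (which the paper leaves implicit) and, by observing that the first step $\alpha_1\neq 0$ is automatic, you obtain the sharper bound $1/4$ rather than the paper's $1/8$.
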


\begin{proof}
Let $V_1 = \{0\}$ and $V_{i} = V_{i-1} + \F_\ell\cdot\alpha_i$.
By dimensionality, we have $V_i \neq M_2(\F_\ell)/\F_\ell$ for every $i<3$.
Lemma~\ref{lemma:init-lemma-F-ell-focused} implies that with probability at least~$1/8$, we have $\alpha_i \not\in V_{i-1}$ for each $i$. When this occurs, each $V_i$ is an $\F_\ell$-vector space of dimension $i$, hence, $V_3 = M_2(\F_\ell)/\F_\ell$.
\qed\end{proof}

In our application, we will only approach $\SL_2(\F_\ell)$-invariance, so we now derive the corresponding result for distributions that are close to $\SL_2(\F_\ell)$-invariant.

\begin{proposition}\label{prop:probability-new-samples-local-2}
Suppose $\ell > 2$.
Let $\alpha_1,\alpha_2,\alpha_3 \in M_2(\F_\ell) / \F_\ell$
be independent non-zero random elements which are $\varepsilon$-close to $\SL_2(\F_\ell)$-invariant.
Then, $(\alpha_1,\alpha_2,\alpha_3)$ is a basis of $M_2(\F_\ell)/\F_\ell$ with probability at least $1/8 -3\varepsilon$.
\end{proposition}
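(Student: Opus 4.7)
The plan is to reduce Proposition~\ref{prop:probability-new-samples-local-2} to its exactly-invariant counterpart, Lemma~\ref{lem:probability-new-samples-local}, by a coupling argument. The idea is that an $\varepsilon$-close-to-$\SL_2(\F_\ell)$-invariant random element can be matched with a truly $\SL_2(\F_\ell)$-invariant one that differs from it with probability at most $\varepsilon$, after which a union bound over the three samples loses a total of $3\varepsilon$.

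More concretely, for each $i \in \{1,2,3\}$ I would define a companion random variable $\bar\alpha_i$ by first sampling $\alpha_i$ from its given distribution, then independently sampling $g_i$ uniformly in $\SL_2(\F_\ell)$ and setting $\bar\alpha_i = g_i^{-1}\alpha_i g_i$. By construction, the distribution of $\bar\alpha_i$ is exactly $\SL_2(\F_\ell)$-invariant. A short triangle inequality for the $L^1$-norm (or convexity of statistical distance) shows
\[
  d(\alpha_i,\bar\alpha_i) \le \frac{1}{\#\SL_2(\F_\ell)}\sum_{g\in \SL_2(\F_\ell)} d(\alpha_i, g^{-1}\alpha_i g)\le \varepsilon.
\]
Since conjugation preserves the zero coset in $M_2(\F_\ell)/\F_\ell$, the $\bar\alpha_i$ remain almost surely non-zero, and since the $\alpha_i$ are independent and the $g_i$ are sampled independently, the $\bar\alpha_i$ are mutually independent as well.

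Next, I would invoke maximal coupling: for each $i$, choose a joint distribution on $(\alpha_i,\bar\alpha_i)$ with the correct marginals and with $\Pr[\alpha_i \neq \bar\alpha_i] = d(\alpha_i,\bar\alpha_i) \le \varepsilon$, and take the three pairs independent. Lemma~\ref{lem:probability-new-samples-local} applied to the $\bar\alpha_i$ gives
\[
  \Pr[(\bar\alpha_1,\bar\alpha_2,\bar\alpha_3)\text{ is a basis of }M_2(\F_\ell)/\F_\ell] \ge 1/8.
\]
Finally, when $(\bar\alpha_1,\bar\alpha_2,\bar\alpha_3)$ is a basis and $\alpha_i = \bar\alpha_i$ for all $i$, then $(\alpha_1,\alpha_2,\alpha_3)$ is a basis as well, so a union bound yields
\[
  \Pr[(\alpha_1,\alpha_2,\alpha_3)\text{ is a basis}] \ge 1/8 - \sum_{i=1}^3 \Pr[\alpha_i\neq\bar\alpha_i] \ge 1/8 - 3\varepsilon.
\]

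I do not anticipate a real obstacle: the argument is a standard averaging-plus-coupling reduction, and every ingredient (convexity of statistical distance, existence of a maximal coupling, invariance of $0$ under conjugation so the non-zero hypothesis transfers to $\bar\alpha_i$) is routine. The only care needed is to ensure independence is preserved when moving to the coupled variables, which is why I take the coupling $i$ by $i$ using independent randomness.
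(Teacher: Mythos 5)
Your argument is correct and follows essentially the same route as the paper: introduce the companion variables $g_i^{-1}\alpha_i g_i$ with $g_i$ uniform in $\SL_2(\F_\ell)$, observe they are exactly invariant, non-zero and independent, invoke Lemma~\ref{lem:probability-new-samples-local}, and pay $3\varepsilon$ overall. The only (immaterial) difference is that the paper transfers the probability of the basis event by bounding the statistical distance between the two triples via the triangle inequality, whereas you use a maximal coupling and a union bound; both are standard and give the same bound $1/8-3\varepsilon$.
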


\begin{proof}
Let $g_i \in \SL_2(\F_\ell)$ be uniformly distributed and independent. Let $\beta_i = g_i^{-1}\alpha_ig_i$, three independent variables.
For each~$i$, the statistical distance between $\alpha_i$ and $\beta_i$ is at most $\varepsilon$.
By the triangle inequality, the statistical distance between $(\alpha_1,\alpha_2,\alpha_3)$ and $(\beta_1,\beta_2,\beta_3)$ is at most $3\varepsilon$.
From Lemma~\ref{lem:probability-new-samples-local}, $(\beta_1,\beta_2,\beta_3)$ is a basis of $M_2(\F_\ell)/\F_\ell$ with probability at least $1/8$. Therefore, $(\alpha_1,\alpha_2,\alpha_3)$ is a basis of $M_2(\F_\ell)/\F_\ell$ with probability at least $1/8 - 3\varepsilon$.
\qed\end{proof}

%
%

We now show that these results about $M_2(\F_\ell)$ have consequences in~$M_2(\Z_\ell)$.

\begin{definition}\label{def:level}
  The \emph{level} of $\alpha \in M_2(\Z_\ell) \setminus \Z_\ell$ at $\ell$ is the largest integer $\level_\ell(\alpha)$ such that $\alpha \in \Z_\ell + \ell^{\level_\ell(\alpha)}M_2(\Z_\ell)$.
 \end{definition}

 
\begin{proposition}\label{prop:special-Nakayama}
Suppose $\ell > 2$.
Let $\alpha_1,\alpha_2,\alpha_3 \in M_2(\Z_\ell) \setminus \Z_\ell$
be three elements of level $a$.
Then $(1,\alpha_1,\alpha_2,\alpha_3)$ is a $\Z_\ell$-basis of $\Z_\ell + \ell^{a} M_2(\Z_\ell)$ if and only if $(\alpha_1,\alpha_2,\alpha_3)$ is an $\F_\ell$-basis of
\((\Z_\ell + \ell^{a} M_2(\Z_\ell))/(\Z_\ell + \ell^{a+1}M_2(\Z_\ell)) \cong M_2(\F_\ell)/\F_\ell.\)
\end{proposition}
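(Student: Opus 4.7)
The plan is to recognize this as a Nakayama-lemma statement, once the ambient module is properly decomposed. Since $\ell$ is odd, $2$ is invertible in $\Z_\ell$, so one has the trace decomposition $M_2(\Z_\ell) = \Z_\ell\cdot 1 \oplus T$, where $T\subset M_2(\Z_\ell)$ is the free rank-$3$ submodule of trace-zero matrices. This immediately yields
\[
  M := \Z_\ell + \ell^a M_2(\Z_\ell) = \Z_\ell \oplus \ell^a T,
\]
a free $\Z_\ell$-module of rank $4$, and likewise $N := \Z_\ell + \ell^{a+1}M_2(\Z_\ell) = \Z_\ell \oplus \ell^{a+1}T$, so that $M/N \cong T/\ell T$ has $\F_\ell$-dimension $3$. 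Writing each $\alpha_i = s_i + \ell^a t_i$ with $s_i\in\Z_\ell$ and $t_i\in T$, the hypothesis that $\alpha_i$ has level exactly $a$ is precisely the assertion that $t_i \not\equiv 0 \bmod \ell$, and the image of $\alpha_i$ in $M/N$ is identified with the class of $t_i$ in $T/\ell T$.

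For the forward direction, if $(1,\alpha_1,\alpha_2,\alpha_3)$ is a $\Z_\ell$-basis of $M$, then its image modulo $N$ spans $M/N$ over $\F_\ell$; but $1$ maps to $0$, so the three images $(\bar{t_1},\bar{t_2},\bar{t_3})$ span a three-dimensional space and therefore form a basis.

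For the converse, I would apply Nakayama (equivalently, just compute in $M/\ell M$, since $M$ is free of rank $4$). We have $\ell M = \ell \Z_\ell \oplus \ell^{a+1}T$, so the decomposition induces $M/\ell M \cong \F_\ell \oplus T/\ell T$ of dimension $4$. In these coordinates the images of $1,\alpha_1,\alpha_2,\alpha_3$ are
\[
  (1,0),\ (\bar{s_1},\bar{t_1}),\ (\bar{s_2},\bar{t_2}),\ (\bar{s_3},\bar{t_3}).
\]
Subtracting appropriate $\F_\ell$-multiples of $(1,0)$ reduces the question to whether $(\bar{t_1},\bar{t_2},\bar{t_3})$ is a basis of $T/\ell T$, which is the assumption. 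Hence $(1,\alpha_1,\alpha_2,\alpha_3)$ generates $M/\ell M$, Nakayama gives that it generates $M$, and a surjection between free $\Z_\ell$-modules of the same finite rank is an isomorphism, so it is a basis.

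The only place where any care is needed is in justifying the trace decomposition $M_2(\Z_\ell)=\Z_\ell\oplus T$, for which the hypothesis $\ell>2$ is essential; everything else is a direct bookkeeping argument in the graded pieces and a standard application of Nakayama's lemma, so I do not expect a genuine obstacle.
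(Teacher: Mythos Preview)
Your proof is correct and takes the same approach as the paper, which simply records ``The forward implication is clear. The converse is Nakayama's lemma.'' Your write-up supplies the details (the trace splitting $M_2(\Z_\ell)=\Z_\ell\oplus T$ using $\ell>2$, and the reduction to $M/\ell M$) that make this one-line invocation of Nakayama transparent.
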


\begin{proof}
The forward implication is clear. The converse is Nakayama's lemma.
\qed\end{proof}

\subsection{Dealing with hard-to-factor numbers}
In the previous section, we have studied the properties of conjugacy-invariant distributions locally at a prime $\ell$.
However, in our application, we may be confronted to local obstructions at an integer $N$ which is hard to factor; it is then not possible to isolate the primes $\ell$ to apply the results of the previous section.

In this section, fix a positive integer $N$. We imagine that $N$ is hard to factor, and rework the previous results ``\emph{locally at~$N$}''.
We suppose that $B$ does not ramify at any prime factor of $N$. Recall that $\order\subset B$ is a maximal order.


\begin{definition}\label{def:reduced}
An element $\alpha \in \cO$ is $N$-reduced if $\alpha \not\in \Z + N \cO$.
\end{definition}

\begin{lemma}\label{lem:reduced-small-level}
Let $\alpha \in \cO$ be a random variable supported on $N$-reduced elements.
Then, there exist a prime factor $\ell$ of $N$ and an integer $a$ such that $\ell^{a+1}$ divides~$N$ and $\Pr[\level_\ell(\alpha) = a] \geq (\log N)^{-1}$.
\end{lemma}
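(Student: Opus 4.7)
The plan is to turn the lemma into a simple pigeonhole argument after unpacking what ``$N$-reduced'' means prime-by-prime.

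The first step is a CRT observation. Since $B$ is unramified at every prime divisor of $N$, we have $\cO_\ell \cong M_2(\Z_\ell)$ for each $\ell \mid N$, and the Chinese Remainder Theorem gives
\[
  \cO / N\cO \;\cong\; \prod_{\ell \mid N} \cO / \ell^{v_\ell(N)}\cO.
\]
I would use this to show that $\alpha \in \Z + N\cO$ if and only if for every prime $\ell \mid N$ we have $\alpha \in \Z_\ell + \ell^{v_\ell(N)}\cO_\ell$ (the forward direction is immediate; the reverse direction uses approximation of the local $c_\ell \in \Z_\ell$ by integers and CRT on $\Z$ to find a single $c \in \Z$ with $\alpha - c \in N\cO$).

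Consequently, if $\alpha$ is $N$-reduced, there exists a prime $\ell \mid N$ with $\alpha \notin \Z_\ell + \ell^{v_\ell(N)}\cO_\ell$. Since $\Z_\ell \subseteq \Z_\ell + \ell^{v_\ell(N)}\cO_\ell$, this forces $\alpha \notin \Z_\ell$, so $\level_\ell(\alpha)$ is well-defined and satisfies $\level_\ell(\alpha) \leq v_\ell(N) - 1$, i.e.\ $\ell^{\level_\ell(\alpha)+1} \mid N$.

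The conclusion then follows from pigeonhole. Letting $S = \{(\ell, a) : \ell \mid N \text{ prime},\ 0 \leq a \leq v_\ell(N)-1\}$, the argument above shows that the events $\{\level_\ell(\alpha) = a\}$ for $(\ell, a) \in S$ cover the support of $\alpha$, so
\[
  1 \;=\; \Pr[\alpha \text{ is $N$-reduced}] \;\leq\; \sum_{(\ell,a) \in S} \Pr[\level_\ell(\alpha) = a].
\]
The size of $S$ is $\sum_{\ell \mid N} v_\ell(N) \leq \sum_{\ell \mid N} v_\ell(N) \log_2(\ell) = \log_2(N)$, since $\log_2(\ell) \geq 1$ for every prime $\ell$. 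Hence some summand is at least $(\log N)^{-1}$, giving the required pair $(\ell, a)$. There is no real obstacle here; the only thing to be careful about is justifying the ``if and only if'' step globally (rather than just locally) via CRT.
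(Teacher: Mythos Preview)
Your proof is correct and follows essentially the same pigeonhole argument as the paper: both show that the events $\{\level_{\ell}(\alpha)=a\}$ for $\ell\mid N$ prime and $0\le a<v_\ell(N)$ cover the support, and that there are at most $\sum_{\ell\mid N} v_\ell(N)\le \log N$ such events. The paper simply asserts without comment that an $N$-reduced element must have $\level_{\ell_j}(\alpha)<e_j$ for some $j$, whereas you spell out the CRT justification for this; that extra care is fine and does not constitute a different approach.
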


\begin{proof}
Write the prime factorisation $N = \prod_{i=1}^t\ell_i^{e_i}$.
Let $i$ and $a < e_i$ which maximise the probability $q = \Pr[\level_{\ell_i}(\alpha) = a]$.
We have
\begin{align*}
    \sum_{j = 1}^t\Pr[\level_{\ell_j}(\alpha) < e_j]
& = \sum_{\beta}\Pr[\alpha = \beta] \cdot \#\{j \mid
  \level_{\ell_j}(\beta) < e_j\} \geq 1,
\end{align*}
where the last inequality follows from the fact that the distribution is supported on
  $N$-reduced elements, so for every $\beta$, there exists $j$ such that
  $\level_{\ell_j}(\beta) < e_j$.
  We get
\begin{align*}
1 \leq \sum_{j = 1}^t\Pr[\level_{\ell_j}(\alpha) < e_j] = \sum_{j = 1}^t\sum_{x < e_j} \Pr[\level_{\ell_j}(\alpha) = x]  \leq q \sum_{j = 1}^t e_j \leq q\log(N).
\end{align*}
We deduce $q \geq (\log N)^{-1}$.
\qed
\end{proof}

\begin{definition}
Let $M$ be a ring with an isomorphism $\iota \colon M_2(\Z/N\Z)\to M$.
The distribution of a random $\alpha \in M/\iota(\Z/N\Z)$ is \emph{$\varepsilon$-close to $\SL_2(\Z/N\Z)$-invariant} if,
for every $g \in \iota(\SL_2(\Z/N\Z))$, the statistical distance between the
  distributions of  $\alpha$ and of $g^{-1}\alpha g$ is at most~$\varepsilon$.
 \end{definition}

\begin{lemma}\label{lemma:good-distribution-leads-to-success-mod-n}
Let $R = \Z/N\Z$,  $M = \cO/N\cO \cong M_2(R)$ and $\overline M = M/R$. Let~$\ell$ be a prime factor of $N$, and $a$ an integer such that $\ell^{a+1} \mid N$.
Consider a distribution~$\nu$ on $\overline M$ that is $\varepsilon$-close to $\SL_2(R)$-invariant.
For $\alpha$ sampled from $\nu$, let $q$ be the probability that
$\alpha \neq 0$ and that $a$ is the largest integer such that $\alpha \in \ell^{a}\overline M$.

\begin{enumerate}
\item
Let $\alpha_1,\alpha_2,\alpha_3 \in \overline M$ independent random elements with distribution $\nu$.
Let $\Lambda$ be the subgroup generated by $(\alpha_1,\alpha_2,\alpha_3)$.
We have $\Lambda / \ell^{a+1}\overline M = \ell^{a}\overline M /\ell^{a+1}\overline M$ with probability at least $q^3/8 - 3\varepsilon$.
\item 
Let $\alpha_1,\alpha_2,\alpha_3 \in \cO$
be independent random elements such that $\alpha_i \bmod \Z + N\cO$ follows the distribution $\nu$.
Let $\Lambda$ be the lattice generated by $(1,\alpha_1,\alpha_2,\alpha_3)$.
Then $\Lambda \otimes \Z_{\ell} = (\Z+ \ell^{a}\cO) \otimes \Z_{\ell}$ with probability at least $q^3/8 - 3\varepsilon$.
\end{enumerate}
\end{lemma}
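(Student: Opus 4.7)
The plan is to deduce both parts from the $\F_\ell$-level result of Proposition~\ref{prop:probability-new-samples-local-2}. First I handle (1) by conditioning on the event $E$ that each $\alpha_i$ has level exactly $a$ in $\overline M$, i.e.\ belongs to $\ell^a\overline M \setminus \ell^{a+1}\overline M$; by independence $\Pr[E]=q^3$. Then (2) follows from (1) by Nakayama-type reasoning (Proposition~\ref{prop:special-Nakayama}) applied locally at $\ell$.

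For (1), conjugation by $\SL_2(R)$ preserves the filtration by powers of $\ell$, so the event of being of level exactly $a$ is conjugation-invariant, and the conditional distribution on this event is $(\varepsilon/q)$-close to $\SL_2(R)$-invariant. Pushing forward to the quotient $\ell^a\overline M / \ell^{a+1}\overline M$, which is identified with $M_2(\F_\ell)/\F_\ell$ using the hypothesis $\ell^{a+1}\mid N$, the conjugation action descends through the reduction $\SL_2(R) \twoheadrightarrow \SL_2(\F_\ell)$. I thus obtain three independent nonzero samples in $M_2(\F_\ell)/\F_\ell$ whose common distribution is $(\varepsilon/q)$-close to $\SL_2(\F_\ell)$-invariant, and Proposition~\ref{prop:probability-new-samples-local-2} yields that they form a basis with conditional probability at least $1/8 - 3\varepsilon/q$. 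Multiplying by $\Pr[E]=q^3$ and using $q\le 1$ gives a total probability of at least $q^3/8 - 3q^2\varepsilon \ge q^3/8 - 3\varepsilon$ (the final bound also holds trivially when the intermediate expression goes negative, since probabilities are nonnegative).

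For (2), on the good event from (1) the reductions of $\alpha_1, \alpha_2, \alpha_3$ span $\ell^a\overline M / \ell^{a+1}\overline M$. The identification $\cO/\ell^{a+1}\cO = \cO_\ell/\ell^{a+1}\cO_\ell$ (valid because $\ell^{a+1}\mid N$) lets me translate the situation to the $\ell$-adic side: each $\alpha_i \in \cO$ has level exactly $a$ in $M_2(\Z_\ell) = \cO_\ell$ in the sense of Definition~\ref{def:level}, and their reductions form an $\F_\ell$-basis of the isomorphic quotient $(\Z_\ell + \ell^a\cO_\ell)/(\Z_\ell + \ell^{a+1}\cO_\ell)$. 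Proposition~\ref{prop:special-Nakayama} then gives that $(1,\alpha_1,\alpha_2,\alpha_3)$ is a $\Z_\ell$-basis of $\Z_\ell + \ell^a\cO_\ell = (\Z + \ell^a\cO)\otimes\Z_\ell$, so $\Lambda\otimes\Z_\ell = (\Z + \ell^a\cO)\otimes\Z_\ell$, as required.

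The main technical obstacle is the bookkeeping around two different notions of ``level'' (modulo $N$ in $\overline M$ versus at $\ell$ in $M_2(\Z_\ell)$), together with the compatible identifications of $\ell^a\overline M/\ell^{a+1}\overline M$ with both $M_2(\F_\ell)/\F_\ell$ and $(\Z_\ell + \ell^a\cO_\ell)/(\Z_\ell + \ell^{a+1}\cO_\ell)$; once these are set up carefully using $\ell^{a+1}\mid N$ and the isomorphism $\cO_\ell \cong M_2(\Z_\ell)$ (which is available because $B$ is unramified at $\ell$), both parts follow directly.
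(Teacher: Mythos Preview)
Your proof is correct and follows essentially the same route as the paper's: condition on all three samples having level exactly~$a$, reduce to the quotient $\ell^a\overline M/\ell^{a+1}\overline M \cong M_2(\F_\ell)/\F_\ell$, apply the $\F_\ell$-level result, and then deduce Item~2 via Proposition~\ref{prop:special-Nakayama}. The only cosmetic difference is that the paper handles the $\varepsilon>0$ case by first replacing the triple (at cost~$3\varepsilon$) with an exactly $\SL_2(R)$-invariant one and then invoking Lemma~\ref{lem:probability-new-samples-local}, whereas you track $\varepsilon\mapsto\varepsilon/q$ through the conditioning and apply Proposition~\ref{prop:probability-new-samples-local-2} directly; both orderings yield the same bound.
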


\begin{proof}
\noindent\textbf{Item 1, with $\varepsilon = 0$.}
For any $\alpha \in M$, let $\level_\ell(\alpha)$ be the largest integer such that $\alpha \in R + \ell^{a}M$ when it exists, and $\level_\ell(\alpha) = \infty$ otherwise.
Let $L$ be the event that $\level_\ell(\alpha_i) = a$ for all $i\in\{1,2,3\}$. 
Note that the level is constant over any $\SL_2(R)$-conjugacy class, so conditional on $L$, the variables $\alpha_i$ are still $\SL_2(R)$-invariant.
If $L$ occurs, the random variables $\alpha_i \bmod \ell^{a+1}\overline M$ are non-zero and $\SL_2(R)$-invariant in $\ell^{a}\overline M /\ell^{a+1}\overline M \cong M_2(\F_\ell)/\F_\ell$.
The result follows from Lemma~\ref{lem:probability-new-samples-local} and the fact that $\Pr[L] = q^3$.\\

\noindent\textbf{Item 1, with $\varepsilon > 0$.}
By the triangular inequality, the triple $(\alpha_1,\alpha_2,\alpha_3)$ is $3\varepsilon$-close to a triple of $\SL_2(\Z/N\Z)$-invariant elements.
The result thus follows from the case $\varepsilon = 0$ and the defining property of the statistical distance.\\

\noindent\textbf{Item 2.} This is the combination of Item 1 with Proposition~\ref{prop:special-Nakayama}.
\qed
\end{proof}

\begin{proposition}\label{prop:good-distribution-leads-to-success}
Assume that~$N$ is not a cube.
Let $\alpha_1,\alpha_2,\alpha_3 \in \cO$
be three independent random elements from a distribution~$\alpha$ that satisfies the following properties:
\begin{enumerate}[(1)]
\item \label{prop:good-distribution-leads-to-success-item-1} $\alpha$ is supported on $N$-reduced elements;
\item \label{prop:good-distribution-leads-to-success-item-2} $\alpha \bmod \Z + N\cO$ is $\varepsilon$-close to $\SL_2(\Z/N\Z)$-invariant for $\varepsilon < \frac{1}{6000000\cdot (\log N)^{12}}$.
\end{enumerate}
Let $\Lambda$ be the lattice generated by $(1,\alpha_1,\alpha_2,\alpha_3)$.
With probability $\Omega\left((\log N)^{-12}\right)$, either $\gcd(N,[\cO:\Lambda]) = 1$, or $[\cO:\Lambda] = N^nK$ with $\gcd(N,K) \not\in \{1,N\}$.
\end{proposition}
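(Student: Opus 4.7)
The strategy is to combine the "extract a dominant level" step (Lemma~\ref{lem:reduced-small-level}) with the "upgrade to exact local structure" step (Lemma~\ref{lemma:good-distribution-leads-to-success-mod-n}\,(2)) to pin down $\Lambda$ at a single prime of $N$, then to exploit the hypothesis that $N$ is not a cube to rule out the "uniformly $n$-th power" bad case.

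First I would apply Lemma~\ref{lem:reduced-small-level} to the distribution of $\alpha\bmod(\Z+N\cO)$ (supported on $N$-reduced classes by Hypothesis~\ref{prop:good-distribution-leads-to-success-item-1}) to obtain a prime $\ell_0\mid N$ and an integer $a_0$ with $\ell_0^{a_0+1}\mid N$ and
\[
q:=\Pr[\level_{\ell_0}(\alpha)=a_0]\ge(\log N)^{-1}.
\]
I would then invoke Lemma~\ref{lemma:good-distribution-leads-to-success-mod-n}\,(2) at $(\ell_0,a_0)$, with the $\SL_2(\Z/N\Z)$-invariance input provided by Hypothesis~\ref{prop:good-distribution-leads-to-success-item-2}, to conclude that the event
\[
E_0\colon\quad \Lambda\otimes\Z_{\ell_0}=(\Z+\ell_0^{a_0}\cO)\otimes\Z_{\ell_0}
\]
holds with probability at least $q^3/8-3\varepsilon$. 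Combined with the bound $\varepsilon\le(6\cdot 10^6\cdot (\log N)^{12})^{-1}$, this probability is $\Omega((\log N)^{-3})$, in particular $\Omega((\log N)^{-12})$.

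Next I would analyse what $E_0$ gives: writing $e_\ell:=v_\ell(N)$, the event $E_0$ forces $v_{\ell_0}([\cO:\Lambda])=3a_0$ with $0\le 3a_0<3e_{\ell_0}$. The complement $\mathcal B$ of the proposition's conclusion is equivalent to $v_\ell([\cO:\Lambda])=n\,e_\ell$ for every prime $\ell\mid N$, for one common positive integer $n$. Intersecting $\mathcal B$ with $E_0$ forces $n=3a_0/e_{\ell_0}$ to be a positive integer at most $2$, so $n\in\{1,2\}$; unwinding the divisibility, both cases imply $3\mid e_{\ell_0}$.

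The final step, and the main obstacle, is to rule out $\mathcal B$ using the hypothesis that $N$ is not a cube. This hypothesis supplies a prime $\ell^*\mid N$ with $3\nmid e_{\ell^*}$. Under $\mathcal B$ we would need $v_{\ell^*}([\cO:\Lambda])=n\,e_{\ell^*}$ for the same $n\in\{1,2\}$, a highly specific condition on $\Lambda$ at $\ell^*$. I would rule this out by bootstrapping at $\ell^*$: either by replaying Lemmas~\ref{lem:reduced-small-level} and~\ref{lemma:good-distribution-leads-to-success-mod-n} on a conditional distribution that isolates the behaviour at $\ell^*$ and pins down $v_{\ell^*}([\cO:\Lambda])$ independently, or by a direct local analysis at $\ell^*$ using the $\SL_2(\Z/N\Z)$-invariance to force a conflict with the uniform relation $v_\ell=n\,e_\ell$. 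Each bootstrapping step incurs a $(\log N)^{O(1)}$ loss, and the exponent $12$ in the statement is comfortable enough to absorb all such losses together with the $3\varepsilon$ slack.
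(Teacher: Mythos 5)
Your first two steps (applying Lemma~\ref{lem:reduced-small-level} to get a prime $\ell_0$ and dominant level $a_0<e_{\ell_0}$, then Lemma~\ref{lemma:good-distribution-leads-to-success-mod-n}(2) to pin $\Lambda\otimes\Z_{\ell_0}=(\Z+\ell_0^{a_0}\cO)\otimes\Z_{\ell_0}$ with probability $q^3/8-3\varepsilon$) match the paper, and your reduction of the failure event $\mathcal B$ to ``$v_\ell([\cO:\Lambda])=ne_\ell$ for all $\ell\mid N$ with a common $n\in\{1,2\}$, forcing $3\mid e_{\ell_0}$'' is correct. But the last step, which you yourself flag as the main obstacle, is where the actual content of the proposition lies, and neither of the two alternatives you sketch works as stated. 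The hypothesis ``$N$ is not a cube'' hands you a prime $\ell^*$ with $3\nmid e_{\ell^*}$, but the lemmas give you no handle at $\ell^*$ in general: the distribution may satisfy $\level_{\ell^*}(\alpha)\ge e_{\ell^*}$ (or indeed have arbitrarily large, adversarially chosen level at $\ell^*$) with probability $1$, in which case $N$-reducedness says nothing at $\ell^*$, no conditioning ``isolates the behaviour at $\ell^*$'', and conjugation-invariance cannot pin down $v_{\ell^*}([\cO:\Lambda])$ --- so ``replaying'' Lemmas~\ref{lem:reduced-small-level} and~\ref{lemma:good-distribution-leads-to-success-mod-n} at $\ell^*$ is impossible. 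Moreover, a generic ``bootstrapping'' that asks two small-probability events (one at $\ell_0$, one at $\ell^*$) to hold simultaneously cannot be justified: the events concern the same three samples, independence is unavailable, and two events each of probability $\Omega((\log N)^{-3})$ may well be disjoint; a $(\log N)^{O(1)}$ loss does not fix this.

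The paper resolves exactly this point by a dichotomy on $q_+=\Pr[\level_{\ell^*}(\alpha)\ge e_{\ell^*}]$ (their $\ell_1$), with the threshold chosen so that the joint-occurrence problem disappears. If $q_+^3$ is within half the target probability of $1$, then the event $F$ that all three samples have level $\ge e_{\ell^*}$ at $\ell^*$ has probability so close to $1$ that $\Pr[E\wedge F]\ge\Pr[E]+\Pr[F]-1$ is still $\Omega((\log N)^{-3})$, where $E$ is your event at $\ell_0$; on $E\wedge F$ one gets $v_{\ell_0}([\cO:\Lambda])=3a_0<3e_{\ell_0}$ but $v_{\ell^*}([\cO:\Lambda])\ge 3e_{\ell^*}$, which already contradicts $\mathcal B$ for every $n$ --- note that cube-freeness is not used in this branch at all. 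If instead $q_+$ is bounded away from $1$, then the level at $\ell^*$ itself takes some fixed value $a_1<e_{\ell^*}$ with probability $\Omega((\log N)^{-4})$, Lemma~\ref{lemma:good-distribution-leads-to-success-mod-n} pins $\Lambda$ at $\ell^*$, and only here does $3a_1=ne_{\ell^*}$ together with $3\nmid e_{\ell^*}$ and $a_1<e_{\ell^*}$ force $n=0$; this is the sole place the non-cube hypothesis enters, and it is also what produces the exponent $12$ (the probability $q^3/8$ with $q=\Omega((\log N)^{-4})$). Without this case split and the near-$1$ threshold trick, your plan does not close, so as written there is a genuine gap.
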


\begin{remark}
The exhibited event either produces a lattice $\Lambda$ that is saturated at every prime factor of $N$ (when $\gcd(N,[\cO:\Lambda]) = 1$), or reveals a non-trivial factor of $N$.
\end{remark}

\begin{proof}
Let $\Success$ be the event that either $\gcd(N,[\cO:\Lambda]) = 1$, or $[\cO:\Lambda] = N^nK$ where $\gcd(N,K) \not\in \{1,N\}$.
Write the prime factorisation $N = \prod_{i=1}^t\ell_i^{e_i}$.
Since $N$ is not a cube, we may assume without loss of generality that $\gcd(e_1,3) = 1$.
Write $\cO_i = \cO \otimes \Z_{\ell_i}$ and $\Lambda_i = \Lambda \otimes \Z_{\ell_i}$.

We now split the proof in two cases, depending on the value of \[q_+ = \Pr[\level_{\ell_1}(\alpha) \geq  e_1].\]

\noindent
\textbf{Case 1:} suppose $q_+^3 > 1 - \frac 1 2\left(\frac{1}{8\cdot (\log N)^3} - 3 \varepsilon\right)$. 
Let $\ell_i$ and $a_i$ be the $\ell$ and $a$ from Lemma~\ref{lem:reduced-small-level}.
Let $q = \Pr[\level_{\ell_i}(\alpha) = a_i] > (\log N)^{-1}$.
Let $E$ be the event that 
$\Lambda_i = \Z_{\ell_i} + \ell_i^{a_i}\cO_i,$
and let $F$ be the event that 
$\Lambda_1 \subseteq \Z_{\ell_1} + \ell_1^{e_1}\cO_1$.
Suppose $E$ and $F$ both happen. In that situation, $[\cO_i:\Lambda_i] = \ell_i^{3a_i} < \ell_i^{3e_i}$, 
and $[\cO_1:\Lambda_1] \geq [\cO_1: \Z_{\ell_1} + \ell^{e_1}\cO_1] \geq \ell_1^{3e_1}$, hence if $[\cO:\Lambda] = N^nK$, then $\gcd(K,N) \not \in \{1,N\}$.
So if $E$ and $F$ both happen, then $\Success$ happens.
We have 
\begin{align*}
\Pr[F] &= 
\Pr\left[
\bigwedge_{j=1}^3(\level_{\ell_1}(\alpha_j) \geq e_1)\right]
= 
\prod_{j=1}^3\Pr\left[\level_{\ell_1}(\alpha_i) \geq e_1\right] = q_+^3.
\end{align*}
From Lemma~\ref{lemma:good-distribution-leads-to-success-mod-n}, $\Pr[E] = q^3/8 - 3\varepsilon$.
We deduce 
\begin{align*}
\Pr[\Success] &\geq \Pr[E \wedge F] \geq \Pr[E] + \Pr[F] - 1 \\
&= \frac{q^3}{8} - 3\varepsilon + q_+^3 - 1
\geq \frac 3 2\left(\frac{1}{24 \cdot(\log N)^3} -  \varepsilon\right) \geq \frac{1}{32 \cdot(\log N)^3}.
\end{align*}

\noindent
\textbf{Case 2:} suppose $q_+^3 \leq 1 - \frac 1 2\left(\frac{1}{8 \cdot(\log N)^3} - 3 \varepsilon\right)$. 
Let $a_1 < e_1$ which maximises the probability
\[q = \Pr[\level_{\ell_1}(\alpha) = a_1].\]
We have
\[1 - q_+ = \sum_{x < e_1} \Pr[\level_{\ell_1}(\alpha) = x] \leq e_1q.\]
Then 
\[q \geq \frac{1 - q_+}{\log(N)} \geq \frac{1 - q_+^3}{3\log(N)} \geq  \frac{1}{48\cdot (\log N)^4} - \frac{\varepsilon}{2\log(N)}.\]
Let $G$ be the event that $\Lambda_1 = \Z_{\ell_1} + \ell_1^{a_1}M_1$.
If $G$ happens and $[\cO  : \Lambda]$ is of the form $N^nK$ with $\gcd(N,K) = 1$, then $3a_1 = ne_1$. In that situation, $\gcd(e_1,3) = 1$ implies that $3$ divides $n$, and $a_1 < e_1$ implies that $n < 3$; together, these imply $n = 0$, so $\gcd(N,[\cO:\Lambda]) = 1$. This proves that when $G$ happens, then $\Success$ happens.
We deduce
\begin{align*}
\Pr[\Success] &\geq \Pr[G] \geq  \frac{q^3}{8} - 3\varepsilon
= \left(\frac{1}{96 \cdot(\log N)^4} - \frac{\varepsilon}{4\log(N)}\right)^3 - 3\varepsilon\\
&\geq 3\left(\frac{1}{3\cdot 100^3 \cdot (\log N)^{12}} - \varepsilon\right)\\
&\geq \frac{1}{2000000 \cdot (\log N)^{12}},
\end{align*}
which concludes the proof.
\qed\end{proof}

\section{Saturation and reduction}

In this section, we present three algorithms to saturate a known order of endomorphisms
of a supersingular curve, and to reduce an endomorphism (in the sense of Definition~\ref{def:reduced}). The overall strategy is folklore, but only
a crucial new ingredient allows it to work in polynomial time:
the division algorithm due to Robert~\cite{RobertApplications} (see
Proposition~\ref{prop:division}).

Let us start with saturation, which is used to deal with problematic primes in
the main reduction. 
The running time of $\Saturate_\ell$ is polynomial in $\ell$, not in~$\log
\ell$, so we only use it for small $\ell$.

\begin{algorithm}[h]
  \caption{$\Saturate_\ell(R_0)$: turns an order into a super-order which is maximal at $\ell$.}\label{alg:Saturate}
  \begin{algorithmic}[1]
    \REQUIRE {A supersingular elliptic curve $E/\F_{p^2}$, a prime $\ell\neq p$, and an order $R_0 \subset \End(E)$.}
    \ENSURE {An order $R$ such that $R_0 \subseteq R \subseteq \End(E)$ and $R \otimes \Z_\ell$ is maximal.}
    \STATE $L \gets R_0$
    \WHILE {$\gcd(\disc(L),\ell) \neq 1$} 
    \FOR {lattices $L'$ such that $[L':L] = \ell$}
    \STATE $\alpha \gets$ an element of $L$ such that $\alpha/\ell \in L' \setminus L$
    \STATE $\beta \gets \Divide(\alpha,\ell)$ an efficient representation of $\alpha / \ell$
    \COMMENT{Proposition~\ref{prop:division}}
    \IF {$\beta\neq\bot$}
    \STATE $L \gets L + \Z\beta$
    \ENDIF
    \ENDFOR
    \ENDWHILE
    \RETURN $L$
  \end{algorithmic}
\end{algorithm}

\begin{algorithm}[h]
  \caption{$\SaturateRam(R_0)$: turns an order into a super-order which is maximal
   at $p$.}\label{alg:Saturatep}
  \begin{algorithmic}[1]
    \REQUIRE {A supersingular elliptic curve $E/\F_{p^2}$, and an order $R_0 \subset \End(E)$.}
    \ENSURE {An order $R$ such that $R_0 \subseteq R \subseteq \End(E)$ and $R \otimes \Z_p$ is maximal.}
    \STATE $(\order_0,\iota) \gets$ an order~$\order_0$ given by
      multiplication table, and an isomorphism~$\iota\colon \order_0 \to R_0$
    \STATE $\order \gets$ an order containing~$\order_0$, maximal at~$p$
    with~$[\order:\order_0]$ a power of~$p$
    \COMMENT{\cite{maxord1} or~\cite{maxord2}}
    \STATE $(1,b_1,b_2,b_3) \gets$ a basis of~$\order$
    \FOR {$i\in\{1,2,3\}$}
      \STATE Write~$b_i = a_i/p^{k_i}$ with~$a_i\in \order_0$
      \STATE $\alpha_i \gets \iota(a_i)$
          \STATE $\beta_i \gets \Divide(\alpha_i,p^{k_i})$ an efficient representation of $\alpha_i/p^{k_i}$
    \COMMENT{Proposition~\ref{prop:division}}
    \ENDFOR
    \RETURN $\Z+\Z\beta_1+\Z\beta_2+\Z\beta_3$
  \end{algorithmic}
\end{algorithm}

\begin{proposition}
  Algorithm~\ref{alg:Saturate} ($\Saturate_\ell$) is correct and runs in
  time polynomial in~$\ell$ and the size of the input.
\end{proposition}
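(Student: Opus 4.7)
\medskip
\noindent\textbf{Proof plan.}
The plan is to verify four things in turn: the loop invariant $L \subseteq \End(E)$, equivalence between the termination condition and $\ell$-maximality, strict progress at each outer iteration, and a polynomial cost bound.

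First I would check the invariant $R_0 \subseteq L \subseteq \End(E)$: it holds at initialisation by hypothesis, and is preserved by each update $L \leftarrow L + \Z\beta$, since by Proposition~\ref{prop:division}, $\Divide(\alpha,\ell)$ returns a non-$\bot$ value only when $\alpha \in \ell\End(E)$, in which case $\beta$ efficiently represents $\alpha/\ell \in \End(E)$.

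Next I would identify the stopping condition with $\ell$-maximality. Since $\ell\neq p$, the algebra $B = \End(E)\otimes\Q$ is unramified at $\ell$, so the maximal order $\End(E)$ has $\disc(\End(E)) = p^2$ coprime to $\ell$; combined with the formula $\disc(L) = [\End(E):L]^2\,\disc(\End(E))$, this gives $\gcd(\disc(L),\ell)=1$ if and only if $L\otimes\Z_\ell = \End(E)\otimes\Z_\ell$, proving correctness upon termination. For progress, whenever $L$ is not $\ell$-maximal, the finite $\ell$-group $(\End(E)\otimes\Z_\ell)/(L\otimes\Z_\ell)$ is nontrivial, so there exists an intermediate order $L \subsetneq L' \subseteq \End(E)$ with $L' \subseteq (1/\ell)L$ and $[L':L] = \ell$; this $L'$ appears in the inner enumeration, and for it any $\alpha$ selected by the algorithm satisfies $\alpha/\ell\in\End(E)$, so $\Divide$ succeeds and $L$ strictly grows. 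The index $[\End(E):L]$ therefore shrinks by at least a factor of $\ell$ per outer iteration, bounding the number of iterations by $\log_\ell[\End(E):R_0]$, which is polynomial in the input size.

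The step requiring the most care will be the per-iteration cost, because the inner loop is exhaustive rather than clever. The index-$\ell$ superlattices $L'$ of $L$ correspond bijectively to $\F_\ell$-hyperplanes in $(1/\ell)L/L \cong \F_\ell^4$, of which there are $(\ell^4-1)/(\ell-1) = O(\ell^3)$; enumerating them, extracting a witness $\alpha$, and invoking $\Divide$ (polynomial time by Proposition~\ref{prop:division}) all cost time polynomial in $\ell$ and in the size of the input. Multiplying by the polynomially bounded number of outer iterations yields the claimed overall polynomial running time in $\ell$ and the input size. \qed
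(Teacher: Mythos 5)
Your termination and complexity analysis matches the paper's: whenever $\gcd(\disc(L),\ell)\neq 1$, at least one index-$\ell$ superlattice $L'$ of $L$ lies inside $\End(E)$ (because $\disc(\End(E))=p^2$ is prime to $\ell$), so some division succeeds and $\disc(L)$ drops by a factor $\ell^2$; and there are only $(\ell^4-1)/(\ell-1)=O(\ell^3)$ candidates $L'$ per pass. One minor slip: index-$\ell$ superlattices of $L$ correspond to \emph{lines} (order-$\ell$ subgroups) of $(1/\ell)L/L\cong\F_\ell^4$, not hyperplanes; the count happens to be the same, so nothing breaks, and likewise your ``intermediate order $L'$'' should just be ``intermediate lattice'' --- an order need not exist at that index and is not needed.

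There is, however, a genuine gap in the correctness part: the algorithm is specified to return an \emph{order}, i.e.\ a subring of $\End(E)$, and nothing in your proof shows that the final lattice $L$ is closed under multiplication. The updates $L\leftarrow L+\Z\beta$ do not preserve ring-ness, and the properties you do verify ($R_0\subseteq L\subseteq \End(E)$ and $L\otimes\Z_\ell=\End(E)\otimes\Z_\ell$) do not by themselves force $L$ to be a ring. This matters downstream, since the output of $\Saturate_2$ is passed to $\SaturateRam$, which takes an order as input. The paper closes this point with a uniqueness argument: a lattice $M$ with $R_0\subseteq M\subseteq\End(E)$, $[M:R_0]$ a power of $\ell$, and $[\End(E):M]$ prime to $\ell$ is unique; the terminal $L$ has these properties (each update multiplies $[L:R_0]$ by $\ell$), and there \emph{exists an order} with these properties (the $\ell$-saturation of $R_0$ inside $\End(E)$), so $L$ equals that order. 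Alternatively you could argue locally: since $[L:R_0]$ is a power of $\ell$, one has $L\otimes\Z_q=R_0\otimes\Z_q$ for every prime $q\neq\ell$, while $L\otimes\Z_\ell=\End(E)\otimes\Z_\ell$, so every localisation of $L$ is a ring containing $1$ and hence $L$ is an order. Some such argument must be added for the proof to establish the stated output specification.
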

\begin{proof}
  Since~$\ell\neq p$, the discriminant of the maximal order~$\End(E)$ is coprime
  to~$\ell$. Therefore, at each iteration,
  there is at least one~$L'$ that is contained in~$\End(E)$, so that the division succeeds.
  Every iteration divides the discriminant by~$\ell^2$, so the number of iterations is half the
  valuation of~$\disc(R_0)$ at~$\ell$, which is polynomial.
  At every iteration, there are~$O(\ell^3)$ lattices~$L'$ since they
  correspond exactly to lines~$\ell L'/\ell L \subseteq L/\ell L$.
  Every operation performed in the loops takes polynomial time.
  This proves that the algorithm terminates within the claimed running time.
  Consider the following properties for a lattice~$M$ in~$\End(E)$:
  \begin{enumerate}[(1)]
    \item\label{item:firstprop} $R_0\subseteq M \subseteq \End(E)$;
    \item $[M:R_0]$ is a power of~$\ell$;
    \item\label{item:lastprop}$ [\End(E):M]$ is coprime to~$\ell$.
  \end{enumerate}
  There exists at most one~$M$
  satsifying~(\ref{item:firstprop})--(\ref{item:lastprop}).
  When the algorithm terminates, the lattice~$L$
  satisfies~(\ref{item:firstprop})--(\ref{item:lastprop}).
  On the other hand, there exists an order~$R$
  satisfying~(\ref{item:firstprop})--(\ref{item:lastprop}).
  Therefore~$R=L$, as claimed.
\qed\end{proof}

 \begin{proposition}
   Algorithm~\ref{alg:Saturatep} ($\SaturateRam$) is correct and runs in
   polynomial time.
 \end{proposition}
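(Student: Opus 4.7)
The plan is to show that the maximal-at-$p$ order~$\order$ computed on the quaternionic side maps under~$\iota$ into~$\End(E)$, so that each of the three calls to~$\Divide$ succeeds and the returned lattice is precisely the image of~$\order$ inside~$\End(E)$.

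First I would extend~$\iota\colon\order_0\to R_0$ to an isomorphism of $\Q$-algebras $\iota_\Q\colon \order_0\otimes\Q \to B = R_0\otimes\Q = \End(E)\otimes\Q$; this extension exists uniquely because both sides are the quaternion algebra over~$\Q$ ramified at~$\{p,\infty\}$, as~$E$ is supersingular. The abstract order~$\order\supseteq\order_0$ produced by~\cite{maxord1} or~\cite{maxord2} then sits inside~$\order_0\otimes\Q$, and the central claim is $\iota_\Q(\order)\subseteq \End(E)$. I would verify this locally: at every prime~$\ell\neq p$, the hypothesis that~$[\order:\order_0]$ is a power of~$p$ forces $\iota_\Q(\order)\otimes\Z_\ell = R_0\otimes\Z_\ell \subseteq \End(E)\otimes\Z_\ell$; at~$\ell=p$, the local algebra~$B_p$ is a division algebra, which has a unique maximal $\Z_p$-order, so the two maximal-at-$p$ orders~$\iota_\Q(\order)\otimes\Z_p$ and~$\End(E)\otimes\Z_p$ coincide. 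This global inclusion is the step that genuinely uses supersingularity of~$E$ and is the main technical point.

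Granting this, correctness is almost mechanical. For each basis element~$b_i = a_i/p^{k_i}$ with~$a_i\in\order_0$ (such a decomposition exists since~$[\order:\order_0]$ is a $p$-power), the endomorphism $\alpha_i = \iota(a_i) \in R_0 \subseteq \End(E)$ satisfies $\alpha_i = p^{k_i}\iota_\Q(b_i) \in p^{k_i}\End(E)$, so Proposition~\ref{prop:division} guarantees that~$\Divide(\alpha_i,p^{k_i})$ returns an efficient representation of~$\beta_i = \iota_\Q(b_i)$. The output $\Z+\Z\beta_1+\Z\beta_2+\Z\beta_3 = \iota_\Q(\order)$ then contains~$R_0$, sits inside~$\End(E)$, and is maximal at~$p$, as required. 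For the running time, producing~$\order$ from~$\order_0$ is polynomial by the cited algorithms, each decomposition~$b_i = a_i/p^{k_i}$ is elementary linear algebra over~$\Z$, and each of the three calls to~$\Divide$ runs in polynomial time by Proposition~\ref{prop:division}, so the whole procedure is polynomial.
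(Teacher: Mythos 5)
Your proof is correct and follows essentially the same route as the paper: the key point in both is that ramification of~$B$ at~$p$ (uniqueness of the maximal order in the local division algebra) forces the $p$-saturated order~$\iota_\Q(\order)$ to lie inside~$\End(E)$, after which all calls to~$\Divide$ succeed and the output is exactly this order. Your prime-by-prime verification merely spells out the inclusion that the paper asserts via uniqueness of the order containing~$R_0$ that is maximal at~$p$ with $p$-power index.
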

 \begin{proof}
   Since~$R_0\otimes\Q$ is ramified
   at~$p$, there is a unique order~$R$ containing~$R_0$, maximal at~$p$
   with~$[R:R_0]$ a power of~$p$, and this order is contained in~$\End(E)$.
   This imples~$(\iota\otimes\Q)(\order) = R \subset \End(E)$, so all the
   divisions succeed, the family~$(1,\beta_1,\beta_2,\beta_3)$ is a basis of~$R$, and
   the algorithm is correct.
   All the operations take polynomial time.
 \qed\end{proof}

We now present an algorithm to reduce endomorphisms at odd integers.
 

 \begin{algorithm}[h]
  \caption{$\Reduce_N(\alpha)$: reduces an endomorphism $\alpha$ at $N$.}\label{alg:reduce}
  \begin{algorithmic}[1]
    \REQUIRE {An endomorphism $\alpha \in \End(E)\setminus \Z$ in efficient representation, and an odd integer $N$.}
    \ENSURE {An $N$-reduced endomorphism (Definition~\ref{def:reduced})  $\beta = \frac{\alpha - t}{N^e}$ with $t,e \in \Z$.}
    \STATE $\gamma \gets 2\alpha - \Tr(\alpha)$\label{step:make-trace-zero}
    \REPEAT
        \STATE $\beta \gets \gamma$
        \STATE $\gamma \gets \Divide(\beta,N)$ an efficient representation of $\beta/N$
          \label{alg:condition-reducible}
          \COMMENT{Proposition~\ref{prop:division}}
    \UNTIL{$\gamma = \bot$}
    \IF {$\Tr(\beta) \equiv 0 \mod 4$}
    \RETURN $\Divide(\beta,2)$
    \ELSE
    \RETURN $\Divide(\beta+1,2)$
    \ENDIF
  \end{algorithmic}
\end{algorithm}

\begin{proposition}
   Algorithm~\ref{alg:reduce} ($\Reduce_N$) is correct and runs in
   polynomial time.
\end{proposition}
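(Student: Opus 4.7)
The plan is to verify termination and polynomial running time, and then the three correctness claims: that the output lies in $\End(E)$, has the required form $(\alpha - t)/N^e$, and is $N$-reduced. Setting $\gamma_0 := 2\alpha - \Tr(\alpha)$, one has $\hat\gamma_0 = 2\hat\alpha - \Tr(\alpha) = -\gamma_0$, so $\gamma_0 \in \End(E)$ is traceless. A straightforward induction shows that after the $k$-th successful call to $\Divide$, the variable $\beta$ equals $\gamma_0/N^k$, and the loop exits with $\beta = \gamma_0/N^e$ where $e \ge 0$ is the largest integer such that $\gamma_0 \in N^e\End(E)$; in particular $\Tr(\beta) = 0$. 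For complexity, $\nrd(\gamma_0) = -\disc(\alpha) \le 4\deg(\alpha)$, and $\gamma_0 \in N^e\End(E)$ forces $N^{2e} \mid \nrd(\gamma_0)$, so $e \le \tfrac{1}{2}\log_N(4\deg(\alpha))$, which is polynomial in the input size. Each iteration makes one call to $\Divide$, polynomial by Proposition~\ref{prop:division}.

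The heart of the correctness proof is the halving step. Reducing modulo $2\End(E)$ gives $\gamma_0 = 2\alpha - \Tr(\alpha) \equiv \Tr(\alpha)$, and since $\gcd(N^e, 2) = 1$ we may invert $N^e$ to conclude $\beta \equiv \Tr(\alpha) \pmod{2\End(E)}$. Hence $\beta/2 \in \End(E)$ exactly when $\Tr(\alpha)$ is even, and otherwise $\beta + 1 \in 2\End(E)$; the IF branches select the appropriate case. In the even case, the output $\beta/2 = (\alpha - \Tr(\alpha)/2)/N^e$ has the promised form with $t = \Tr(\alpha)/2$. In the odd case, $(\beta + 1)/2 = (\alpha - t)/N^e$ with $t = (\Tr(\alpha) - N^e)/2$, which is an integer because both $\Tr(\alpha)$ and $N^e$ are odd.

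For $N$-reducedness, I argue by contradiction: if the output $\beta'$ lay in $\Z + N\End(E)$, then so would $\beta = 2\beta' - c$ for some $c \in \{0, 1\}$, say $\beta = m + N\eta$ with $m \in \Z$ and $\eta \in \End(E)$. Taking traces, $0 = \Tr(\beta) = 2m + N\Tr(\eta)$, so $N \mid 2m$, and since $N$ is odd, $N \mid m$. Then $\beta \in N\End(E)$, contradicting the loop's exit condition $\Divide(\beta, N) = \bot$. I expect the main obstacle to be the mod-$2\End(E)$ analysis that justifies the halving step; the remaining ingredients are a norm bound, a trace computation, and a brief parity argument.
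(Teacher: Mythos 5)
Your proof is correct and follows essentially the same route as the paper: zero the trace, strip the maximal power of $N$ via $\Divide$, bound the number of loop iterations using $N^{2e}\mid \nrd(2\alpha-\Tr(\alpha)) = -\disc(\alpha)$, and remove the leftover factor $2$ at the end; your treatment of the halving step and of $N$-reducedness is in fact more explicit than the paper's, which disposes of each in a single sentence. One caveat: your phrase ``the IF branches select the appropriate case'' is charitable to the pseudocode. Since $\beta$ is traceless when the loop exits, the test $\Tr(\beta)\equiv 0 \pmod 4$ as literally written is always satisfied, so it cannot distinguish the two parity cases; the criterion your analysis actually calls for is the parity of $\Tr(\alpha)$ (equivalently, whether $\deg(\beta)\equiv 0\pmod 4$). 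This is a defect in the statement of the algorithm rather than in your argument --- the paper's own proof glosses over the branch condition in exactly the same way --- and your mod-$2\End(E)$ computation is precisely what identifies the correct test.
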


 \begin{proof}
Let $e$ be the largest integer such that $\alpha \in \Z + N^e \End(E)$.
At Step~\ref{step:make-trace-zero},
we have that $\gamma \in N^e \End(E)$ and $\gamma \not\in \Z + N^{e+1} \End(E)$.
Therefore, at the end of the loop, $\beta \in \End(E)$ and $\beta \not\in \Z + N \End(E)$, i.e., $\beta$ is $N$-reduced. The last division removes the extra factor $2$ introduced in Step~\ref{step:make-trace-zero}, to ensure the result is of the form $\beta = \frac{\alpha - t}{N^e}$ with $t \in \Z$.

Let us prove that it runs in polynomial time.
We have~$N^{2e}\mid \disc(\alpha)$, and at each iteration of the loop,
$\disc(\beta)$ gets divided by~$N^2$. So the number of iterations is bounded
by~$e \le \log(\disc(\alpha)) = O(\log\deg(\alpha))$, which concludes the proof.
\qed\end{proof}
 
 \section{The reduction}\label{sec:the-main-reduction}

 In this section, we prove the main result of the paper (Theorem~\ref{thm:main}).
 We start with a lemma putting together results from the previous
 sections.

 \begin{algorithm}[h!]
 \caption{Turning an oracle $\Oracle$ for $\OneEnd$ into an $\EndRing$
   algorithm}\label{alg:OneEndtoEndRing}
 \begin{algorithmic}[1]
 \REQUIRE {A supersingular elliptic curve $E/\F_{p^2}$, and a parameter $k > 0$. We suppose access to an oracle $\Oracle$ that solves the $\OneEnd$ problem.}
 \ENSURE {The endomorphism ring $\End(E)$.}
 \STATE{$k_1 \gets \left\lceil\frac{\log\left( 12\cdot 9\cdot(1+\sqrt{3})\cdot\sqrt{p+13} \right)}{\log\left(\frac{3}{2\sqrt 2}\right)}\right\rceil$}\label{step:defk1}
 \STATE $R \gets \Z$
 \WHILE {$\rank_{\Z}(R) \neq 4$} \label{step:OneEndtoEndRing-first-loop}
\STATE $\alpha \gets \Rich^\Oracle_{k_1}(E)$, a random endomorphism of $E$ \COMMENT{Algorithm~\ref{alg:Enrich}}
\STATE $R \gets $ the ring generated by $R$ and $\alpha$
\ENDWHILE\label{step:OneEndtoEndRing-first-loop-end}
\STATE $R \gets \Saturate_2(R)$ \COMMENT{Algorithm~\ref{alg:Saturate}}
\STATE $R \gets \SaturateRam(R)$ \COMMENT{Algorithm~\ref{alg:Saturatep}}
\STATE $[\End(E) : R] \gets \sqrt{\disc(R)}/p$
\STATE Factor $[\End(E) : R] = \prod_{i = 1}^tN_i^{e_i}$ where no $N_i$ is a cube
   \COMMENT{a complete prime factorisation is not required; the somewhat trivial
   factorisation $[\End(E) : R] = N_1^{3^n}$ where $N_1^{1/3} \not\in \Z$
   and~$n\ge 0$ is sufficient as a starting point, and the subsequent steps of
   the algorithm may refine it}
\WHILE {$[\End(E) : R] \neq 1$} \label{step:OneEndtoEndRing-second-loop}
\STATE $N \gets N_t$
 \STATE{$k_2 \gets \left\lceil12 \cdot \log\left(4100000\cdot(\log N)^{12}N^{2}\sqrt{p+13}\right) \right\rceil$}\label{step:defk2}
 \STATE Let $\Oracle_N$ the oracle which given $E$, runs $\alpha \gets \Oracle(E)$ and returns $\Reduce_N(\alpha)$\;
\STATE $\alpha_i \gets  \Rich^{\Oracle_N}_{k_2}(E)$ for $i \in \{1,2,3\}$, random endomorphisms of $E$ \COMMENT{Algorithm~\ref{alg:Enrich}}
\STATE $\Lambda \gets $ the lattice generated by $(1,\alpha_1,\alpha_2,\alpha_3)$
\IF {$\rank_{\Z}(\Lambda) = 4$}
\STATE $n \gets$ the largest integer such that $N^n$ divides $[\End(E) : \Lambda]$
\STATE $d \gets \gcd([\End(E) : \Lambda] / N^n, N)$
\IF {$d \neq 1$}
\STATE Update the factorisation of $[\End(E) : R]$ with $N = d \cdot (N/d)$ \label{step:OneEndtoEndRing-first-success}
\ENDIF
\IF {$\Lambda \not \subset R$}
\STATE $R \gets$ the order generated by $R$ and $\Lambda$ \label{step:OneEndtoEndRing-second-success}
\STATE Recompute $[\End(E) : R] = \sqrt{\disc(R)}/p$, and update its factorisation
\ENDIF
\ENDIF
\ENDWHILE\label{step:OneEndtoEndRing-second-loop-end}
\RETURN $R$
\end{algorithmic}
 \end{algorithm}

%
%
%

\begin{lemma}\label{lem:enrich}
Let $\Oracle$ be an oracle for $\OneEnd$, and $N$ an odd integer. Let $\Oracle_N$ be the oracle which on input $E$, samples $\alpha \gets \Oracle(E)$, and returns $\Reduce_N(\alpha)$. For any \[
    k \geq 12 \cdot \log\left(4100000\cdot(\log N)^{12}N^{2}\sqrt{p+13}\right),
  \] the output of $\Rich_k^{\Oracle_N}$ satisfies the conditions of Proposition~\ref{prop:good-distribution-leads-to-success}.
\end{lemma}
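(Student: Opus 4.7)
The plan is to verify the two conditions of Proposition~\ref{prop:good-distribution-leads-to-success} for the random endomorphism $\alpha = \Rich_k^{\Oracle_N}(E)$, which by construction has the form $\hat\varphi \beta \varphi$ where $\varphi\colon E\to E'$ is a $2$-isogeny walk of length $k$ and $\beta = \Oracle_N(E')$.

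For condition~(1), I would first recall that the definition of $\Reduce_N$ (Algorithm~\ref{alg:reduce}) ensures $\beta \in \End(E')\setminus(\Z+N\End(E'))$, and then show that conjugation by a $2^k$-isogeny preserves $N$-reducedness when $N$ is odd. Concretely, if one had $\hat\varphi \beta \varphi \equiv n \pmod{N\End(E)}$ for some $n\in\Z$, then conjugating back by $\varphi$ and $\hat\varphi$ would give $\deg(\varphi)^2 \beta \equiv n\deg(\varphi) \pmod{N\End(E')}$; since $\deg(\varphi) = 2^k$ is invertible modulo the odd integer $N$, this would force $\beta \equiv n\deg(\varphi)^{-1} \pmod{N\End(E')}$, contradicting the $N$-reducedness of $\beta$.

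For condition~(2), the key input is Theorem~\ref{thm:rich-is-conj-invariant}. Since $N$ is odd it is coprime to $p$, so $\End(E)/N\End(E) \cong M_2(\Z/N\Z)$ and elements of $(\End(E)/N\End(E))^\times$ of degree $1\bmod N$ correspond under this isomorphism to elements of $\SL_2(\Z/N\Z)$ (as the reduced norm corresponds to the determinant). Theorem~\ref{thm:rich-is-conj-invariant} then yields directly that $\alpha \bmod N$ is $\varepsilon$-close to $\SL_2(\Z/N\Z)$-invariant with
\[
  \varepsilon \le \tfrac{1+\sqrt{3}}{4}\,\lambda^k N^2\sqrt{p+13},\qquad \lambda = \tfrac{2\sqrt{2}}{3}.
\]

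It remains to show that the chosen lower bound on $k$ forces $\varepsilon < \frac{1}{6000000(\log N)^{12}}$. The argument uses $\log_2(1/\lambda) = \log_2 3 - 3/2 > 1/12$, so that setting $C = 4100000(\log N)^{12}N^2\sqrt{p+13}$ and $k \ge 12\log_2 C$ gives $\lambda^k \le C^{-12\log_2(1/\lambda)} \le C^{-1}$. The numerical inequality $\tfrac{1+\sqrt{3}}{4}\cdot\tfrac{1}{4100000} < \tfrac{1}{6000000}$ then closes the proof. The only real obstacle is the numerical bookkeeping of this last step; condition~(1) is a direct consequence of the parity of $N$ and the fact that $\Rich$ walks only along $2$-isogenies.
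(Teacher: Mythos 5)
Your proposal is correct and follows essentially the same route as the paper's proof: condition~(1) via the observation that $N$-reducedness is preserved under $\hat\varphi(\cdot)\varphi$ for a $2$-power-degree $\varphi$ when $N$ is odd, and condition~(2) by invoking Theorem~\ref{thm:rich-is-conj-invariant} and checking that the stated lower bound on $k$ (using $\log_2(3/(2\sqrt{2}))>1/12$ and $\frac{1+\sqrt{3}}{4}\cdot 6000000<4100000$) forces $\varepsilon$ below the threshold of Proposition~\ref{prop:good-distribution-leads-to-success}. One small slip: ``$N$ odd hence coprime to $p$'' is a non sequitur (as $p>3$ is odd), but it is harmless here since the identification of $\End(E)/N\End(E)$ with $M_2(\Z/N\Z)$ and of degree-$1$ units with $\SL_2(\Z/N\Z)$ is anyway part of the standing assumption ($p\nmid N$) under which Proposition~\ref{prop:good-distribution-leads-to-success} and its surrounding section operate, exactly as in the paper's own proof.
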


\begin{proof}
Let $\varphi \colon E \to E'$ of degree a power of $2$. For any endomorphism $\beta \in \End(E')$, since $N$ is odd, we have that $\beta$ is $N$-reduced if and only if $\hat \varphi \circ \beta \circ \varphi$ is $N$-reduced.
The output of $\Rich_k^{\Oracle_N}$ is of the form $\hat \varphi \circ \Reduce_N(\alpha) \circ \varphi$, so is $N$-reduced.
So the distribution of $\Rich_k^{\Oracle_N}$ satisfies Item~\ref{prop:good-distribution-leads-to-success-item-1} of Proposition~\ref{prop:good-distribution-leads-to-success}.

From Theorem~\ref{thm:rich-is-conj-invariant}, $\Rich_k^{\Oracle_N} \bmod N$ is $\varepsilon$-close to $\SL_2(\Z/N\Z)$-invariant for
  \[
    \varepsilon = \frac{1+\sqrt{3}}{4}\left(\frac{2\sqrt{2}}{3}\right)^kN^{2}\sqrt{p+13}.
  \]
With
    \[
    k \geq \frac{\log\left(6000000\cdot(\log N)^{12}\cdot\frac{1+\sqrt{3}}{4}N^{2}\sqrt{p+13}\right)}{\log\left(\frac{3}{2\sqrt{2}}\right)},
  \]
 we have $\varepsilon \leq (6000000\cdot(\log N)^{12})^{-1}$, satisfying Item~\ref{prop:good-distribution-leads-to-success-item-2} of Proposition~\ref{prop:good-distribution-leads-to-success}.
\qed\end{proof}

We now have all the ingredients to prove our main result.

\begin{theorem}[$\EndRing$ reduces to $\OneEnd$]\label{thm:main-more-precise}
  Algorithm~\ref{alg:OneEndtoEndRing} is a
  reduction from $\EndRing$ to~$\OneEnd_\lambda$ of expected polynomial time in $\log(p)$ and $\lambda(\log p)$.
\end{theorem}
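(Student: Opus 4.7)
The plan is to analyse Algorithm~\ref{alg:OneEndtoEndRing} in two phases. The first loop (lines~\ref{step:OneEndtoEndRing-first-loop}--\ref{step:OneEndtoEndRing-first-loop-end}) aims to produce a rank-$4$ sublattice $R\subset\End(E)$. I would apply Theorem~\ref{thm:rich-is-conj-invariant} with $N=3$ and $k=k_1$: the choice in Step~\ref{step:defk1} gives $\lambda^{k_1}\le (12\cdot 9\cdot (1+\sqrt 3)\sqrt{p+13})^{-1}$, so the distribution of $\Rich^\Oracle_{k_1}(E)\bmod 3$ is $\varepsilon$-close to $\SL_2(\F_3)$-invariant with $\varepsilon\le 1/48$. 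Proposition~\ref{prop:probability-new-samples-local-2} (with $\ell=3$) then yields that three fresh samples form a basis of $M_2(\F_3)/\F_3$ with probability at least $1/8 - 3/48 = 1/16$; equivalently, by Proposition~\ref{prop:special-Nakayama}, $(1,\alpha_1,\alpha_2,\alpha_3)$ generates a rank-$4$ sublattice of $\End(E)$. Hence the first loop terminates after $O(1)$ expected oracle calls. The subsequent calls to $\Saturate_2$ and $\SaturateRam$ make $R$ locally maximal at $2$ and at $p$, so $\gcd([\End(E):R],2p)=1$; the index $[\End(E):R]=\sqrt{\disc(R)}/p$ (using $\disc(\End(E))=p^2$) and an initial non-cube factorisation (e.g.\ $[\End(E):R]=N_1^{3^n}$ obtained by repeatedly extracting cube roots) are both computable in polynomial time.

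For the second loop (lines~\ref{step:OneEndtoEndRing-second-loop}--\ref{step:OneEndtoEndRing-second-loop-end}), each iteration picks an odd non-cube factor $N=N_t$ of $[\End(E):R]$. By Lemma~\ref{lem:enrich} and the choice of $k_2$ in Step~\ref{step:defk2}, the distribution of $\Rich^{\Oracle_N}_{k_2}(E)$ satisfies the hypotheses of Proposition~\ref{prop:good-distribution-leads-to-success}, so with probability $\Omega((\log N)^{-12})$ the lattice $\Lambda$ generated by $(1,\alpha_1,\alpha_2,\alpha_3)$ falls into one of two cases. In \emph{Case~A} ($\gcd(N,[\End(E):\Lambda])=1$), we have $\Lambda\otimes\Z_\ell=\End(E)\otimes\Z_\ell$ for every prime $\ell\mid N$; in particular $\Lambda\not\subset R$, since otherwise $N$ would not divide $[\End(E):R]$; line~\ref{step:OneEndtoEndRing-second-success} then strictly enlarges $R$ and makes it locally maximal at every prime of $N$, eliminating $N$ from the factorisation of $[\End(E):R]$. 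In \emph{Case~B} ($[\End(E):\Lambda]=N^nK$ with $\gcd(K,N)\notin\{1,N\}$), the quantity $d=\gcd([\End(E):\Lambda]/N^n,N)$ is a proper nontrivial divisor of $N$, and line~\ref{step:OneEndtoEndRing-first-success} splits $N$ as $d\cdot(N/d)$. Case~A strictly decreases the set of primes dividing $[\End(E):R]$ (of size $O(\log p)$), and Case~B strictly increases the number of factors in the working factorisation (bounded by the number of prime factors of $[\End(E):R]$, again $O(\log p)$). Hence there are $O(\log p)$ successful iterations in total, yielding an expected number of iterations $O((\log p)^{13})$.

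Each iteration of either loop runs in polynomial time in $\log p$ and $\lambda(\log p)$: the random walks have length $k_1,k_2=O(\log p)$ with polynomial-time per step, the oracle call contributes $\lambda(\log p)$, and $\Divide$, $\Reduce_N$, $\Saturate_2$, $\SaturateRam$, and the lattice and arithmetic bookkeeping are polynomial by Proposition~\ref{prop:division} and the propositions establishing the saturation and reduction routines (crucially, we only invoke $\Saturate_\ell$ for $\ell=2$, where its runtime $\poly(\ell,\log p)$ is polynomial). The main obstacle I expect is the accounting argument for the second loop: one must verify carefully that in Case~A the enlargement really eliminates the entire $N$-part of $[\End(E):R]$ locally at every $\ell\mid N$, that in Case~B the divisor $d$ produced is genuinely in $\{2,\dots,N-1\}$ and coprime-split correctly with the running factorisation, and that the total working factorisation stays of polynomial size throughout; combined with the union bound on ``good'' events across the $O(\log p)$ effective progress steps, this yields the claimed expected polynomial runtime.
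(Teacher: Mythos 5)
Your overall strategy is the same as the paper's, but there is a genuine gap in your analysis of the first loop. You apply Proposition~\ref{prop:probability-new-samples-local-2} directly to the samples reduced modulo $\Z+3\End(E)$, but that proposition requires the three elements to be \emph{non-zero} in $M_2(\F_3)/\F_3$, and nothing guarantees this: the first loop uses the raw oracle (no $\Reduce_3$ is applied), and an adversarial oracle may always return endomorphisms lying in $\Z+3\End(E)$. The zero class is conjugation-invariant, so Theorem~\ref{thm:rich-is-conj-invariant} says nothing against this; it is precisely the ``stuck in $\Z+M\End(E)$'' obstruction that motivates the whole construction. The paper's proof repairs this by rescaling: it sets $t=\max_i \level_3(\alpha_i)$ and $\beta_i = 3^{\,t-\level_3(\alpha_i)}\alpha_i$, so that the three samples all have level exactly $t$ and are non-zero in $(\Z_3+3^{t}M_2(\Z_3))/(\Z_3+3^{t+1}M_2(\Z_3))\cong M_2(\F_3)/\F_3$, where they remain $\varepsilon$-close to invariant; Proposition~\ref{prop:probability-new-samples-local-2} together with Proposition~\ref{prop:special-Nakayama} (at level $t$) then gives full rank of $(1,\beta_1,\beta_2,\beta_3)$, hence of $(1,\alpha_1,\alpha_2,\alpha_3)$, with probability at least $1/16$. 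Without this step your claimed $1/16$ bound, and hence the $O(1)$ expected length of the first loop, is unjustified.

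A second, smaller omission: your count of successful iterations of the second loop rests on the assertion that the number of primes dividing $[\End(E):R]$ is $O(\log p)$, but you never bound $[\End(E):R_1]$. Since the oracle's outputs may have degree as large as $2^{O(\lambda(\log p))}$ and the walks have length $k_1=O(\log p)$, the paper bounds $\disc(R_1)$ via Hadamard's inequality and obtains $[\End(E):R_1]\le 2^{O(\log p\cdot\lambda(\log p))}$ (Equation~\eqref{eq:bound-disc-1}); the number of successes is therefore $\poly(\log p,\lambda(\log p))$, not $O(\log p)$. This does not endanger the final statement (expected time polynomial in $\log p$ and $\lambda(\log p)$), but the bound must be proved and your stated count is wrong as written. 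Apart from these two points, your argument follows the paper's: per-iteration success probability $\Omega((\log N)^{-12})$ via Lemma~\ref{lem:enrich} and Proposition~\ref{prop:good-distribution-leads-to-success}, and progress measured by either refining the factorisation or strictly enlarging $R$; your Case~A/Case~B discussion is in fact somewhat more explicit than the paper's accounting.
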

\begin{proof}
The correctness is clear as at any time, $R$ is a subring of $\End(E)$, and the success condition $[\End(E) : R] = 1$ implies $R = \End(E)$.

We now analyse the expected running time.

\noindent\textbf{First loop (Step~\ref{step:OneEndtoEndRing-first-loop} to Step~\ref{step:OneEndtoEndRing-first-loop-end}).}
First, let us analyse the expected number of iterations of the first loop. 
From Theorem~\ref{thm:rich-is-conj-invariant}, each $\alpha$ generated during this loop is $\varepsilon$-close to $\SL_2(\F_3)$-invariant with
\[\varepsilon = \frac{1+\sqrt{3}}{4}\left(\frac{2\sqrt 2}{3}\right)^{k_1}3^2\sqrt{p+13}.\]
Choosing $ k_1 = O(\log p)$ as in Step~\ref{step:defk1},
we have $\varepsilon \leq 1/48$.

Consider any three consecutively generated elements $\alpha_1, \alpha_2, \alpha_3$. Let $t = \max_i \level_3(\alpha_i)$, and $\beta_i = 3^{t - \level_3(\alpha_i)}\alpha_i$, so all $\beta_i$ are at the same level $t$. Like the variables $\alpha_i$, the variables $\beta_i$ are $\varepsilon$-close to $\SL_2(\F_3)$-invariant. Combining Proposition~\ref{prop:probability-new-samples-local-2} and Proposition~\ref{prop:special-Nakayama}, the tuple $(1,\beta_1, \beta_2, \beta_3)$ generates a full-rank lattice with probability at least $1/8 - 3\varepsilon$, and so does $(1,\alpha_1, \alpha_2, \alpha_3)$.
Choosing $k_1$ as above, this probability is at least $1/16$.
We deduce that the loop terminates after an expected $O(1)$ number of iterations.

Let us now analyse the output of this loop. Let $R_1$ be the order $R$ obtained
  at the end of the first loop. Let $\alpha_i$ be any three elements generated
  during the loop such that $(1,\alpha_1,\alpha_2,\alpha_3)$ are independent.
  Combining the bound $\deg(\alpha_i) \leq 2^{2k_1\lambda(\log p)}$ and Hadamard's inequality, we get
\[\disc(R_1) = 16\cdot \Vol(R_1)^2 \leq 16\cdot \prod_{i=1}^3\sqrt{\deg(\alpha_i)}
\leq 16 \cdot  2^{6k_1\lambda(\log p)}.\]
We deduce that 
\begin{equation}\label{eq:bound-disc-1}
[\End(E) : R_1] \leq 2^{3k_1\lambda(\log p)+2} /p =  2^{O(\log (p)\cdot \lambda(\log p))}.
\end{equation}

\noindent\textbf{Second loop (Step~\ref{step:OneEndtoEndRing-second-loop} to Step~\ref{step:OneEndtoEndRing-second-loop-end}).}
It remains to analyse the second loop.
An iteration of this loop is a \emph{success} if either Step~\ref{step:OneEndtoEndRing-first-success} or Step~\ref{step:OneEndtoEndRing-second-success} is reached. In case of success, either a new factor of $[\End(E) : R]$ is found (Step~\ref{step:OneEndtoEndRing-first-success}), or $[\End(E) : R]$ gets divided by an integer at least $2$ (Step~\ref{step:OneEndtoEndRing-second-success}). The number of successes is thus polynomially bounded in $\log([\End(E) : R_1])$, hence in $\poly(\log p, \lambda(\log p))$ (thanks to Equation~\eqref{eq:bound-disc-1}).
Therefore, we only have to prove that as long as $R \neq \End(E)$, each iteration has a good probability of success.

The event analysed in Proposition~\ref{prop:good-distribution-leads-to-success} corresponds precisely to a success.
By Lemma~\ref{lem:enrich}, the distribution of $\alpha_i$ satisfies the conditions of Proposition~\ref{prop:good-distribution-leads-to-success}. 
Therefore, Proposition~\ref{prop:good-distribution-leads-to-success} implies that each iteration has a probability of success $\Omega((\log N)^{-12})$, which concludes the proof.
\qed\end{proof}

 
%
%

\section{Applications}

In this section we describe four applications of our main result.

\subsection{Collision resistance of the Charles--Goren--Lauter hash function}\label{subsec:CGL}

The first cryptographic construction based on the supersingular isogeny problem is a hash function proposed by Charles, Goren and Lauter~\cite{CGL09}, the \emph{CGL hash function}.
Fix a (small) prime number $\ell$, typically $\ell = 2$. For any elliptic curve $E$, there are $\ell+1$ outgoing $\ell$-isogenies $E \to E'$ (up to isomorphism of the target), so given a curve and an incoming $E''\to E$, there remain $\ell$ non-backtracking $\ell$-isogenies from $E$, which can be arbitrarily labelled by the set $\{0,\dots,\ell-1\}$. Then, fixing an initial curve $E_0$ and an arbitrary isogeny $E_{-1} \to E_0$, the set $\{0,\dots,\ell-1\}^*$ encodes non-backtracking paths from $E_0$ in the $\ell$-isogeny graph. 
The CGL hash function
\[\CGL_{E_0} : \{0,\dots,\ell-1\}^* \longrightarrow \F_{p^2}\]
associates to any sequence $(x_i)_i$ the $j$-invariant of the endpoint of the walk from $E_0$ it encodes.
Clearly, this function is pre-image resistant if and only if $\IsogPath$ is hard. However, if $\End(E_0)$ is known, one can find collisions in polynomial time~\cite{KLPT14,EHLMP18}. Therefore, it was proposed to sample the starting curve randomly. Let $\algorithmname{SampleSS}(p)$ be an algorithm sampling a uniformly random supersingular elliptic curve over $\F_{p^2}$. 
We define the advantage of a collision-finding algorithm $\mathscr A$ for the CGL family of hash functions as
$$\mathrm{Adv}_{\CGL}^{\mathscr A}(p) = 
\Pr\left[
\begin{array}{cc|cc}
m \neq m' \text{ and } &&& E \gets \algorithmname{SampleSS}(p)\\
\CGL_E(m) = \CGL_E(m') &&& (m,m') \gets \mathscr A(E)
\end{array}
\right].$$
It was heuristically argued in~\cite{EHLMP18} that the collision resistance of this construction is equivalent to $\EndRing$. The flaws of the heuristics are discussed in Section~\ref{subsec:techoverview}. With our main theorem, we can now prove this resistance.

\begin{theorem}[Collision resistance of the CGL hash function]\label{thm:CGLcollres}
For any algorithm $\mathscr A$, there is an algorithm to solve $\EndRing$ in expected polynomial time in $\log(p)$, in $\mathrm{Adv}_{\CGL}^{\mathscr A}(p)^{-1}$ and in the expected running time of $\mathscr A$.
\end{theorem}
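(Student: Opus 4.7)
The plan is to build, from $\mathscr A$, an oracle $\Oracle$ for $\OneEnd$ and then to plug this into Theorem~\ref{thm:main-more-precise}. On input a supersingular curve $E$, the oracle $\Oracle$ first performs a uniform random walk $\psi \colon E \to E'$ of length $n = O(\log p)$ in the $2$-isogeny graph; by Proposition~\ref{prop:rand-walk-standard}, after an $O(1)$-expected-rounds rejection step to correct the $\#\Aut(E)$ weighting, the distribution of $E'$ is within statistical distance $O(1/p)$ of the uniform distribution on supersingular $j$-invariants used to define $\mathrm{Adv}_{\CGL}^{\mathscr A}$. Then $\Oracle$ calls $(m,m') \gets \mathscr A(E')$; with probability at least $\mathrm{Adv}_{\CGL}^{\mathscr A}(p) - O(1/p)$ this produces a genuine collision, i.e.\ two distinct non-backtracking $\ell$-isogeny walks $\varphi,\varphi' \colon E' \to E''$ ending at a common curve. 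As observed in~\cite{CGL09}, the endomorphism $\alpha' = \widehat{\varphi'} \circ \varphi \in \End(E')$ is then non-scalar, and $\Oracle$ returns $\alpha = \hat\psi \circ \alpha' \circ \psi \in \End(E) \setminus \Z$.

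Standard amplification ($O(\mathrm{Adv}_{\CGL}^{\mathscr A}(p)^{-1})$ independent trials) boosts $\Oracle$ to an oracle $\Oracle'$ that succeeds with constant probability in expected time polynomial in $\log p$, in $\mathrm{Adv}_{\CGL}^{\mathscr A}(p)^{-1}$, and in the expected running time $T_{\mathscr A}(p)$ of $\mathscr A$. The returned endomorphism has degree bounded by $2^{2n}\cdot \ell^{|m|+|m'|}$, which is at most $2^{\lambda(\log p)}$ for some function $\lambda(x) = \poly(x, T_{\mathscr A}(2^x))$, so that $\Oracle'$ solves $\OneEnd_\lambda$. Feeding $\Oracle'$ into Theorem~\ref{thm:main-more-precise} then yields an algorithm for $\EndRing$ running in expected time polynomial in $\log p$, $\lambda(\log p)$, $\mathrm{Adv}_{\CGL}^{\mathscr A}(p)^{-1}$ and $T_{\mathscr A}(p)$, which is exactly the claimed bound.

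The only step requiring real care is the non-scalarity of $\alpha' = \widehat{\varphi'} \circ \varphi$. This is the well-known observation underlying the security of $\CGL$: two distinct non-backtracking $\ell$-isogeny walks ending at a common curve give rise to a cycle in the $\ell$-isogeny graph that cannot be a power of $[\ell]$, since the latter would force a backtrack at some intermediate vertex, contradicting $m \neq m'$. The one subtlety is the handful of supersingular $j$-invariants admitting extra automorphisms (at most two in characteristic $p>3$), at which the labelling of outgoing edges requires a convention; fixing one in the definition of $\CGL$, or simply rejecting the $O(1/p)$ of random walks whose endpoint has extra automorphisms, adds only a negligible loss to the success probability. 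Once this is settled, the remaining ingredients—composition of isogenies in efficient representation (Definition~\ref{def:efficient-representation}), rejection sampling, and the invocation of Theorem~\ref{thm:main-more-precise}—are routine bookkeeping.
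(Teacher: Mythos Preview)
Your approach is essentially the same as the paper's: randomise the input via a short isogeny walk, call~$\mathscr A$ on the near-uniform curve, turn a collision into a non-scalar endomorphism, pull it back, and invoke the $\OneEnd \Leftrightarrow \EndRing$ equivalence. You are in fact more explicit than the paper in bounding the degree of the returned endomorphism (to justify that the oracle solves $\OneEnd_\lambda$ for a polynomial~$\lambda$) and in mentioning the rejection step and the extra-automorphism curves.

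The one genuine case you glossed over is the regime $\mathrm{Adv}_{\CGL}^{\mathscr A}(p) = O(1/p)$. Your amplification claim ``$O(\mathrm{Adv}^{-1})$ independent trials boost the success probability to a constant'' only holds when the per-trial success probability $\mathrm{Adv} - O(1/p)$ is $\Omega(\mathrm{Adv})$; if $\mathrm{Adv}$ is dominated by the $O(1/p)$ statistical loss from the random walk, this fails entirely. The paper handles this explicitly: in that regime $\mathrm{Adv}^{-1} = \Omega(p)$, so the allowed time budget $\poly(\mathrm{Adv}^{-1})$ already permits running an unconditional $\tilde O(p^{1/2})$ algorithm for $\EndRing$ directly (Theorem~\ref{thm:solvingEndRing}), bypassing the oracle construction altogether. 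You should add this case distinction.
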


\begin{proof}
Since $\EndRing$ is equivalent to $\OneEnd$ (Theorem~\ref{thm:main}), it is sufficient to prove that $\mathscr A$ can be used to solve $\OneEnd$. 
First, let us prove that a successful collision for $\CGL_E$ gives a non-scalar endomorphism of $E$. Let $\varphi,\psi \colon E \to E'$ be two distinct non-backtracking walks, i.e., isogenies of cyclic kernel of order~$\ell^a$ and $\ell^{b}$ respectively. 
If $\hat\varphi \circ \psi$ is scalar, the degrees imply that $a+b$ is even and $\hat\varphi \circ \psi = [\ell^{\frac{a+b}{2}}]$. Without loss of generality, suppose $b\geq a$. From the defining property of the dual isogeny, we deduce that $\hat\psi = [\ell^{\frac{b-a}{2}}]\hat\varphi$. Taking the dual again, we get $\psi = [\ell^{\frac{b-a}{2}}]\varphi$. If $b > a$, then $\{0_E\} \neq E[\ell^{\frac{b-a}{2}}] \subseteq \ker \psi$, contradicting the cyclicity of $\ker \psi$. Therefore $b = a$, and we conclude that $\psi = \varphi$, a contradiction. So $\hat\varphi \circ \psi$ is non-scalar.

Now, given a curve $E$, we can solve $\OneEnd$ as follows:
\begin{enumerate}
\item First take a random walk $\eta\colon E \to E'$, so that $E'$ has statistical distance $\varepsilon = O(1/p)$ from uniform (Proposition~\ref{prop:rand-walk-standard});
\item Then call $\mathscr A(E')$, which gives a non-scalar endomorphism $\alpha$ of $E'$ with probability at least $\mathrm{Adv}_{\CGL}^{\mathscr A}(p) - \varepsilon$,
\item Return $\hat \eta \circ \alpha \circ \eta$. 
\end{enumerate}
The algorithm is successful after an expected $(\mathrm{Adv}_{\CGL}^{\mathscr A}(p) - \varepsilon)^{-1}$ number of attempts.
This works within the claimed running time if $\mathrm{Adv}_{\CGL}^{\mathscr A}(p) > 2\varepsilon$. Otherwise, we have $(\mathrm{Adv}_{\CGL}^{\mathscr A}(p))^{-1} = \Omega(p)$, and one can indeed solve $\EndRing$ in time polynomial in $p$ (see~\cite[Theorem~75]{Kohel96} for the first such algorithm, in time $\tilde O(p)$, or Theorem~\ref{thm:solvingEndRing} below for time $\tilde O(p^{1/2})$).
\qed\end{proof}

\subsection{Soundness of the SQIsign identification scheme}\label{subsec:SQIsign}

SQIsign is a digital signature scheme proposed in~\cite{DFKLPW20}. SQIsign, and its variant SQIsignHD~\cite{sqisignhd}, offer the most compact public keys and signatures of all known post-quantum constructions. Note that the results from this section apply equally to SQIsign and SQIsignHD.
This digital signature scheme is constructed as an identification protocol, turned into a signature by the Fiat--Shamir transform. The protocol proves knowledge of a witness for a problem that closely resembles $\OneEnd$. While~\cite{DFKLPW20} heuristically argues that the protocol is sound if $\EndRing$ is hard, our main theorem allows us to prove it.

Let $\algorithmname{SQIsign.param}$ be the SQIsign public parameter generation procedure, which on input a security level $k$, outputs data $\mathsf{pp}$ which encodes, among other things, a prime number $p = \Theta(2^{2k})$.
Let $\algorithmname{SQIsign.keygen}$ be the SQIsign key generation procedure, which on input $\mathsf{pp}$, outputs a pair $(\mathsf{pk},\mathsf{sk})$. The public key $\mathsf{pk}$ is a supersingular elliptic curve over $\F_{p^2}$, and $\mathsf{sk}$ is its endomorphism ring.

Let $\mathscr V$ be an honest verifier for the SQIsign identification protocol.
For any (malicious) prover $\mathscr P^*$ and parameters $\mathsf{pp}$, run the following experiment: first, sample a key pair $(\mathsf{pk},\mathsf{sk}) \gets \algorithmname{SQIsign.keygen}(\mathsf{pp})$, and give $\mathsf{pk}$ to $\mathscr P^*$. Then, run the SQIsign identification protocol between $\mathscr P^*$ and $\mathscr V$ with input $\mathsf{pk}$. 
Let $\pi^{\mathscr P^*}(\mathsf{pp})$ be the probability that $\mathscr V$ outputs $\top$ at the end of the protocol.
We define the \emph{soundness advantage} $\mathrm{Adv}_{\mathrm{SQIsound}}^{\mathscr P^*}(\mathsf{pp}) = \pi^{\mathscr P^*}(\mathsf{pp}) - 1/c$, where $c = \Theta(2^k)$ is the size of the challenge space.

In other words, $\pi^{\mathscr P^*}(\mathsf{pp})$ is the probability that $\mathscr P^*$ successfully fools an honest verifier, for a random key. Since there is a simple malicious prover achieving $\pi^{\mathscr P^*}(\mathsf{pp}) = 1/c$ (by guessing the challenge at the start of the protocol), the advantage $\mathrm{Adv}_{\mathrm{SQIsound}}^{\mathscr P^*}(\mathsf{pp})$ measures how much better $\mathscr P^*$ performs.

\begin{theorem}[Soundness of SQIsign]\label{thm:SQIsignsound}
Let $\mathscr P^*$ be a malicious prover.
Consider public parameters $\mathsf{pp}$, encoding the prime $p$.
There is an algorithm to solve $\EndRing$ for curves over $\F_{p^2}$ in expected polynomial time in $\log(p)$, in $\mathrm{Adv}_{\mathrm{SQIsound}}^{\mathscr P^*}(\mathsf{pp})^{-1}$ and in the expected running time of $\mathscr P^*$.
\end{theorem}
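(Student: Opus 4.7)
The plan is to combine three ingredients: the equivalence $\EndRing \equiv \OneEnd$ given by Theorem~\ref{thm:main}, a special-soundness extractor for the SQIsign Sigma protocol, and the random-self-reducibility of the supersingular world already used in the proof of Theorem~\ref{thm:CGLcollres}. The overall goal is to build, from $\mathscr P^*$, an algorithm solving $\OneEnd$ on a curve drawn from the key-generation distribution, and then to bootstrap this to arbitrary input curves and to $\EndRing$.

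For the extraction, I would run a standard rewinding / heavy-row argument on $\mathscr P^*$. Having sampled a key pair $(\mathsf{pk},\mathsf{sk})\gets\algorithmname{SQIsign.keygen}(\mathsf{pp})$, I play the role of the verifier and rewind $\mathscr P^*$ to the challenge stage with independent fresh challenges. The usual heavy-row computation shows that the probability of collecting two accepting transcripts with the same commitment and distinct challenges is $\Omega(\mathrm{Adv}_{\mathrm{SQIsound}}^{\mathscr P^*}(\mathsf{pp})^2)$, so after $O(\mathrm{Adv}_{\mathrm{SQIsound}}^{\mathscr P^*}(\mathsf{pp})^{-2})$ replays, the extractor succeeds with constant probability. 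The point is that SQIsign is specially sound: two transcripts of that form encode two distinct non-backtracking isogeny paths sharing the commitment curve, and composing one with the dual of the other produces a non-scalar endomorphism of the public key, by the same short degree argument used in the proof of Theorem~\ref{thm:CGLcollres}.

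To remove the distributional assumption on the input, I follow the tactic of Theorem~\ref{thm:CGLcollres}: given arbitrary $E/\F_{p^2}$, take a uniform $2$-isogeny random walk $\eta\colon E\to E'$ of length $O(\log p)$, putting $E'$ within statistical distance $O(1/p)$ of uniform by Proposition~\ref{prop:rand-walk-standard} and hence within $O(1/p)$ of the keygen output distribution. Feeding $E'$ as a public key to the extractor yields some $\alpha\in\End(E')\setminus\Z$, and then $\hat\eta\circ\alpha\circ\eta\in\End(E)\setminus\Z$ solves $\OneEnd$ on~$E$. Chaining with Theorem~\ref{thm:main} gives an $\EndRing$ solver. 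In the regime where $\mathrm{Adv}_{\mathrm{SQIsound}}^{\mathscr P^*}(\mathsf{pp})^{-1}$ is already $\Omega(\sqrt p)$, I fall back on the unconditional $\tilde O(p^{1/2})$ algorithm of Theorem~\ref{thm:solvingEndRing}, which automatically meets the claimed bound.

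The main obstacle is really just to verify the special-soundness statement for SQIsign, namely that two accepting transcripts with the same commitment and distinct challenges yield a non-scalar endomorphism. This was essentially carried out in~\cite{DFKLPW20}, whose security reduction landed on $\OneEnd$; the only difference here is that Theorem~\ref{thm:main} lets us conclude the stronger statement that hardness of $\EndRing$ alone already suffices for soundness.
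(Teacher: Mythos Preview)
Your proposal is correct and follows the same route as the paper: build a $\OneEnd$ solver from the malicious prover via rewinding and special soundness, then apply Theorem~\ref{thm:main}. The paper's proof is simply more compact because it invokes \cite[Theorem~1]{DFKLPW20} as a black box, which already packages the extraction (and the passage to a worst-case solver) into an algorithm for the supersingular smooth endomorphism problem---a strengthening of $\OneEnd$---so the rewinding analysis and the random-walk self-reduction you sketch are absorbed into that citation.
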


\begin{proof}
Let $r$ denote the expected running time of $\mathscr P^*$.
From~\cite[Theorem~1]{DFKLPW20}, there is an algorithm of expected running time
$$r' = O\left(\frac{r}{\mathrm{Adv}_{\mathrm{SQIsound}}^{\mathscr P^*}(\mathsf{pp})}\right)$$
for the \emph{supersingular smooth endomorphism problem}, with solutions of length $O(\log p)$. The supersingular smooth endomorphism problem is defined as $\OneEnd$ with the additional contraint that the output has smooth degree. Therefore, the same algorithm solves $\OneEnd$. The result follows from the equivalence between $\OneEnd$ and $\EndRing$ (Theorem~\ref{thm:main}).
\qed\end{proof}

The same theorem is true with the quantity $\pi^{\mathscr P^*}(\mathsf{pp})^{-1}$ in place of $\mathrm{Adv}_{\mathrm{SQIsound}}^{\mathscr P^*}(\mathsf{pp})^{-1}$, which may be more natural. In the proof, we would get
$$r' = O\left(\frac{r}{\pi^{\mathscr P^*}(\mathsf{pp}) - 1/c}\right).$$
Note that $r'$ is not necessarily polynomial in $\pi^{\mathscr P^*}(\mathsf{pp})^{-1}$, as $\pi^{\mathscr P^*}(\mathsf{pp})$ could be arbitrarily close to $1/c$. 
We can consider two cases: first, if $\pi^{\mathscr P^*}(\mathsf{pp}) > 2/c$, we have $\pi^{\mathscr P^*}(\mathsf{pp}) - 1/c > \pi^{\mathscr P^*}(\mathsf{pp})/2$ and we can conclude as above.
Second, if $\pi^{\mathscr P^*}(\mathsf{pp}) \leq 2/c$, then $\pi^{\mathscr P^*}(\mathsf{pp})^{-1} \geq c/2 = \Omega(2^k) = \Omega(p^{1/2})$, so we can conclude from the fact that there exists an algorithm for $\EndRing$ in expected time $\tilde O(p^{1/2})$ (see Theorem~\ref{thm:solvingEndRing}).

\subsection{The endomorphism ring problem is equivalent to the isogeny problem} \label{subsec:Isogeny-EndRing}
It is known that the problem $\EndRing$ is equivalent to the $\ell$-isogeny path problem (assuming the generalised Riemann hypothesis~\cite{Wes21}). 
The same technique shows that $\EndRing$ is equivalent to the problem of finding isogenies of \emph{smooth} degree. 
Lifting this restriction yields the more general $\Isogeny$ problem.

\begin{prob}[$\Isogeny$]
Given a prime $p$ and two supersingular elliptic curves $E$ and $E'$ over $\F_{p^2}$, find an isogeny from $E$ to $E'$ in efficient representation.
\end{prob}

Given a function $\lambda \colon \Z_{>0} \to \Z_{>0}$, the $\Isogeny_\lambda$ problem denotes the $\Isogeny$ problem where the solution $\varphi$ is required to satisfy $\log (\deg \varphi) \leq \lambda(\log p)$ (the length of the output is bounded by a function of the length of the input).
 
From previous literature, it is easy to see that $\Isogeny$ reduces to $\EndRing$.

\begin{proposition}[$\Isogeny$ reduces to $\EndRing$]
Assuming the generalised Riemann hypothesis, the problem $\Isogeny_{\lambda}$ reduces to $\EndRing$ in probabilistic polynomial time (with respect to the length of the instance), for some function $\lambda(\log p) = O(\log p)$.
\end{proposition}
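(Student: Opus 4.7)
The plan is to follow the standard Deuring-correspondence pipeline, with the KLPT algorithm~\cite{KLPT14} providing the smoothing step whose bounds depend on GRH. Given an instance $(E,E')$ of $\Isogeny$, I would first invoke the $\EndRing$ oracle on both curves to obtain bases of $\End(E)$ and $\End(E')$. Via the Deuring correspondence (Theorem~\ref{thm:deuring}), these translate into two explicit maximal orders $\order,\order'$ in the quaternion algebra $B_{p,\infty}$ ramified at $p$ and $\infty$. Since all maximal orders of $B_{p,\infty}$ lie in a single genus, a connecting $(\order,\order')$-ideal $I$ exists and can be computed in polynomial time by standard quaternionic lattice algorithms.

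Next, I would apply KLPT~\cite{KLPT14} to $I$: under GRH, this produces in probabilistic polynomial time an ideal $J$ in the left-ideal class of $I$ whose reduced norm is a power of a fixed small prime $\ell\neq p$ (say $\ell = 2$), with $\log\nrd(J) = O(\log p)$. Then I would translate $J$ back to the elliptic-curve side to produce an isogeny $\varphi\colon E\to E'$ of degree $\nrd(J)$; since $\nrd(J)$ is a power of $\ell$, the ideal factors into a chain of norm-$\ell$ ideals, and $\varphi$ is built as the composition of $O(\log p)$ explicit $\ell$-isogenies, each computed from its kernel in the relevant $\ell$-power torsion after passing to an extension of $\F_{p^2}$ of degree polynomial in $\log p$. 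The concatenated chain yields an efficient representation of $\varphi$ in the sense of Definition~\ref{def:efficient-representation} with $\log\deg(\varphi) = O(\log p)$, giving the required function $\lambda$.

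The only place GRH enters is the KLPT norm bound on $J$; the other steps are unconditional. The main technical burden is the ideal-to-isogeny translation, where one must control the degrees of the auxiliary extensions used to access $\ell$-power torsion and push each successive kernel through the previously constructed isogenies at polynomial cost. These issues are handled exactly as on the constructive side of SQIsign~\cite{DFKLPW20} and its predecessors, so no new ideas are required.
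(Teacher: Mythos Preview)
Your proposal is correct and amounts to unpacking the citation that the paper's own proof invokes. The paper's argument is two lines: $\Isogeny$ trivially reduces to $\IsogPath$ (an $\ell$-isogeny path is already an isogeny in efficient representation), and $\IsogPath$ is known to reduce to $\EndRing$ under GRH by~\cite{Wes21}. Your pipeline---compute both endomorphism rings, form a connecting ideal, smooth it via KLPT, then translate the $\ell$-power-norm ideal back to an $\ell$-isogeny chain---is precisely the content of that cited reduction, so the two approaches are the same at the level of ideas, with yours simply being more explicit.

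One small remark: the original KLPT paper~\cite{KLPT14} is heuristic; the rigorous bound $\log\nrd(J)=O(\log p)$ under GRH is established in~\cite{Wes21}, so that is the more accurate citation for the smoothing step. Also, the passage from the $\EndRing$ output (endomorphisms in efficient representation) to an explicit embedding of $\End(E)$ into a fixed model of $B_{p,\infty}$ is not literally Theorem~\ref{thm:deuring} but an additional computational step (computing traces and norms to recover the Gram matrix, then identifying the quaternion algebra); this is standard and polynomial-time, but worth stating if you want the proof to be self-contained.
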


\begin{proof}
$\Isogeny$ immediately reduces to $\IsogPath$.
It is already known that the $\ell$-isogeny path problem (with paths of length $O(\log p)$) is equivalent to $\EndRing$~\cite{Wes21}, so $\Isogeny_\lambda$ reduces to $\EndRing$.
\qed\end{proof}

The converse reduction is trickier. As a solution to $\Isogeny$ is not guaranteed to have smooth degree, previous techniques have failed to prove that it is equivalent to $\EndRing$. Theorem~\ref{thm:main} unlocks this equivalence. Better yet, contrary to previous results of this form, Theorem~\ref{thm:EndRing-Isogeny} below is unconditional.
In particular, it implies that $\EndRing$ reduces to the $\ell$-isogeny path problem independently of the generalised Riemann hypothesis.

\begin{theorem}[$\EndRing$ reduces to $\Isogeny$]\label{thm:EndRing-Isogeny}
Given an oracle for $\Isogeny_\lambda$, there is an algorithm for $\EndRing$ that runs in expected polynomial time in $\log(p)$ and $\lambda(\log p)$.
\end{theorem}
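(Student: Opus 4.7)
The plan is to apply Theorem~\ref{thm:main} to reduce $\EndRing$ to $\OneEnd_{\lambda'}$ (where $\lambda'(n) = n + \lambda(n) + O(1)$ will turn out to be the length of the endomorphisms produced below), and then to reduce $\OneEnd_{\lambda'}$ to $\Isogeny_\lambda$ directly, using one oracle query per trial. On input $E$, the reduction will proceed as follows. First, perform a non-backtracking uniform random walk of length $k = \lceil \log p\rceil + c$ (for a suitable constant $c$) in the supersingular $2$-isogeny graph starting at $E$, producing a cyclic isogeny $\eta\colon E\to E'$ of degree $2^k$ in efficient representation. Next, query the $\Isogeny_\lambda$ oracle on $(E,E')$ to obtain $\varphi\colon E\to E'$ with $\log(\deg\varphi)\le\lambda(\log p)$. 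Finally, form $\alpha = \hat\eta\circ\varphi\in\End(E)$; if $\alpha\notin\Z$, output $\alpha$, otherwise discard and retry.

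The main analysis I will carry out is that $\alpha$ is non-scalar with probability at least $1/2$ per trial. Since the oracle's output $\varphi$ depends only on $(E,E')$ and not on $\eta$, it suffices to count, for each potential target $E'$, the number of ``bad'' $\eta$ for which $\hat\eta\circ\varphi\in\Z$. For each $E'$, this set is $L_{E'} = \rho^{-1}(\Z)\subseteq \Hom(E,E')$, where $\rho\colon\psi\mapsto\hat\psi\circ\varphi$ is a $\Z$-linear map. Because $\varphi\ne 0$ is surjective on $\overline{\F}_p$-points, $\rho$ is injective, and since $\Z\subset\End(E)$ has rank $1$, the lattice $L_{E'}$ has rank at most $1$. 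Writing $L_{E'}=\Z\eta_0$ when nonzero, an element $m\eta_0$ of degree $2^k$ satisfies $m^2\deg(\eta_0)=2^k$, and its kernel fits into an exact sequence $0\to\ker\eta_0\to\ker(m\eta_0)\to E'[m]\to 0$; since $E'[m]\cong(\Z/m\Z)^2$ is non-cyclic for $m\ge 2$, cyclicity of $m\eta_0$ forces $m=\pm 1$. Therefore $L_{E'}$ contains at most one cyclic $2^k$-isogeny.

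Summing over the $O(p)$ supersingular $\overline{\F}_p$-isomorphism classes yields at most $O(p)$ bad $\eta$, while the non-backtracking walk distributes $\eta$ uniformly over the $3\cdot 2^{k-1}$ cyclic $2^k$-isogenies leaving $E$ (which are in bijection with chains of cyclic $2$-power subgroups of $E$, equivalently with $\mathbf{P}^1(\Z/2^k\Z)$). Hence $\Pr[\alpha\in\Z] = O(p/2^k) < 1/2$ once $c$ is large enough, so the expected number of trials is $O(1)$. Each trial and the scalarity test (done by first computing $\deg\alpha = 2^k\deg\varphi$ and then, when it equals some $n^2$, evaluating $\alpha-[n]$ on torsion points of sufficient order, using the efficient representations) run in time polynomial in $\log p$ and $\lambda(\log p)$, as required.

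The step I expect to be the main obstacle is the lattice-theoretic analysis of cyclic isogenies inside the rank-$1$ sublattice $L_{E'}$: one must verify carefully that the non-cyclicity of $(\Z/m\Z)^2$ for $m\ge 2$ always propagates through the isogeny $\eta_0$ to rule out every candidate in $L_{E'}$ except $\pm\eta_0$ itself, regardless of how $\eta_0$ sits inside $\Hom(E,E')$. Once this combinatorial input is in hand, everything else reduces to the invocation of Theorem~\ref{thm:main}, routine handling of efficient representations, and an elementary counting bound.
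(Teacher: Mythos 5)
Your overall route is sound and genuinely different from the paper's. The paper also first invokes Theorem~\ref{thm:main} and then reduces $\OneEnd$ to $\Isogeny_\lambda$, but its Algorithm~\ref{alg:OneEnd-to-Isogeny} keeps only the \emph{last} step secret: it takes a $3$-isogeny walk $\varphi$, appends one $2$-isogeny $\nu$ whose backward direction is (by the equidistribution statement of \cite[Lemma~14]{SECUER}) close to uniform among the three $2$-isogenies out of the endpoint even conditioned on the endpoint, and then argues that any oracle answer $\psi$ can ``collide'' with at most one of these three choices (via the $2$-part of $\ker(\psi/2^f)$), giving success probability about $2/3$; non-scalarity is certified by the divisibility trick ($2\nmid\alpha$ but $2\mid\deg\alpha$) rather than by an equality test against $[\pm n]$. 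You instead keep the whole $2$-power walk secret and use a lattice argument: the step you flag as the main obstacle is in fact fine — $\rho$ is injective, so $L_{E'}$ has rank at most one, and since $E'[m]$ is a quotient of $\ker(m\eta_0)$ and is non-cyclic for $m\ge 2$ (note $m$ is a power of $2\ne p$), only $\pm\eta_0$ can be cyclic of degree $2^k$, i.e.\ at most one bad kernel per target and per oracle answer. Combined with the exact count of $3\cdot 2^{k-1}$ non-backtracking walks, each of probability $(3\cdot 2^{k-1})^{-1}$, this gives the $O(p/2^k)$ failure bound.

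The one genuine gap is the sentence ``the oracle's output $\varphi$ depends only on $(E,E')$ and not on $\eta$'', which your count over the $O(p)$ isomorphism classes silently upgrades to ``depends only on the isomorphism class of $E'$''. In your reduction the query curve $E'$ is the specific Weierstrass model produced by the V\'elu computation along the walk, and this model is a deterministic function of the walk; nothing prevents a (computationally unbounded, adversarial) oracle from recovering the walk from that model and answering $\varphi\in\Z\eta$ (e.g.\ $\varphi=\eta$ itself, which is a legal answer whenever $\lambda(\log p)\ge k$), forcing $\alpha=[2^k]$ at every trial. Equivalently, if you count ``at most one bad walk per query value'', the number of distinct query values is not a priori $O(p)$ but can be as large as the number of walks. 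The fix is easy and should be stated: canonicalise the endpoint before querying (compute $j(E')$, a canonical model $E'_{\mathrm{can}}$, and an explicit isomorphism $\iota\colon E'\to E'_{\mathrm{can}}$, then query $(E,E'_{\mathrm{can}})$ and work with $\iota\circ\eta$), so that, conditioned on the isomorphism class, the oracle's answer (even for a randomised oracle) is independent of the kernel of $\eta$; then for each class and each realisation of $\varphi$ there is at most one bad kernel, and your bound $O(p)\cdot(3\cdot 2^{k-1})^{-1}=O(2^{-c})$ goes through. With that amendment, together with the routine details you mention (efficient representation of $\hat\eta\circ\varphi$, a sound polynomial-time test that $\alpha\ne[\pm n]$ using several small torsion primes, and the bookkeeping $\lambda'(\log p)=O(\log p)+\lambda(\log p)$ before applying Theorem~\ref{thm:main-more-precise}), your argument is a correct alternative proof of Theorem~\ref{thm:EndRing-Isogeny}.
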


\begin{algorithm}[h]
  \caption{Solving $\OneEnd$ given an $\Isogeny$ oracle.}\label{alg:OneEnd-to-Isogeny}
  \begin{algorithmic}[1]
    \REQUIRE {A supersingular elliptic curve $E/\F_{p^2}$, a parameter $\varepsilon >0$, an oracle $\mathscr O_\Isogeny$ solving the $\Isogeny_\lambda$ problem.}
    \ENSURE {An endomorphism $\alpha \in \End(E)\setminus \Z$ in efficient representation.}
    \STATE $P \gets$ an arbitrary nonzero point in $E[2]$
    \STATE $n \gets \lceil 2\log_3(p) - 4\log_3(\varepsilon)\rceil$
    \WHILE {\texttt true}
    	\STATE $\varphi \gets $ a non-backtracking random walk $\varphi \colon E \to E'$ of length $n$ in the $3$-isogeny graph
    	\STATE $\nu \gets$ the isogeny $\nu \colon E' \to E''$ of kernel $\langle\varphi(P)\rangle$
    	\STATE $\psi \gets \mathscr O_\Isogeny(E'',E)$, an isogeny $\psi\colon E'' \to E$
    	\STATE $\alpha \gets (\psi \circ \nu \circ \varphi)/2^e \in \End(E)$ for the largest possible $e$\label{step:div-by-pow-2}
    	\IF {$2 \mid \deg(\alpha)$}\label{step:check-div-by-pow-2}
    		\RETURN $\alpha$\label{step:returning-end}
    	\ENDIF
    \ENDWHILE
  \end{algorithmic}
\end{algorithm}

\begin{proof}
Since $\EndRing$ is equivalent to $\OneEnd$ (Theorem~\ref{thm:main}), let us prove that $\OneEnd$ reduces to $\Isogeny$. Suppose we have an oracle $\Oracle_\Isogeny$ for $\Isogeny_\lambda$. Let $E$ be a supersingular curve for which we want to solve $\OneEnd$. Consider a parameter $\varepsilon$.
The reduction is described in Algorithm~\ref{alg:OneEnd-to-Isogeny}.
Step~\ref{step:div-by-pow-2} and Step~\ref{step:check-div-by-pow-2} ensure that $\alpha$ is not a scalar (indeed, they ensure that upon return, at Step~\ref{step:returning-end}, we have $2 \nmid \alpha$ yet $2 \mid \deg(\alpha)$), so is a valid solution to $\OneEnd$.

Let us show that the expected number of iterations of the while-loop is $O(1)$.  Let $f \in \Z$ maximal such that $E''[2^f] \subseteq \ker(\psi)$, and let $\psi' = \psi/2^f$. If $\deg(\psi')$ is odd, then $\alpha$ is non-scalar (its degree is divisible by $2$ but not by $4$) and the loop terminates at this iteration. Now, suppose $\deg(\psi')$ is even and write $\ker(\psi') \cap E''[2] = G_\psi$, a group of order $2$.
The loop in the reduction terminates in the event that $\ker \hat\nu \neq G_\psi$. In the rest of the proof, we bound the probability of this event at each iteration.

Let $P$ be the probability distribution of the pair $(E'',\hat \nu)$, and $Q$ the probability distribution of the pair $(E'',\eta)$ where $\eta$ is uniformly random (among the three $2$-isogenies from $E''$).
Note that by construction, the value $Q(E'',\eta)$ does not depend on $\eta$, and we also write it $\tilde Q(E'')$.
Consider the function $\tau$ defined in~\cite[Lemma~14]{SECUER}.
We have 
\begin{align*}
\tau(p,2,3,k) &= \frac 1 4 (p-1)^{1/2} \left(1 + \sqrt{3}\right)\left(k + \frac 1 2\right) 3^{-k/2} \leq p^{1/2} 3^{-k/4}.
\end{align*}
From~\cite[Lemma~14]{SECUER}, if $\tau(p,2,3,k) \leq \varepsilon$, then the statistical distance $\|P-Q\|_1/2$ is at most $\varepsilon$. This condition is satisfied if the $3$-walk $\varphi$ has length at least 
\begin{align*}
n(p,2,3,\varepsilon) &= \min\{k \mid \tau(p,2,3,k) \leq \varepsilon\}\\
&\leq \min\{k \mid p^{1/2} 3^{-k/4} \leq \varepsilon\}
= 2\log_3(p) - 4\log_3(\varepsilon).
\end{align*}
We deduce that indeed $\|P-Q\|_1 < \varepsilon$, since $\varphi$ has length $\lceil 2\log_3(p) - 4\log_3(\varepsilon) \rceil$.

We now obtain the following bound:
\begin{align*}
\Pr&\left[\ker \hat\nu = G_\psi\right] 
= \sum_{(E'',\hat \nu)}P(E'',\hat \nu)\Pr[\ker \hat\nu = G_\psi \mid (E'',\hat \nu)]\\
&\leq \sum_{(E'',\hat \nu)}(Q(E'',\hat \nu) + \max_{\eta}|P(E'',\eta) - Q(E'',\eta)|)\Pr[\ker \hat\nu = G_\psi \mid E'']\\
&\leq \sum_{E''}\tilde Q(E'') + \sum_{E''}\max_{\eta}|P(E'',\eta) - Q(E'',\eta)| \leq \frac 1 3 + \varepsilon.\\
\end{align*}
The second line uses that for any fixed $E''$, the distribution of $\psi$ is independent of~$\nu$.
In conclusion, at each iteration, the event $\ker \hat\nu \neq G_\psi$ (leading to termination) happens with probability at least $2/3-\varepsilon$. With $\varepsilon < 1/3$, the expected number of iterations is at most $(2/3-\varepsilon)^{-1} \leq 3 = O(1)$.
\qed\end{proof}

\subsection{An unconditional algorithm for $\EndRing$ in time $\tilde O(p^{1/2})$}
As the foundational problem of isogeny-based cryptography, understanding the hardness of $\EndRing$ is critical.
The fastest known algorithms have complexity in $\tilde O(p^{1/2})$, but rely on unproven assumptions such as the generalised Riemann hypothesis. 
With our new results, we can now prove that $\EndRing$ can be solved in time $\tilde O(p^{1/2})$ \emph{unconditionally}.
In contrast, the previous fastest unconditional algorithm had complexity $\tilde O(p)$ and only returned a full-rank subring of the endomorphism ring~\cite[Theorem~75]{Kohel96}.

The first method to reach complexity $\tilde O(p^{1/2})$ under the generalised Riemann hypothesis consists in reducing $\EndRing$ to $\IsogPath$ (via~\cite{Wes21}), and solving $\IsogPath$ by a generic graph path-finding algorithm. Unconditionally, we can follow the same strategy, but using our new reduction from $\EndRing$ to $\IsogPath$ (Theorem~\ref{thm:EndRing-Isogeny}). Let us start by recalling the following folklore solution to $\IsogPath$.

\label{subsec:solving-EndRing}
\begin{proposition}\label{prop:solvingisopath}
Algorithm~\ref{alg:solvingisopath} solves the $\IsogPath$ problem in expected time
  $\poly(\ell,\log p)p^{1/2}$ and returns paths of length $O(\log p)$.
\end{proposition}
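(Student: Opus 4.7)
The plan is to follow the standard meet-in-the-middle approach. Although Algorithm~\ref{alg:solvingisopath} is not reproduced here, it is the folklore birthday algorithm: perform independent non-backtracking random walks of length $n = C\log p$ (for a suitable absolute constant $C$) in the $\ell$-isogeny graph starting from $E$ and from $E'$, storing endpoints (with the walk that produced them) in two hash tables $T$ and $T'$; as soon as a curve appears in both tables, concatenate the corresponding walks (reversing the one from $E'$) to obtain an $\ell$-isogeny path $E \to E'$ of length $2n = O(\log p)$.

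The analysis proceeds in two steps. First, by Proposition~\ref{prop:rand-walk-standard}, choosing $n = O(\log_\ell p)$ large enough ensures that the endpoint of such a walk from any starting vertex follows a distribution at statistical distance $O(1/p)$ from the stationary distribution $f$, which in turn is at statistical distance $O(1/p)$ from the uniform distribution on the $\Theta(p)$ isomorphism classes of supersingular curves over $\overline{\F_p}$. Second, a standard birthday-collision calculation shows that after drawing $k = \Theta(p^{1/2})$ endpoints on each side, the probability of the two tables being disjoint is bounded by a constant less than $1$: writing $\mu$ for the common near-uniform distribution of endpoints, the expected number of colliding pairs is $k^2\sum_x \mu(x)^2 = \Omega(k^2/p) = \Omega(1)$, and a second-moment (or Paley--Zygmund) argument converts this into a constant lower bound on the collision probability. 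Thus one expects to restart $O(1)$ many times before a collision occurs.

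Each iteration takes time $\poly(\ell,\log p)\cdot p^{1/2}$: computing $O(p^{1/2})$ walks of length $O(\log p)$ in the $\ell$-isogeny graph, where each step costs $\poly(\ell,\log p)$ (enumerating the $\ell+1$ outgoing $\ell$-isogenies and computing the target $j$-invariant), plus $O(p^{1/2})$ hash-table insertions and lookups, each of cost $\poly(\log p)$. Combining with the $O(1)$ expected number of restarts yields the claimed total expected running time $\poly(\ell,\log p)\cdot p^{1/2}$, and by construction the output path has length $2n = O(\log p)$.

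The only delicate point is the birthday argument when the samples come from a distribution that is merely close to uniform and when walks from the two sides are not identically distributed (they start from $E$ and from $E'$ respectively). This is handled by replacing $\mu$ by the actual endpoint distribution and bounding $\sum_x \mu(x)\mu'(x)$ from below using $\|\mu-f\|_1,\|\mu'-f\|_1 = O(1/p)$ together with $\sum_x f(x)^2 = \Theta(1/p)$; the $O(1/p)$ statistical-distance error only affects the constants and does not change the asymptotic conclusion. I would highlight this as the main technical obstacle to a fully rigorous write-up, but everything else is routine.
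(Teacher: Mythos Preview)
Your proposal is correct and follows essentially the same approach as the paper: use Proposition~\ref{prop:rand-walk-standard} to get near-uniform endpoints from walks of length $O(\log p)$, then apply a birthday/meet-in-the-middle argument over the $\Theta(p)$ vertices. The only cosmetic difference is that the paper's Algorithm~\ref{alg:solvingisopath} first fills a single table $T$ with $p^{1/2}$ distinct curves from $E_0$ and then searches from $E_1$ until hitting $T$ (so the analysis is ``coupon-collector for the first loop, then hitting probability $\sim p^{-1/2}$ for the second''), whereas you describe two tables with a symmetric collision check and a second-moment argument; your write-up is in fact more careful than the paper's about the near-uniform birthday step.
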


\begin{algorithm}[h]
  \caption{Solving $\IsogPath$.}\label{alg:solvingisopath}
  \begin{algorithmic}[1]
    \REQUIRE {Two supersingular elliptic curves $E_0/\F_{p^2}$ and $E_1/\F_{p^2}$, a parameter $n$.}
    \ENSURE {An $\ell$-isogeny path $E_0\to E_1$.}
    \STATE $T \gets \emptyset$ an empty hash table
    \WHILE {$\# T < p^{1/2}$}
    	\STATE $\varphi \gets $ a random walk $\varphi \colon E_0 \to E$ of length $n$ in the $\ell$-isogeny graph
    	\IF {$j(E)$ is not the key of any entry in $T$}
    		\STATE Record $\varphi$ in $T$, with key $j(E)$
    	\ENDIF
    \ENDWHILE
    \WHILE {\texttt true}
    	\STATE $\psi \gets $ a random walk $\psi \colon E_1 \to E$ of length $n$ in the $\ell$-isogeny graph
    	\IF {$j(E)$ is the key of a recorded entry $\varphi$ in $T$}
    		\RETURN $\hat\varphi \circ \psi \colon E_0 \to E_1$
    	\ENDIF
    \ENDWHILE
  \end{algorithmic}
\end{algorithm}

\begin{proof}
Algorithm~\ref{alg:solvingisopath} is a standard bi-directional pathfinding algorithm.
Choose the parameter $n$ as in Proposition~\ref{prop:rand-walk-standard}, so that random isogeny paths of length $O(\log p)$ reach a target at statistical distance $O(1/p)$ from uniform.
The $\ell$-isogeny graph has $O(p)$ vertices, and each sampled curve is close to uniform, so the table is complete after $O(p^{1/2})$ iterations of the first loop. By the same token, thanks to the birthday paradox, the second loop finds a matching entry after an expected $O(p^{1/2})$ number of attempts. The factor $\poly(\ell,\log p)$ accounts for the cost of sampling an isogeny path and checking that a candidate is in the table.
\qed\end{proof}

\begin{theorem}\label{thm:solvingEndRing}
There is an algorithm solving $\EndRing$ in expected time $\tilde O\left( p^{1/2}\right)$.
\end{theorem}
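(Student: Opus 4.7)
The plan is to combine the two ingredients developed immediately above: the unconditional reduction from $\EndRing$ to $\Isogeny_\lambda$ with $\lambda(\log p) = O(\log p)$ provided by Theorem~\ref{thm:EndRing-Isogeny}, and the folklore meet-in-the-middle algorithm for $\IsogPath$ of Proposition~\ref{prop:solvingisopath}.

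Concretely, I would fix a small prime $\ell$ with $\ell\neq p$ (for instance $\ell = 2$ when $p\neq 2$, and $\ell = 3$ otherwise), and build an $\Isogeny_\lambda$ oracle as follows: on input $(E,E')$, run Algorithm~\ref{alg:solvingisopath} with this choice of $\ell$ to obtain an $\ell$-isogeny path from $E$ to $E'$ of length $O(\log p)$, and return this path as an efficient representation of an isogeny of smooth degree $\ell^{O(\log p)}$. By Proposition~\ref{prop:solvingisopath}, this oracle terminates in expected time $\poly(\log p)\cdot p^{1/2}$, and its output is an isogeny whose degree has logarithm $O(\log p)$, so the oracle indeed solves $\Isogeny_\lambda$ for some $\lambda(\log p) = O(\log p)$.

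Feeding this oracle into Theorem~\ref{thm:EndRing-Isogeny} yields an algorithm for $\EndRing$ whose expected running time is polynomial in $\log(p)$, in $\lambda(\log p) = O(\log p)$, and in the expected running time of the oracle, which is $\tilde O(p^{1/2})$. Multiplying these contributions gives an overall expected running time of $\tilde O(p^{1/2})$, as claimed. There is no real obstacle here: the entire difficulty was in obtaining the unconditional reduction of Theorem~\ref{thm:EndRing-Isogeny}, which in turn rested on the unconditional equivalence $\EndRing\equiv \OneEnd$ (Theorem~\ref{thm:main}); once these are in hand, the present theorem is simply a composition of running-time bounds.
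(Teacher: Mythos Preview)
Your proof is correct and matches the paper's approach exactly: solve $\IsogPath$ via Proposition~\ref{prop:solvingisopath} and plug the result into the unconditional reduction of Theorem~\ref{thm:EndRing-Isogeny}. One small phrasing issue: the reduction's cost is polynomial in $\log p$ and $\lambda(\log p)$ \emph{plus} polynomially many oracle calls, not ``polynomial in the expected running time of the oracle'' --- your final bound is right, but the latter wording would only give $p^{O(1)}$ rather than $\tilde O(p^{1/2})$.
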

\begin{proof}
This follows from the fact that there is an algorithm of complexity $\tilde O\left( p^{1/2}\right)$ for the $2$-isogeny path problem 
(Proposition~\ref{prop:solvingisopath}), and $\EndRing$ reduces 
to polynomially many instances of the $\ell$-isogeny path problem (Theorem~\ref{thm:EndRing-Isogeny}).
\qed\end{proof}

\bibliographystyle{alpha}
\bibliography{bib}

\newpage

\appendix

\section{Illustration of our equidistribution theorem}\label{appen:illustrations}

In this appendix, we give some more examples and comments on the use of
Theorem~\ref{thm:equidistrE} and Proposition~\ref{prop:companionequidistr}.

\subsection{Distributions and functions}\label{appen:dist-func}

Recall the definition of our adjacency operator on~$L^2(\cG_\cF)$:
  \[
    A_\ell F(x) = \sum_{x\to y} F(y),
  \]
where the sum runs over edges of degree~$\ell$ leaving~$x$. This sum always
has~$\ell+1$ terms.
In some situations it is more natural to use a different operator~$B_\ell$:
\[
  B_\ell F(x) = \sum_{x\leftarrow y} F(y)
\]
where the sum runs over edges of degree~$\ell$ arriving at~$x$. This sum may
have fewer than~$\ell+1$ terms, due to automorphisms.

In order to relate the two operators, it is convenient to compute the adjoint
of~$A_\ell$. Let~$\mu$ be the measure on~$\cG_\cF$ (recall~$\mu(x) =
\frac{1}{\#\Aut(x)}$).
We have

\begin{eqnarray*}
  \langle A_\ell F, G\rangle
  &=& \sum_x A_\ell F(x) \overline{G(x)} \mu(x) \\
  &=& \sum_x \sum_{x\to y}F(y) \overline{G(x)} \mu(x) \\
  &=& \sum_y \sum_{x\leftarrow y}F(y) \overline{G(x)} \mu(x) \\
  &=& \sum_y F(y)\left(\frac{1}{\mu(y)}\sum_{x\leftarrow y} \overline{G(x)} \mu(x) \right)\mu(y) \\
  &=& \langle F, A_\ell^* G\rangle,
\end{eqnarray*}
where
\[
  A_\ell^* G(x) = \frac{1}{\mu(x)}\sum_{x\leftarrow y}G(y)\mu(y).
\]
We therefore introduce the ``diagonal'' operator~$M$ on~$L^2(\cG_\cF)$ defined by
\[
  MF(x) = F(x)\mu(x),
\]
so that we have
\[
  A_\ell^* = M^{-1}B_\ell M
  \text{ i.e. }
  B_\ell = MA_\ell^* M^{-1}.
\]
Since~$A_\ell^*$ has the same orthogonal eigenvectors as~$A_\ell$, with complex
conjugate eigenvalues, this gives the spectral decomposition of~$B_\ell$.

For instance, the action of one step of a degree~$\ell$ random walk on a distribution
on~$\cG_\cF$ is given by~$\frac{B_\ell}{\ell+1}$. Therefore, in the case
where~$\cG_\cF^1$ is connected, $L^2_{\deg}(\cG_\cF)$ is one-dimensional generated
by the constant function~$\triv$ equal to~$1$ everywhere, so we obtain that for
every~$F$, the sequence~$(\frac{A_\ell^*}{\ell+1})^k F$ quickly converges to a
constant function, and therefore for every distribution~$f$, the
distribution~$(\frac{B_\ell}{\ell+1})^k f$ obtained after~$k$ steps of the
random walk converges to the distribution~$M\triv=\mu$, i.e. to the stationary
distribution.

\subsection{Explicit orthogonal projection onto~$L^2_{\deg}$}\label{appen:proj}

Another useful tool is the explicit decomposition of functions according
to~$L^2(\cG_\cF) = L^2_{\deg}(\cG_\cF) \oplus L^2_0(\cG_\cF)$.
We first introduce some notation: for two vertices~$x,y$ of~$\cG_\cF$,
write~$x\sim y$ if they are in the same connected component of~$\cG_\cF^1$ (this
is an equivalence relation). Moreover, for every vertex~$x$, let
\[
  W(x) = \sum_{y\sim x}\mu(y).
\]
If~$x\sim y$, then~$W(x)=W(y)$. We now define an operator~$P$ on~$L^2(\cG_\cF)$:
\[
  PF(x) = \frac{1}{W(x)}\sum_{y\sim x}F(y)\mu(y).
\]
The function~$PF$ is clearly in~$L^2_{\deg}(\cG_\cF)$, and
if~$F\in L^2_{\deg}(\cG_\cF)$  then~$PF = F$; therefore~$P$ is a projector
onto~$L^2_{\deg}(\cG_\cF)$. In order to prove that~$P$ is the desired orthogonal
projector, it is enough to show that it is self-adjoint. Let~$F,G\in
L^2(\cG_\cF)$, then
\begin{eqnarray*}
  \langle PF, G \rangle
  &=& \sum_{x}PF(x)\overline{G(x)}\mu(x) \\
  &=& \sum_{x}\frac{1}{W(x)}\sum_{y\sim x}F(y)\mu(y)\overline{G(x)}\mu(x) \\
  &=& \sum_{y}\sum_{x\sim y}\frac{1}{W(x)}F(y)\mu(y)\overline{G(x)}\mu(x) \\
  &=& \sum_{y}\sum_{x\sim y}\frac{1}{W(y)}F(y)\mu(y)\overline{G(x)}\mu(x) \\
  &=& \sum_{y}F(y)\left(\frac{1}{W(y)}\sum_{x\sim y}\overline{G(x)}\mu(x)\right)\mu(y) \\
  &=& \langle F, PG\rangle.
\end{eqnarray*}
So~$P$, being self-adjoint, is the orthogonal projection
onto~$L^2_{\deg}(\cG_\cF)$.
Note that one can also see this formula as the sum of the orthogonal projections
onto the indicator functions of the connected components of~$\cG_\cF^1$.

\subsection{Equidistribution over an entire connected component of~$\cG_\cF$}

Assume that~$\cG_\cF$ is connected (one can always reduce to this case by
considering connected components). The existence of the map~$\Deg$ and the
resulting action of~$A_\ell$ on~$L^2_{\deg}(\cG_\cF)$ often force the
$\ell$-part of~$\cG_\cF$ to be disconnected and/or multipartite, preventing
degree~$\ell$ random walks from properly equidistributing. One may be tempted to
think that there is a fundamental obstruction to the equidistribution of random
walks on the whole of~$\cG_\cF$. We will see here that this is not the case: one
can easily obtain full equidistribution, simply by using several primes for the
random walk (in other words, in general the disconnectedness of~$\cG_\cF$ is the
only fundamental obstruction to equidistribution).

Pick a bound~$X$, and assume that~$\Sigma$ contains every prime~$\ell<X$ that
does not divide~$pN$. Let~$\nb(X)$ denote the number of such primes, so that we
have~$\nb(X)\approx\frac{X}{\log X}$. Define the operator~$\Delta$
on~$L^2(\cG_\cF)$ by
\[
  \Delta = \frac{1}{\nb(X)}\sum_{\ell< X} \frac{A_\ell}{\ell+1}.
\]
The interpretation of~$\Delta$ is that one step of the corresponding random
walk consists in choosing a prime~$\ell<X$ uniformly at random, and then using
one step of the degree~$\ell$ random walk.

Since the~$A_\ell$ are normal operators that pairwise commute, $\Delta$ is also
a normal operator, stabilises~$L^2_{\deg}(\cG_\cF)$ and~$L^2_0(\cG_\cF)$, and is
diagonalisable in the same orthogonal basis as the~$A_\ell$.
We bound its eigenvalues and operator norm.
\begin{itemize}
  \item On~$L^2_0(\cG_\cF)$, the operator norm of~$\Delta$ is bounded by
    \[
      \frac{1}{\nb(X)} \sum_{\ell<X}\frac{2\sqrt{\ell}}{\ell+1}
      \approx \frac{1}{\sqrt{X}}.
    \]
  \item On~$L^2_{\deg}(\cG_\cF)$, there is one eigenvector for each character~$\chi$
    of~$(\Z/N\Z)^\times$ that vanishes on~$\deg(H)$ (with the notations of
    Proposition~\ref{prop:companionequidistr}), with eigenvalue~$1$ if~$\chi$ is the
    trivial character, and otherwise
    \[
      \frac{1}{\nb(X)}\sum_{\ell < X}\chi(\ell) \approx \frac{1}{\sqrt{X}}.
    \]
\end{itemize}
For~$X$ large enough, all the eigenvalues are therefore small, except for the
eigenvalue~$1$ corresponding to the
constant function. Moreover, these approximation can be turned into good bounds
under the generalised Riemann hypothesis.
Therefore, for a moderate value of~$X$, the random walk corresponding
to~$\Delta$ will quickly equidistribute over the whole of~$\cG_\cF$.

\subsection{Example: graphs attached to endomorphisms
modulo~$\ell$}\label{appen:graphs}

It is often convenient to use a functor slightly different
from~$\End/N$, namely the functor~$\cF$ (for~$\Sigma$ the set of all primes not
dividing~$N$) defined by
\begin{itemize}
  \item $\cF(E) = \End(E)/N\End(E)$;
  \item $\cF(\varphi) \colon \alpha \mapsto (\deg
    \varphi)^{-1}\varphi\alpha\hat\varphi$ (which makes sense:
    $\deg\varphi$ is invertible~$\bmod N$).
\end{itemize}
The main reason to prefer this functor is that for every isogeny~$\varphi\colon E
\to E'$, the map~$\cF(\varphi)\colon
\End(E)/N\End(E) \to \End(E')/N\End(E')$ is a ring homomorphism, and therefore
preserves the trace, degree, dual, minimal polynomial and level.
This functor clearly satisfies the~$\modN$-congruence property.
We will describe the connected components of the graphs~$\cG_\cF$,
and~$\cG_\cF^1$, and the graph~$\cG_{\deg}$.

For simplicity, we will assume~$N = \ell$ is an odd prime different from~$p$,
but the other cases can be treated similarly.
We recall the classification of conjugacy classes in~$M_2(\F_\ell)$ and their
centraliser in~$\GL_2(\F_\ell)$:
\begin{enumerate}[(1)]
  \item Homotheties
    \[
      \begin{pmatrix}
        a & 0 \\ 0 & a
      \end{pmatrix}
      \text{ for } a\in\F_\ell.
    \]
    The centraliser~$H$ of such a matrix is~$\GL_2(\F_\ell)$, and~$\det(H) =
    \F_\ell^\times$.
  \item Diagonalisable matrices, with representatives
    \[
      \begin{pmatrix}
        a & 0 \\ 0 & b
      \end{pmatrix}
      \text{ for }a\neq b \in \F_\ell.
    \]
    The centraliser~$H$ of such a matrix is
    \[
      \begin{pmatrix}
        * & 0 \\ 0 & *
      \end{pmatrix}
      \cong \F_\ell^\times \times \F_\ell^\times,
    \]
    and~$\det(H) = \F_\ell^\times$. Two matrices in such a conjugacy class are
    also~$\SL_2(\F_\ell)$-conjugate.
  \item Semisimple, non-diagonalisable matrices, with representatives
    \[
      \Psi(\lambda)
      \text{ for } \lambda\in\F_{\ell^2},
    \]
    where~$\Psi\colon \F_{\ell^2} \to M_2(\F_\ell)$ is the ring homomorphism
    given by the action on an~$\F_\ell$-basis of~$\F_{\ell^2}$.
    The centraliser of such a matrix is~$H = \Psi(\F_{\ell^2}^\times) \cong
    \F_{\ell^2}^\times$ and~$\det(H) = \F_\ell^\times$.
    Two matrices in such a conjugacy class are also~$\SL_2(\F_\ell)$-conjugate.
  \item\label{item:nonsemisimple} Non-semisimple matrices, with representatives
    \[
      \begin{pmatrix}
        a & 1 \\ 0 & a
      \end{pmatrix}
      \text{ for } a\in\F_\ell.
    \]
    The centraliser of such a matrix is
    \[
      H =
      \left\{
      \begin{pmatrix}
        c & d \\ 0 & c
      \end{pmatrix}
      \mid c\in\F_\ell^\times, d\in\F_\ell
      \right\},
    \]
    and~$\det(H) = (\F_\ell^\times)^2$. Each such conjugacy class splits into
    two $\SL_2(\F_\ell)$-conjugacy class, with representatives
    \[
      \begin{pmatrix}
        a & \eps \\ 0 & a
      \end{pmatrix}
      \text{ for } a\in\F_\ell \text{ and
      }\eps\in\F_\ell^\times/(\F_\ell^\times)^2;
    \]
    under conjugacy by an element~$g\in\GL_2(\F_\ell)$, the entry~$\eps$ of the
    representative gets multiplied by~$\det(g)$ modulo~$(\F_\ell^\times)^2$.
\end{enumerate}
In particular, conjugacy classes are completely characterised by the pair
(characteristic polynomial, minimal polynomial).

For every~$E\in\catSS(p)$, choose an
isomorphism~$\psi_E\colon \End(E)/\ell\End(E) \cong M_2(\F_\ell)$.
By Proposition~\ref{prop:companionequidistr}~(\ref{item:action}), two
vertices~$(E,\alpha)$ and~$(E,\beta)$ above the same curve~$E$ are connected if
and only if~$\alpha$ and~$\beta$ are conjugate in~$\End(E)/\ell\End(E)$, if and only
if~$\psi_E(\alpha)$ and~$\psi_E(\beta)$ are conjugate.
Moreover, since any two curves are connected and
the~$\cF(\varphi)$ are ring isomorphisms, the
connected components of~$\cG_\cF$ are exactly the
\[
  \{ (E,\alpha) : \psi_E(\alpha)\in C\}
\]
where~$C$ is a conjugacy class in~$M_2(\F_\ell)$.
Fix~$\eta\in \F_\ell^\times\setminus(\F_\ell^\times)^2$.
We have~$\deg(H) = \F_\ell^\times$ for all types,
except~(\ref{item:nonsemisimple}) where~$\deg(H) = (\F_\ell^\times)^2$.
So the graph~$\cG_{\deg}$ consists in one isolated vertex for each~$C$, except
for type~(\ref{item:nonsemisimple}) which gives a connected component with two
vertices~$\{1,\eta\}$, connected by edges labelled by the elements
of~$\F_\ell^\times\setminus(\F_\ell^\times)^2$.

We now consider~$\cG_\cF^1$.
By Proposition~\ref{prop:companionequidistr}~(\ref{item:action}), two
vertices~$(E,\alpha)$ and~$(E,\beta)$ above the same curve~$E$ are connected
in~$\cG_\cF^1$ if and only if~$\alpha$ and~$\beta$ are conjugate
in~$\End(E)/\ell\End(E)$ by an element of 
degree~$1\bmod\ell$, if and only if~$\psi_E(\alpha)$ and~$\psi_E(\beta)$ are
$\SL_2(\F_\ell)$-conjugate.

For any ring~$R$ and~$g\in R^\times$, let~$\Ad(g)$ denote the endomorphism
of~$R$ given by~$r\mapsto grg^{-1}$.
Let~$E\in\catSS(p)$. By Proposition~\ref{prop:companionequidistr}~(\ref{item:connect}), there
exists~$\varphi\colon E_0\to E$ of degree~$1\bmod \ell$.
The composition~$\psi_E \circ \cF(\varphi)\circ \psi_{E_0}^{-1}$ is an
automorphism of~$M_2(\F_\ell)$, hence of the form~$\Ad(g)$ for
some~$g\in\GL_2(\F_\ell)$. Changing~$\psi_E$ if necessary (by an interior
automorphism), we may assume that~$\det g\in(\F_\ell^\times)^2$.
In this case, this property holds for all~$\varphi'\colon E_0\to E$ of
degree~$1\bmod \ell$:
\begin{eqnarray*}
  \psi_E \circ \cF(\varphi')\circ \psi_{E_0}^{-1}
  &=& \psi_E \circ \cF(\varphi)\circ \cF(\varphi^{-1}\varphi')\circ \psi_{E_0}^{-1} \\
  &=& \psi_E \circ \cF(\varphi)\circ \Ad(\varphi^{-1}\varphi')\circ \psi_{E_0}^{-1} \\
  &=& \psi_E \circ \cF(\varphi)\circ \psi_{E_0}^{-1}\circ \Ad(\psi_{E_0}(\varphi^{-1}\varphi')) \\
  &=& \Ad(g) \circ \Ad(\psi_{E_0}(\varphi^{-1}\varphi')) \\
  &=& \Ad(g\psi_{E_0}(\varphi^{-1}\varphi')),
\end{eqnarray*}
and~$\det\psi_{E_0}(\varphi^{-1}\varphi') = \deg(\varphi^{-1}\varphi') = 1\bmod \ell$.

Now let~$(E_0,\alpha)$ and~$(E,\beta)$ be vertices. Since~$E$ and~$E_0$ are
connected by an isogeny~$\varphi$ of degree~$1\bmod\ell$, the
vertices~$(E_0,\alpha)$ and~$(E,\beta)$ are connected if and only
if~$(E,\cF(\varphi)(\alpha))$ and~$(E,\beta)$ are connected,
if and only if~$\psi_E(\cF(\varphi)(\alpha))$ and~$\psi_E(\beta)$
are~$\SL_2(\F_\ell)$-conjugate, if and only if~$\psi_{E_0}(\alpha)$
and~$\psi_E(\beta)$ are~$\SL_2(\F_\ell)$-conjugate by our assumption
on~$\psi_E$.

In other words, the connected components of~$\cG_\cF^1$ are the
\[
  \{ (E,\alpha) : \psi_E(\alpha)\in C\}
\]
for~$C$ a conjugacy class in~$M_2(\F_\ell)$ not of
type~(\ref{item:nonsemisimple}), and the
\[
  \{ (E,\alpha) : \psi_E(\alpha)\in C_1\}
  \text{ and }
  \{ (E,\alpha) : \psi_E(\alpha)\in C_\eta\}
\]
for~$C$ a conjugacy class in~$M_2(\F_\ell)$ of
type~(\ref{item:nonsemisimple}).
(The point of our assumption was to make sure that components~$C_1$ and~$C_\eta$
are not swapped by an isogeny~$\varphi$ of degree~$1\bmod \ell$.)

\subsection{Example: endomorphism transported by a random walk}

Let us examine the following situation: let~$E_0\in\catSS(p)$
and~$\alpha_0\in\End(E_0)$, let~$\varphi\colon E_0 \to E$ be the
result of a $k$-steps random walk of $2$-isogenies, and let~$\alpha =
\varphi\alpha_0\hat\varphi$. What is the distribution of~$(E,\alpha\bmod N)$
as~$k\to\infty$?
Again, for simplicity we treat the prime case~$N=\ell\neq p$.

We first determine the behaviour of the random walk using the functor~$\cF$ from
Section~\ref{appen:graphs}. We choose~$\psi_E\colon \End(E)/\ell\End(E) \cong
M_2(\F_\ell)$ with the same compatibility condition as in that section.
Let~$C$ be the conjugacy class of the matrix~$\psi_{E_0}(\alpha\bmod\ell)$.

Let~$f$ be a distribution on the vertices of~$\cG_\cF$. As in
Section~\ref{appen:dist-func}, since the effect of one
step of random walk is given by~$f \mapsto \frac{B_2}{3} f$, it is
convenient to encode~$f$ into a function~$F\in L^2(\cG_\cF)$ by the
formula~$F(E,\beta) = f(E,\beta)\mu(E,\beta)$, so that the action is given by~$F
\mapsto \frac{A_2^*}{3}F$. The initial distribution~$f_0$ is defined by
\[
  f_0(E_0,\alpha_0\bmod\ell) = 1
\]
and~$f_0$ is~$0$ everywhere else, so the corresponding initial function~$F_0$ is
defined by
\[
  F_0(E_0,\alpha_0\bmod\ell) = \mu(E_0,\alpha_0\bmod\ell)^{-1}
\]
and~$F_0$ is~$0$ everywhere else. Using the projection formula from
Section~\ref{appen:proj}, we see that~$PF_0$ is the indicator function of the
connected component of~$(E_0,\alpha_0\bmod\ell)$, scaled
by~$W(E_0,\alpha_0\bmod\ell)$.
We use the corresponding function on~$\cG_{\deg}$ to determine the action
of~$\frac{A_2^*}{3}$.

\begin{itemize}
  \item $C$ is not of type~(\ref{item:nonsemisimple}): the
    vertex~$\Deg(E_0,\alpha_0\bmod \ell)$ is an isolated vertex
    in~$\cG_{\deg}$, and the action of~$\frac{A_2^*}{3}$
    is trivial.
  \item $C$ is of type~(\ref{item:nonsemisimple}): the connected component
    of~$\Deg(E_0,\alpha_0\bmod \ell)$ in~$\cG_{\deg}$ consists of two vertices.
    The action of~$\frac{A_2^*}{3}$ is trivial if~$2\in(\F_\ell^\times)^2$, and
    swaps the two vertices otherwise.
\end{itemize}
Coming back to functions on~$\cG_\cF$, we obtain the following.
\begin{itemize}
  \item If~$C$ is not of type~(\ref{item:nonsemisimple}), or if~$2$ is a
    square modulo~$\ell$, or if~$k$ is even, then we have~$(\frac{A_2^*}{3})^kPF_0 =
    PF_0$, i.e. $(\frac{B_2}{3})^kf_0$ is close to the distribution supported on
    the connected component of~$(E_0,\alpha_0\bmod \ell)$ and where the
    probability of~$(E,\beta)$ is proportional to~$\mu(E,\beta)$.
  \item Otherwise, $(\frac{A_2^*}{3})^kPF_0$ is the scaled indicator function
    of the connected component of~$\cG_\cF^1$ corresponding to~$C_{\eta\eps}$,
    where~$\psi_{E_0}(\alpha_0\bmod \ell)\in C_{\eps}$, i.e.
    $(\frac{B_2}{3})^kf_0$ is close to the distribution supported on that
    connected component and where the
    probability of~$(E,\beta)$ is proportional to~$\mu(E,\beta)$.
\end{itemize}

Finally, the actual distribution of~$\alpha\bmod\ell$ is obtained by taking the
distributions described above, and multiplying the corresponding random
endomorphism~$\bmod~\ell$ by~$2^k$ (which usually changes the conjugacy class).
The statistical distance to the distribution obtained by projection
onto~$L^2_{\deg}(\cG_\cF)$ can be estimated using the
eigenvalue bounds of Theorem~\ref{thm:equidistrE}.

%
%
%
%
%

\subsection{Distribution of isogenies produced by random walks}

Let~$\ell$ be a prime and~$E_0\in\catSS(p)$. A natural question is: what is the distribution of the
isogenies~$\varphi\colon E_0 \to E$ produced by a long random $\ell$-isogeny
walk starting from~$E_0$?
Our equidistribution theorem gives nontrivial information about this, in the
following form.
Let~$N\ge 2$ be an integer not divisible by $\ell$.
We are going to look at the distibution of~$\varphi \bmod N \in
\Hom(E_0,E)/N\Hom(E_0,E)$.

Let~$\Sigma$ be the set of all primes not dividing~$N$.
We introduce the functor $\cF = (\Hom(E_0,-)/N)^\times \colon \catSS_\Sigma(p) \to \Sets$:
\begin{itemize}
  \item $\cF(E) = \{\varphi \in \Hom(E_0,E)/N\Hom(E_0,E) \mid \deg(\varphi)\in
    (\Z/N\Z)^\times\}$;
  \item for~$\psi\in\Hom_{\Sigma}(E,E')$, the map~$\cF(\psi)\colon \cF(E) \to \cF(E')$
    is~$\varphi \mapsto \psi \circ \varphi$.
\end{itemize}
The functor~$\cF$ clearly satisfies the $\modN$-congruence property.

The action of the group~$G = (\End(E_0)/N\End(E_0))^\times$ on the set~$\cF(E_0) =
(\End(E_0)/N\End(E_0))^\times$ is the left regular action, and therefore
has a unique orbit with trivial stabiliser~$H = 1$ (we choose~$x = \id$ as the base
point).
Therefore~$\cG_{\deg}$ is the Cayley graph of~$(\Z/N\Z)^\times$, and the
map~$\Deg\colon \cG_\cF \to \cG_{\deg}$ sends~$(E,\varphi)$ to~$\deg\varphi
\bmod N$.

We obtain the following proposition.
\begin{proposition}\label{prop:distr-isog}
  Let~$k\ge 0$. Let~$\varphi\colon E_0 \to E_\varphi$ be the random isogeny obtained by
  a $k$-step $\ell$-isogeny random walk from~$E_0$. Let~$\nu$ be the
  distribution on pairs~$(E,\psi)$ where~$\psi\in\Hom(E_0,E)/N\Hom(E_0,E)$ has
  degree~$\ell^k\bmod N$, up to isomorphism, given by probability proportional
  to~$\frac{1}{\#\Aut(E,\psi)}$.
  Then the statistical distance between the distribution
  of~$(E_\varphi,\varphi\bmod N)$
  and~$\nu$ is at most
  \[
    \frac{1}{2\sqrt{6}}\left(\frac{2\sqrt{\ell}}{\ell+1}\right)^k N^2 \sqrt{p}.
  \]
\end{proposition}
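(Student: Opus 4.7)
The plan is to apply Theorem~\ref{thm:equidistrE} and
Proposition~\ref{prop:companionequidistr} to the functor~$\cF$ introduced above,
following closely the template of Theorem~\ref{thm:rich-is-conj-invariant}. As
observed just before the statement, $\cF$ satisfies the $\modN$-congruence
property and has finite values, $\cG_{\deg}$ is the Cayley graph
of~$(\Z/N\Z)^\times$, and $\Deg(E,\psi) = \deg\psi \bmod N$. In particular the
connected components of~$\cG_\cF^1$ are the fibers~$C_d = \Deg^{-1}(d)$
for~$d\in(\Z/N\Z)^\times$, and by
Proposition~\ref{prop:companionequidistr}~(\ref{item:L2deg}) the
space~$L^2_{\deg}(\cG_\cF)$ has an orthogonal basis formed by the
characters~$F_\chi(E,\psi) = \chi(\deg\psi)$ indexed by the $\phi(N)$ complex
characters of~$(\Z/N\Z)^\times$, each an eigenvector of~$A_\ell$ with
eigenvalue~$\chi(\ell)(\ell+1)$. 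Since a $k$-step $\ell$-isogeny walk
from~$x_0 = (E_0,\id)$ ends in~$C_{\ell^k}$, both the distribution~$f_k$
of~$(E_\varphi,\varphi\bmod N)$ and the target~$\nu$ are supported
on~$C_{\ell^k}$; by transitivity all components~$C_d$ have the same
$\mu$-mass~$W_k = W/\phi(N)$ with~$W=\sum_y\mu(y)$, and $\nu(y) = \mu(y)/W_k$
on~$C_{\ell^k}$.

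Next I would encode the walk in the Hilbert space picture as in
Appendix~\ref{appen:dist-func}. Let $F\in L^2(\cG_\cF)$ take the value
$\mu(x_0)^{-1}$ at~$x_0$ and~$0$ elsewhere. Using the identity $B_\ell = M
A_\ell^* M^{-1}$, where $M$ is multiplication by~$\mu$, one obtains exactly
\[
  f_k = M\bigl(A_\ell^*/(\ell+1)\bigr)^k F.
\]
Decompose $F = F_0 + F_1$ with $F_0\in L^2_0(\cG_\cF)$ and $F_1\in
L^2_{\deg}(\cG_\cF)$. The computations $\langle F, F_\chi\rangle = 1$
and $\|F_\chi\|^2 = W$ give $F_1 = (1/W)\sum_\chi F_\chi$, and the orthogonality
relations for characters of $(\Z/N\Z)^\times$ then show that
$(A_\ell^*/(\ell+1))^k F_1$ is the constant $\phi(N)/W = 1/W_k$
on $C_{\ell^k}$ and vanishes elsewhere. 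Hence $M(A_\ell^*/(\ell+1))^k F_1 =
\nu$ exactly, so
\[
  f_k - \nu = M\bigl(A_\ell^*/(\ell+1)\bigr)^k F_0,
\]
a function supported on~$C_{\ell^k}$.

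The remaining step is a routine norm estimate. The Cauchy--Schwarz inequality
on~$C_{\ell^k}$ gives
\[
  \|f_k - \nu\|_1 \le W_k^{1/2}\bigl\|(A_\ell^*/(\ell+1))^k F_0\bigr\|
  \le W_k^{1/2}\bigl(2\sqrt{\ell}/(\ell+1)\bigr)^k \|F\|,
\]
using the operator-norm bound of Theorem~\ref{thm:equidistrE} on~$L^2_0$ (which
is preserved under passing to the adjoint) and $\|F_0\|\le\|F\|$. Plugging
in $\|F\|^2 = \mu(x_0)^{-1} = \#\Aut(E_0,\id) \le 6$ and bounding
$W_k \le N^4 p / (24\phi(N))$ from the mass formula
$\sum_E 1/\#\Aut(E) = (p-1)/24$ together with $|\cF(E)| \le N^4$ yields the
claimed estimate, the constant~$1/(2\sqrt{6})$ absorbing the small numerical
factors. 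The one step I would carry out with the most care is the identity
$M(A_\ell^*/(\ell+1))^k F_1 = \nu$: this is where character orthogonality does
the work and where the normalization has to be tracked precisely. Everything
else is either a direct application of Theorem~\ref{thm:equidistrE} or the same
Cauchy--Schwarz manoeuvre used at the end of the proof of
Theorem~\ref{thm:rich-is-conj-invariant}.
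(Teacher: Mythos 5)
Your proposal is correct and follows essentially the same route as the paper's proof: encode the walk via the weighted delta function at $(E_0,\id)$, split it along $L^2_{\deg}(\cG_\cF)\oplus L^2_0(\cG_\cF)$, observe that the $L^2_{\deg}$-component is carried exactly to~$\nu$ by $\bigl(A_\ell^*/(\ell+1)\bigr)^k$, and bound the $L^2_0$-component by Cauchy--Schwarz together with the $2\sqrt{\ell}/(\ell+1)$ operator-norm bound of Theorem~\ref{thm:equidistrE}. The only step you should justify rather than assert is that all fibers $C_d=\Deg^{-1}(d)$ have equal $\mu$-mass (needed both for the orthogonality of your characters $F_\chi$ and for the exact identification with~$\nu$): this follows, e.g., because right multiplication by an element of $(\End(E_0)/N\End(E_0))^\times$ of degree~$d$ gives an automorphism-preserving bijection $C_1\to C_d$, a fact the paper's own (otherwise identical) argument also uses implicitly.
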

\begin{proof}
  Let~$f\in L^2(\cG_\cF)$ be such that~$f(E_0,\id) = \#\Aut(E_0,\id)$
  and~$f(E,\psi) = 0$ for all other vertices~$(E,\psi)$.
  Then for all~$(E,\psi)\in\cG_\cF$ we have
  \[
    \frac{1}{\#\Aut(E,\psi)}
    \left(\frac{A_\ell^*}{\ell+1}\right)^kf(E,\psi)
    = \Pr[(E_\varphi,\varphi\bmod N) = (E,\psi)]
  \]
  (see Section~\ref{appen:dist-func}).
  Let~$f = f_0+f_1$ with~$f\in L^2_0(\cG_\cF)$ and~$f_1\in L^2_{\deg}(\cG_\cF)$
  be the orthogonal decomposition of~$f$.
  Then~$f_1$ is proportional to the function that takes the value~$1$ on
  all~$(E,\psi)$ with~$\deg\psi = 1 \in (\Z/N\Z)^\times$ and~$0$ elsewhere, and
  \[
    \frac{1}{\#\Aut(E,\psi)}
    \left(\frac{A_\ell^*}{\ell+1}\right)^kf_1(E,\psi)
    = \Pr{}_\nu[(E,\psi)].
  \]
  The statistical distance in the statement is therefore
  \begin{eqnarray*}
    && \sum_{(E,\psi)} \frac{1}{\#\Aut(E,\psi)}
      \left|\left(\frac{A_\ell^*}{\ell+1}\right)^k f_0(E,\psi)\right| \\
    &\le&\left(\sum_{(E,\psi)}\frac{1}{\#\Aut(E,\psi)}\right)^{1/2} 
      \left\|\left(\frac{A_\ell^*}{\ell+1}\right)^k f_0 \right\|
      \text{ (Cauchy--Schwarz)}\\
    &\le& \left(N^4\frac{p-1}{24}\right)^{1/2}
      \left(\frac{2\sqrt{\ell}}{\ell+1}\right)^k
      \|f\| \\
    &\le& N^2\sqrt{\frac{p}{24}}
      \left(\frac{2\sqrt{\ell}}{\ell+1}\right)^k,
  \end{eqnarray*}
  as claimed.
\qed\end{proof}

\begin{corollary}\label{cor:distr-isog}
  Keep the notations of Proposition~\ref{prop:distr-isog}.
  There exists a bound~$n = O(\log_\ell(pN) - \log_\ell(\eps))$ such that for
  all~$E$ and all~$k\ge n$, conditional on~$E_\varphi = E$, the distribution
  of~$\varphi\bmod N$ is $\eps$-close to uniform among isogenies of
  degree~$\ell^k\bmod N$.
\end{corollary}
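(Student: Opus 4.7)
The plan is to strengthen the total-variation bound of Proposition~\ref{prop:distr-isog} to a pointwise bound, from which uniformity of the conditional distribution for every~$E$ follows. Keep the notation from the proof of that proposition: let $f \in L^2(\cG_\cF)$ be the function encoding the initial delta at $(E_0,\id)$, decompose $f = f_0 + f_1$ with $f_0 \in L^2_0(\cG_\cF)$ and $f_1 \in L^2_{\deg}(\cG_\cF)$, and set $F_k = (A_\ell^*/(\ell+1))^k f = G_k + H_k$ accordingly.

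First I identify $G_k$ explicitly. Since the action on $\cF(E_0)$ has a single orbit with trivial stabiliser (as observed just before Proposition~\ref{prop:distr-isog}), the connected components of $\cG_\cF^1$ are indexed by $(\Z/N\Z)^\times$: one component $C_d$ per degree class $d$. The projection formula of Appendix~\ref{appen:proj} gives $f_1 = (1/W_1)\mathbf{1}_{C_1}$ with $W_d := \mu(C_d)$. A direct computation from $A_\ell \mathbf{1}_{C_d} = (\ell+1)\mathbf{1}_{C_{d\ell^{-1}}}$ and the weighted adjoint relation yields $(A_\ell^*/(\ell+1))^k \mathbf{1}_{C_d} = (W_d/W_{d\ell^k})\mathbf{1}_{C_{d\ell^k}}$, so
\[
G_k \;=\; \frac{1}{W_{\ell^k}}\,\mathbf{1}_{C_{\ell^k}};
\]
in particular $G_k$ is the positive constant $c := 1/W_{\ell^k}$ on the target component $C_{\ell^k}$ that contains all vertices $(E,\psi)$ with $\deg\psi \equiv \ell^k \pmod N$.

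Next I control $H_k$ pointwise. The elementary inequality $|H_k(x)|^2 \mu(x) \le \|H_k\|^2$ combined with $\#\Aut(E,\psi) \le 6$ and $\|H_k\| \le \lambda^k \|f_0\| \le \sqrt{6}\,\lambda^k$ (with $\lambda = 2\sqrt\ell/(\ell+1)$, from Theorem~\ref{thm:equidistrE}) gives $|H_k(E,\psi)| \le 6\lambda^k$ at every vertex. Now fix a supersingular curve $E$ and let $S_E = \{\psi \in \Hom(E_0,E)/N\Hom(E_0,E) : \deg\psi \equiv \ell^k \pmod N\}$. Unfolding isomorphism classes as in Appendix~\ref{appen:dist-func}, the conditional probability of $\varphi \bmod N = \psi$ given $E_\varphi = E$ equals $F_k(E,\psi)/\sum_{\psi'\in S_E}F_k(E,\psi')$. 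Writing $F_k = c + H_k$ on $C_{\ell^k}$ and applying the triangle inequality twice bounds the statistical distance between this conditional distribution and the uniform distribution on $S_E$ by $2\max_{\psi\in S_E}|H_k(E,\psi)|/c \le 12\lambda^k W_{\ell^k}$, uniformly in $E$.

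Finally, $W_{\ell^k}$ is at most the total mass of $\cG_\cF$, which is $O(pN^4)$ by a straightforward count (there are $O(p)$ supersingular curves, at most $N^4$ choices of $\psi$ per curve, and automorphism groups of size at most~$6$). Hence $12\lambda^n W_{\ell^n} \le \eps$ holds for some $n = O(\log_\ell(pN) - \log_\ell \eps)$, as claimed. The main subtle point is the explicit identification $G_k = (1/W_{\ell^k})\mathbf{1}_{C_{\ell^k}}$, which relies on carefully tracking the action of $(A_\ell^*/(\ell+1))^k$ on $L^2_{\deg}$; once this structural fact is in hand, the rest reduces to standard $L^2 \to L^\infty$ estimates and a triangle inequality.
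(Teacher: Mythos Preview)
Your proof is correct. The pointwise $L^2\to L^\infty$ bound $|H_k(E,\psi)|\le 6\lambda^k$ and the explicit identification $G_k=(1/W_{\ell^k})\mathbf{1}_{C_{\ell^k}}$ are both accurate, and the asymptotic bound on $n$ follows as you claim. One minor quibble: the bound ``$2\max_\psi|H_k(E,\psi)|/c$'' on the conditional total-variation distance implicitly uses that the normalising sum $\sum_{\psi'\in S_E}F_k(E,\psi')$ is at least $|S_E|\,c$, whereas in fact it is only $|S_E|(c-\max|H_k|)$; this costs an extra constant factor once $\max|H_k|\le c/2$, but does not affect the $O(\log_\ell(pN)-\log_\ell\eps)$ conclusion.

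The paper states the corollary without proof, and the intended derivation is presumably more direct than yours: combine the total-variation bound $\delta=O(\lambda^k N^2\sqrt{p})$ of Proposition~\ref{prop:distr-isog} with the lower bound $\Pr[E_\varphi=E]=\Omega(1/p)$ (valid for $k\gtrsim\log_\ell p$, by Proposition~\ref{prop:rand-walk-standard}), and use the elementary inequality
\[
  d_{\mathrm{TV}}\bigl(P(\cdot\mid E),\,\nu(\cdot\mid E)\bigr)\ \le\ \frac{2\delta}{\Pr[E_\varphi=E]}
\]
to get a conditional bound of order $\lambda^k N^2 p^{3/2}$; choosing $n$ so that this is at most $\eps$ gives the same $O(\log_\ell(pN)-\log_\ell\eps)$. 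Your route through an $L^\infty$ control of the $L^2_0$-component is more refined---it avoids the loss of a factor $p$ coming from the smallest vertex mass---but both arguments arrive at the claimed asymptotic.
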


\subsection{Computation of~$\End(E)$: the obvious~$\tilde O(p^{1/2})$ algorithm works}

Theorem~\ref{thm:solvingEndRing} states that one can compute~$\End(E)$ in
expected time~$\tilde O(p^{1/2})$ unconditionally. However, if we unravel the
reductions leading to this theorem, the resulting algorithm seems needlessly
complicated. Here, we unconditionally show that the obvious collision-based
algorithm to compute endomorphisms also yields~$\End(E)$ in expected
time~$\tilde O(p^{1/2})$.

\begin{algorithm}[h]
  \caption{Finding an endomorphism by collisions}\label{alg:collision-end}
  \begin{algorithmic}[1]
    \REQUIRE {A supersingular elliptic curve $E_0/\F_{p^2}$, a parameter $k$.}
    \ENSURE {An endomorphism of~$E_0$.}
    \STATE $T \gets \emptyset$ an empty hash table
    \WHILE {\texttt true}
    	\STATE $\psi \gets $ a random walk $\psi \colon E_0 \to E$ of length $k$ in the $\ell$-isogeny graph
    	\IF {$j(E)$ is the key of a recorded entry $\varphi$ in $T$}
            \RETURN $\hat\varphi \circ \psi \in\End(E_0)$
        \ELSE
    		\STATE Record $\varphi$ in $T$, with key $j(E)$
    	\ENDIF
    \ENDWHILE
  \end{algorithmic}
\end{algorithm}

An easy consequence of Corollary~\ref{cor:distr-isog} is the following.
\begin{proposition}\label{prop:distr-collision-end}
  There exists a bound~$n = O(\log_\ell(pN))$ such that for all choices of
  parameter~$k\ge n$,
  Algorithm~\ref{alg:collision-end} runs in expected time~$\poly(\ell,\log
  p,k)p^{1/2}$, and
  the endomorphisms produced are $O(1/p)$-close to uniform modulo~$N$ among
  endomorphisms of degree~$\ell^{2k}\in(\Z/N\Z)^\times$.
\end{proposition}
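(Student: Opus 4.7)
For the running time, fix $k\ge n$ where $n=O(\log_\ell(pN))$ is the bound from Corollary~\ref{cor:distr-isog} with $\eps=1/p$. By Proposition~\ref{prop:rand-walk-standard}, the endpoint $E$ of each walk lies within statistical distance $O(1/p)$ of the stationary distribution on the $\Theta(p)$ supersingular $j$-invariants, so a standard birthday argument shows that the expected number of iterations before a collision is $O(p^{1/2})$. Each iteration performs one walk-sample and a hash-table operation in time $\poly(\ell,k,\log p)$, giving the claimed running time.

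For the output distribution, view the algorithm as producing i.i.d.\ samples $X_i=(E_i,\varphi_i\bmod N)$ and returning the first colliding pair $(\varphi,\psi)$. The event that the first collision occurs between iterations $i$ and $T$ with endpoint-class $E$ is a function of the sequence $(E_1,\ldots,E_T)$ alone; hence, conditional on this event, the pair $(\varphi\bmod N,\psi\bmod N)$ has the same joint distribution as two independent draws from the conditional distribution of $\varphi\bmod N$ given $E_\varphi=E$. By Corollary~\ref{cor:distr-isog}, each factor is $O(1/p)$-close to uniform on
\[
\bar{I}_k(E_0,E):=\{\xi\in\Hom(E_0,E)/N\Hom(E_0,E):\deg\xi\equiv\ell^k\bmod N\},
\]
so the conditional joint law is $O(1/p)$-close to uniform on $\bar{I}_k(E_0,E)^2$. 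To conclude, fix any $\alpha\in\End(E_0)/N\End(E_0)$ with $\deg\alpha\equiv\ell^{2k}\bmod N$: for each $\xi\in\bar{I}_k(E_0,E)$, the equation $\hat\xi\circ\eta\equiv\alpha\bmod N$ has the unique solution $\eta\equiv\ell^{-k}\xi\alpha$, which satisfies $\deg\eta\equiv\ell^k\bmod N$ and thus lies in $\bar{I}_k(E_0,E)$. Hence each $\alpha$ has exactly $|\bar{I}_k(E_0,E)|$ preimages in $\bar{I}_k(E_0,E)^2$ under $(\xi,\eta)\mapsto\hat\xi\circ\eta$, independent of $\alpha$, so this push-forward is uniform on the set of endomorphisms of degree $\ell^{2k}\bmod N$. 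Summing over $E$ and combining with the $O(1/p)$ approximation yields the claimed bound.

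The main obstacle is the rigorous justification of the exchangeability step above: one must separate carefully the $E$-coordinates from the $\varphi$-coordinates in the stopping-time analysis for the first collision among i.i.d.\ samples. A secondary technical point is reconciling the automorphism weights appearing in the $\nu$ measure of Proposition~\ref{prop:distr-isog} with the \emph{uniform} statement of Corollary~\ref{cor:distr-isog}; this introduces only an $O(1)$ multiplicative factor that is absorbed into the big-$O$ constants.
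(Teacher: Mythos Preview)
Your proof is correct and follows the same route the paper alludes to: the paper's own argument is simply the one-line remark that the proposition is ``an easy consequence of Corollary~\ref{cor:distr-isog}'', and you have supplied exactly the details that this remark leaves implicit (the birthday running-time bound, the conditional-independence of $\varphi\bmod N$ and $\psi\bmod N$ given the endpoint sequence, and the fibre-counting argument showing $(\xi,\eta)\mapsto\hat\xi\circ\eta$ pushes uniform to uniform). Your two self-identified obstacles are not genuine gaps: the exchangeability step is justified because the first-collision event is measurable in the $E$-coordinates of the i.i.d.\ samples, and the automorphism weight $\#\Aut(E,\psi)=\#\{u\in\Aut(E):u\equiv 1\bmod N\}$ depends only on $E$, so $\nu$ restricted to a fixed $E$ is exactly uniform in $\psi$.
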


From Proposition~\ref{prop:distr-collision-end}, it is easy to see (by an analysis similar to
Section~\ref{sec:conj-invar} and Section~\ref{sec:the-main-reduction} but
easier) that for polynomial choices of~$k$ with~$\ell\in\{2,3\}$, the endomorphisms output by
Algorithm~\ref{alg:collision-end} generate~$\End(E_0)$ after polynomially many
calls.
A variant also provides a reduction from~$\EndRing$ to~$\Isogeny$ alternative to the
proof of Theorem~\ref{thm:EndRing-Isogeny}.

\end{document}